\newcommand{\ww}{\underline{w}} 
\newcommand{\x}{\underline{x}}
\newcommand{\y}{\underline{y}} \newcommand{\z}{\underline{z}}
\newcommand{\vc}{\underline{v}}
\newcommand{\xunstable}{x_{\text{u}}(\epsilon)}
\newcommand{\xunstab}{x_{\text{u}}}
\newcommand{\xstable}{x_{\text{s}}(\epsilon)}
\newcommand{\xstab}{x_{\text{s}}}
\newcommand{\xavg}{\bar{x}} 
\definecolor{TODO}{rgb}{0.6,0.6,0.6} 
\definecolor{TOCHECK}{rgb}{0.8,0.8,0.8} 
\newtheorem{theorem}{Theorem}
\newcommand{\btheo}{\begin{theorem}}
\newcommand{\etheo}{\end{theorem}}
\newcommand{\bproof}{\begin{proof}}
\newcommand{\eproof}{\end{proof}}
\newtheorem{definition}[theorem]{Definition}
\newcommand{\bdefi}{\begin{definition}}
\newcommand{\edefi}{\end{definition}}
\newtheorem{fact}[theorem]{Fact}
\newcommand{\bprop}{\begin{fact}}
\newcommand{\eprop}{\end{fact}}
\newtheorem{corollary}[theorem]{Corollary}
\newcommand{\bcor}{\begin{corollary}}
\newcommand{\ecor}{\end{corollary}}
\newtheorem{example}[theorem]{Example}
\newcommand{\bex}{\begin{example}}
\newcommand{\eex}{\end{example}}
\newtheorem{lemma}[theorem]{Lemma}
\newcommand{\blemma}{\begin{lemma}}
\newcommand{\elemma}{\end{lemma}}
\newtheorem{remark}[theorem]{Remark}
\newcommand{\bremark}{\begin{remark}}
\newcommand{\eremark}{\end{remark}}
\newtheorem{conj}[theorem]{Conjecture}
\newcommand{\bconj}{\begin{conj}}
\newcommand{\econj}{\end{conj}}
\newcommand{\naturals}{\ensuremath{\mathbb{N}}}
\newcommand{\integers}{\ensuremath{\mathbb{Z}}}
\newcommand{\prob}{\ensuremath{\mathbb{P}}}
\def\0{{\tt 0}} 
\def\1{{\tt 1}} 
\def\?{{\tt *}} 
\DeclareMathOperator{\w}{w}    
\newcommand{\BP}{\ensuremath{\text{BP}}} 
\newcommand{\MAPsmall}{\ensuremath{\text{\tiny MAP}}} 
\newcommand{\BPsmall}{\ensuremath{\text{\tiny BP}}} 
\newcommand{\EBPsmall}{\ensuremath{\text{\tiny EBP}}} 
\newcommand{\qed}{{\hfill \footnotesize $\blacksquare$}}
\renewcommand{\mid}{\,|\,}
\newcommand{\dr}{{\mathtt r}}
\newcommand{\dl}{{\mathtt l}}
\newcommand{\dlh}{\hat{\mathtt l}}
\newcommand{\exitf}{h}
\begin{document} 
\title{
Threshold Saturation via Spatial Coupling:
Why Convolutional LDPC Ensembles Perform so well over the BEC}
\author{\authorblockN{Shrinivas
Kudekar\authorrefmark{1}, Tom Richardson\authorrefmark{2} and R{\"u}diger
Urbanke\authorrefmark{1} \\ } \authorblockA{\authorrefmark{1}School of
Computer and Communication Sciences\\ EPFL, Lausanne, Switzerland\\
Email: \{shrinivas.kudekar, ruediger.urbanke\}@epfl.ch}\\
\authorblockA{\authorrefmark{2} Qualcomm, USA\\ Email: tjr@qualcomm.com} }

\maketitle
\begin{abstract}
Convolutional LDPC ensembles, introduced by Felstr{\"{o}}m and Zigangirov,
have excellent thresholds and these thresholds are rapidly increasing 
functions of the average degree. Several variations on the basic theme
have been proposed to date, all of which share the good performance
characteristics of convolutional LDPC ensembles.

We describe the fundamental mechanism which explains why
``convolutional-like'' or ``spatially coupled'' codes perform so well.
In essence, the spatial coupling of the individual code structure has the
effect of increasing the belief-propagation threshold of the new ensemble
to its maximum possible value, namely the maximum-a-posteriori
threshold of the underlying ensemble.  For this reason we call this
phenomenon ``threshold saturation''.

This gives an entirely new way of approaching capacity.  One
significant advantage of such a construction is that one can create
capacity-approaching ensembles with an error correcting radius which is
increasing in the blocklength. Our proof makes use of the area theorem
of the belief-propagation EXIT curve and the connection between the
maximum-a-posteriori and belief-propagation threshold recently pointed
out by M{\'e}asson, Montanari, Richardson, and Urbanke.

Although we prove the connection between the maximum-a-posteriori and
the belief-propagation threshold only for a very specific ensemble and
only for the binary erasure channel, empirically a threshold saturation
phenomenon occurs for a wide class of ensembles and channels. More
generally, we conjecture that for a large range of graphical systems a
similar saturation of the ``dynamical'' threshold occurs once individual
components are coupled sufficiently strongly.  This might give rise
to improved algorithms as well as to new techniques for analysis.
\end{abstract}


\section{Introduction}
We consider the design of capacity-approaching codes based on the
connection between the belief-propagation (BP) and maximum-a-posteriori
(MAP) threshold of sparse graph codes. Recall that the BP threshold is
the threshold of the ``locally optimum'' BP message-passing algorithm. As
such it has low complexity. The MAP threshold, on the other hand, is the
threshold of the ``globally optimum'' decoder. No decoder can do better,
but the complexity of the MAP decoder is in general high. 
The threshold itself is the unique channel parameter so that for
channels with lower (better) parameter decoding succeeds with high probability
(for large instances) whereas for channels with higher (worse) parameters
decoding fails with high probability.
Surprisingly, for sparse graph codes there is a connection between
these two thresholds, see \cite{MMRU04,MMU08}.\footnote{There are some trivial
instances in which the two thresholds coincide.  This is e.g. the case
for so-called ``cycle ensembles'' or, more generally, for irregular
LDPC ensembles that have a large fraction of degree-two variable nodes.
In these cases the reason for this agreement is that for both decoders
the performance is dominated by small structures in the graph.  But for
general ensembles these two thresholds are distinct and, indeed, they
can differ significantly.}

We discuss a fundamental mechanism which ensures that these two
thresholds coincide (or at least are very close).  We call this phenomenon
``threshold saturation via spatial coupling.'' A prime example where this
mechanism is at work are {\em convolutional low-density parity-check}
(LDPC) ensembles.

It was Tanner who introduced the method of ``unwrapping'' a cyclic block
code into a convolutional structure \cite{Tan81b,Tan87}.  The first {\em
low-density} convolutional ensembles were introduced by Felstr\"{o}m and
Zigangirov \cite{FeZ99}.  Convolutional LDPC ensembles are constructed
by {\em coupling} several standard $(\dl, \dr)$-regular LDPC ensembles
together in a chain. Perhaps surprisingly, due to the coupling, and
assuming that the chain is finite and properly terminated, the threshold
of the resulting ensemble is considerably improved. Indeed, if we start
with a $(3, 6)$-regular ensemble, then on the binary erasure channel (BEC)
the threshold is improved from $\epsilon^{\BPsmall}(\dl=3, \dr=6) \approx
0.4294$ to roughly $0.4881$ (the capacity for this case is $\frac12$). The
latter number is the MAP threshold $\epsilon^{\MAPsmall}(\dl, \dr)$ of the
underlying $(3, 6)$-regular ensemble.  This opens up an entirely new way
of constructing capacity-approaching ensembles.  It is a folk theorem that
for standard constructions improvements in the BP threshold go hand in
hand with increases in the error floor.  More precisely, a large fraction
of degree-two variable nodes is typically needed in order to get large
thresholds under BP decoding. Unfortunately, the higher the fraction of
degree-two variable nodes, the more low-weight codewords (small cycles,
small stopping sets, ...) appear.  Under MAP decoding on the other
hand these two quantities are positively correlated. To be concrete,
if we consider the sequence of $(\dl, 2\dl)$-regular ensembles of rate
one-half, by increasing $\dl$ we increase both the MAP threshold as well
as the typical minimum distance.  It is therefore possible to construct
ensembles that have large MAP thresholds {\em and} low error floors.

The potential of convolutional LDPC codes has long been recognized.
Our contribution lies therefore not in the introduction of a new coding
scheme, but in clarifying the basic mechanism that make convolutional-like
ensembles perform so well.

There is a considerable literature on convolutional-like LDPC ensembles.
Variations on the constructions as well as some analysis can be found in
Engdahl and Zigangirov \cite{EnZ99}, Engdahl, Lentmaier, and Zigangirov
\cite{ELZ99}, Lentmaier, Truhachev, and Zigangirov \cite{LTZ01}, as well
as Tanner, D. Sridhara, A. Sridharan, Fuja, and Costello \cite{TSSFC04}.
In \cite{SLCZ04, LSZC10}, Sridharan, Lentmaier, Costello and Zigangirov
consider density evolution (DE) for convolutional LDPC ensembles and
determine thresholds for the BEC. The equivalent observations for general
channels were reported by Lentmaier, Sridharan, Zigangirov and Costello
in \cite{LSZC05, LSZC10}.  The preceding two sets of works are perhaps
the most pertinent to our setup. By considering the resulting thresholds
and comparing them to the thresholds of the underlying ensembles under
MAP decoding (see e.g. \cite{RiU08}) it becomes quickly apparent that
an interesting physical effect must be at work.  Indeed, in a recent
paper \cite{LeF10}, Lentmaier and Fettweis followed this route and
independently formulated the equality of the BP threshold of convolutional
LDPC ensembles and the MAP threshold of the underlying ensemble as a
conjecture. They attribute this numerical observation to G. Liva.

A representation of convolutional LDPC ensembles in terms of
a protograph was introduced by Mitchell, Pusane, Zigangirov and
Costello \cite{MPZC08}. The corresponding representation for terminated
convolutional LDPC ensembles was introduced by Lentmaier, Fettweis,
Zigangirov and Costello \cite{LFZC09}. A pseudo-codeword analysis of
convolutional LDPC codes was performed by Smarandache, Pusane, Vontobel,
and Costello in \cite{SPVC06,SPVC09}. In \cite{PISWC10}, Papaleo, Iyengar,
Siegel, Wolf, and Corazza consider windowed decoding of convolutional
LDPC codes on the BEC to study the trade-off between the decoding latency
and the code performance.

In the sequel we will assume that the reader is familiar with basic
notions of sparse graph codes and message-passing decoding, and in
particular with the asymptotic analysis of LDPC ensembles for
transmission over the binary erasure channel as it was accomplished
in \cite{LMSSS97}. We summarized the most important facts which are
needed for our proof in Section~\ref{sec:standard}, but this summary
is not meant to be a gentle introduction to the topic.  Our notation
follows for the most part the one in \cite{RiU08}.

\section{Convolutional-Like LDPC Ensembles}
The principle that underlies the good performance of convolutional-like
LDPC ensembles is very broad and there are many degrees of freedom
in constructing such ensembles. In the sequel we introduce two
basic variants. The $(\dl, \dr, L)$-ensemble is very close to the
ensemble discussed in \cite{LFZC09}. Experimentally it has a very good
performance. We conjecture that it is capable of achieving capacity.

We also introduce the ensemble $(\dl, \dr, L, w)$.  Experimentally it
shows a worse trade-off between rate, threshold, and blocklength.  But it
is easier to analyze and we will show that it is capacity achieving.
One can think of $w$ as a ``smoothing parameter'' and we investigate
the behavior of this ensemble when $w$ tends to infinity.

\subsection{The $(\dl, \dr, L)$ Ensemble}
To start, consider a protograph of a standard $(3, 6)$-regular ensemble 
(see \cite{T03, DDJ06} for the definition of protographs).
It is shown in Figure~\ref{fig:36protograph}. There are two variable
nodes and there is one check node.  Let $M$ denote the number of variable
nodes at each position.  For our example, $M=100$ means that we have $50$
copies of the protograph so that we have $100$ variable nodes at each
position. For all future discussions we will consider the regime
where $M$ tends to infinity.  
\begin{figure}[htp] \begin{centering}
\input{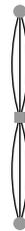} \caption{Protograph of a standard $(3, 6)$-regular
ensemble.} \label{fig:36protograph} \end{centering} 
\end{figure}

Next, consider a collection of $(2 L+1)$ such protographs as shown in
Figure~\ref{fig:36protographchain}.
\begin{figure}[htp]
\begin{centering}
\input{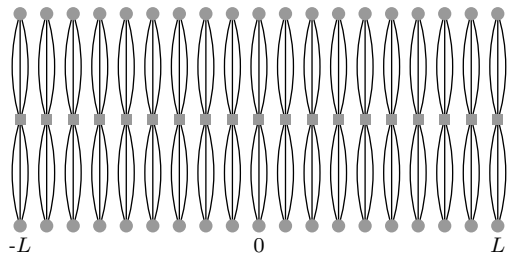}
\caption{
A chain of $(2L+1)$ protographs of the standard $(3, 6)$-regular ensembles
for $L=9$.  These protographs do not interact.}
\label{fig:36protographchain}
\end{centering}
\end{figure}
These protographs are non-interacting and so each component behaves
just like a standard $(3, 6)$-regular component. In particular, the
belief-propagation (BP) threshold of each protograph is just the
standard threshold, call it $\epsilon^{\BPsmall}(\dl=3, \dr=6)$
(see Lemma~\ref{lem:standardthresholds} for an analytic characterization
of this threshold).  Slightly more generally: start with an $(\dl,
\dr=k \dl)$-regular ensemble where $\dl$ is odd so that $\dlh=(\dl-1)/2
\in \naturals$.

An interesting phenomenon occurs if we couple these
components.  To achieve this coupling, connect each protograph to $\dlh$\footnote{If we think of this as a convolutional code, then $2\dlh$ is the {\em syndrome former memory} of the code.}
protographs ``to the left'' and to $\dlh$ protographs ``to the right.''
This is  shown in Figure~\ref{fig:chain} for the two cases $(\dl=3, \dr=6)$ and $(\dl=7, \dr=14)$.
In this figure, $\dlh$ extra check nodes are added on each
side to connect the ``overhanging'' edges at the boundary.
\begin{figure}[htp]
\begin{centering}
\input{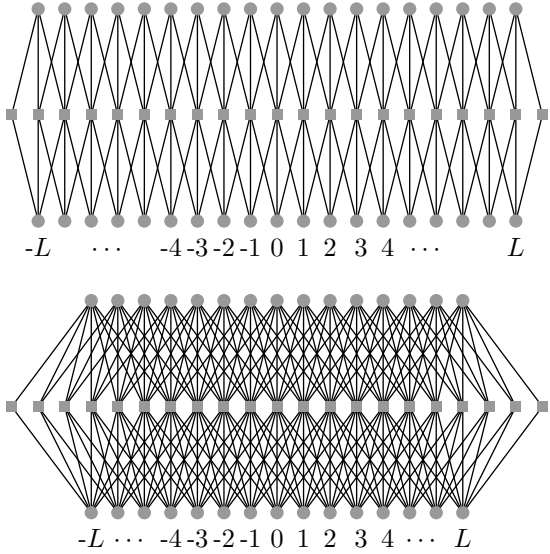}
\caption{Two coupled chains of protographs with $L=9$ and $(\dl=3, \dr=6)$ (top) 
and $L=7$ and $(\dl=7, \dr=14)$ (bottom), 
respectively. \label{fig:chain}}
\end{centering}
\end{figure}

There are two main effects resulting from this coupling:
\begin{itemize}
\item[(i)] {\em Rate Reduction:} 
Recall that the design rate of the underlying standard $(\dl, \dr=k
\dl)$-regular ensemble is $1-\frac{\dl}{\dr}=\frac{k-1}{k}$.  Let us
determine the design rate of the corresponding $(\dl, \dr=k \dl, L)$
ensemble. By design rate we mean here the rate that we get if we assume
that every involved check node imposes a linearly independent constraint.

The variable nodes are indexed from $-L$ to $L$ so that in total
there are $(2 L+1)M$ variable nodes.  The check nodes are indexed from
$-(L+\dlh)$ to $(L+\dlh)$, so that in total there are $(2 (L+\dlh)+1)M/k$
check nodes.  We see that,
due to boundary effects, the design rate is reduced to
\begin{align*}
R(\dl, \dr=k\dl, L) 
& = \frac{(2L+1)-(2(L+\dlh)+1)/k}{2L+1} \\
& = \frac{k-1}{k} - \frac{2 \dlh}{k(2L+1)},
\end{align*} 
where the first term on the right represents the design rate of the underlying standard $(\dl, \dr=k
\dl)$-regular ensemble and the second term represents the rate loss.
As we see, this rate reduction effect vanishes at a speed $1/L$. 
\item[(ii)] {\em Threshold Increase:} The threshold changes dramatically
from $\epsilon^{\BPsmall}(\dl, \dr)$
to something close to $\epsilon^{\MAPsmall}(\dl, \dr)$ (the MAP 
threshold of the underlying standard $(\dl, \dr)$-regular ensemble;
see Lemma~\ref{lem:standardthresholds}).
This phenomenon (which we call ``threshold saturation'') is much less intuitive and it is the
aim of this paper to explain why this happens.  
\end{itemize}

So far we have considered $(\dl, \dr=k \dl)$-regular
ensembles.  Let us now give a general definition of the $(\dl, \dr,
L)$-ensemble which works for all parameters $(\dl, \dr)$ so that $\dl$
is odd.  Rather than starting from a protograph, place variable nodes
at positions $[-L, L]$. At each position there are $M$ such variable
nodes. Place $\frac{\dl}{\dr} M$ check nodes at each position $[-L-\dlh,
L+\dlh]$.  Connect exactly one of the $\dl$ edges of each variable node
at position $i$ to a check node at position $i-\dlh, \dots, i+\dlh$.

Note that at each position $i \in [-L+\dlh, L-\dlh]$, there are exactly
$M \frac{\dl}{\dr} \dr = M \dl$ check node sockets\footnote{
{\em Sockets} are connection points where edges can be attached
to a node. E.g., if a node has degree $3$ then we imagine that it
has $3$ sockets. This terminology arises from the so-called {\em configuration}
model of LDPC ensembles. In this model we imagine that we label all
check-node sockets and all variable-node sockets with the set of integers
from one to the cardinality of the sockets. To construct then a particular
element of the ensemble we pick a permutation on this set uniformly at random 
from the set of all permutations and connect variable-node sockets to check-node
sockets according to this permutation.}.  Exactly $M$ of those come
from variable nodes at each position $i-\dlh, \dots, i+\dlh$. For
check nodes at the boundary the number of sockets is decreased
linearly according to their position.  The probability distribution
of the ensemble is defined by choosing a random permutation on the
set of all edges for each check node position.

The next lemma, whose proof can be found in Appendix~\ref{app:lrLweight},
asserts that the minimum stopping set distance of most codes in this ensemble is at least a fixed
fraction of $M$. 
With respect to the technique used in the proof we follow the lead of 
\cite{MPZC08,SPVC09} and \cite{DDJ06,SPVC06} which consider distance and pseudo-distance analysis of
convolutional LDPC ensembles, respectively.
\blemma[Stopping Set Distance of $(\dl, \dr, L)$-Ensemble]\label{lem:lrLweight}
Consider the $(\dl, \dr, L)$-ensemble with $\dl=2 \dlh +1$, $\dlh \geq
1$, and $\dr \geq \dl$. Define
\begin{align*}
p(x) & = \sum_{i \neq 1} \binom{\dr}{i} x^i, \;\; a(x)  = (\sum_{i \neq 1} \binom{\dr}{i} i x^i)/(\sum_{i \neq 1} \binom{\dr}{i} x^i), \\
b(x) & =\!-\!(\dl\!-\!1) h_2 (a(x)\!/\!\dr) \!+\! \frac{\dl}{\dr}\! \log_2(p(x))\!-\! a(x) \frac{\dl}{\dr}\! \log_2(x), \\
\omega(x) & = a(x)/\dr, \;\; h_2(x)= -x\log_2(x)-(1-x) \log_2(1-x).
\end{align*}
Let $\hat{x}$ denote the unique strictly positive solution of the equation $b(x)=0$ and
let $\hat{\omega}(\dl, \dr)=\omega(\hat{x})$. Then, for any $\delta>0$,
\begin{align*}
\lim_{M \rightarrow \infty} \prob\{d_{\text{ss}}({\mathcal C})/M < 
(1-\delta) \dl \hat{\omega}(\dl, \dr) \}  = 0,
\end{align*}
where $d_{\text{ss}}({\mathcal C})$ denotes the minimum stopping set
distance of the code ${\mathcal C}$.
\elemma
{\em Discussion:} 
The quantity $\hat{\omega}(\dl, \dr)$ is the relative weight (normalized
to the blocklength) at which the exponent of the expected stopping set
distribution of the underlying standard $(\dl, \dr)$-regular ensemble
becomes positive. It is perhaps not too surprising that the same quantity
also appears in our context.  The lemma asserts that the minimum stopping
set distance grows linearly in $M$. But the stated bound 
does {\em not} scale with $L$.  We leave it as an interesting open problem to determine 
whether this is due to the looseness of our bound or whether our bound indeed
reflects the correct behavior.

\begin{example}[$(\dl=3, \dr=6, L)$]
An explicit calculation shows that $\hat{x} \approx 0.058$ and
$3 \hat{\omega}(3, 6) \approx  0.056$. Let $n=M (2 L+1)$ be the
blocklength.  If we assume that $2L+1=M^\alpha$, $\alpha \in (0, 1)$,
then $M=n^{\frac{1}{1+\alpha}}$. Lemma~\ref{lem:lrLweight} asserts that
the minimum stopping set distance grows in the blocklength at least as
$0.056 n^{\frac{1}{1+\alpha}}$.  \end{example}

\subsection{The $(\dl, \dr, L, w)$ Ensemble}
In order to simplify the analysis we modify the ensemble $(\dl, \dr, L)$
by adding a randomization of the edge connections.  For the remainder
of this paper we always assume that $\dr \geq \dl$, so that the ensemble
has a non-trivial design rate.

We assume that the variable nodes are at positions $[-L, L]$, $L \in
\naturals$. At each position there are $M$ variable nodes, $M \in
\naturals$. Conceptually we think of the check nodes to be located at
all integer positions from $[- \infty, \infty]$.  Only some of these
positions actually interact with the variable nodes.  At each position
there are $\frac{\dl}{\dr} M$ check nodes. It remains to describe how the
connections are chosen.

Rather than assuming that a variable at position $i$ has exactly one
connection to a check node at position $[i-\dlh, \dots, i+\dlh]$,
we assume that each of the $\dl$ connections of a variable node at
position $i$ is uniformly and independently chosen from the range $[i,
\dots, i+w-1]$, where $w$ is a ``smoothing'' parameter. In the same way,
we assume that each of the $\dr$ connections of a check node at position
$i$ is independently chosen from the range $[i-w+1, \dots, i]$.  We no
longer require that $\dl$ is odd.

More precisely, the ensemble is defined as follows. Consider a variable
node at position $i$. The variable node has $\dl$
outgoing edges.  A {\em type} $t$ is a $w$-tuple of non-negative integers,
$t=(t_0, t_1, \dots, t_{w-1})$, so that $\sum_{j=0}^{w-1} t_j=\dl$. The
operational meaning of $t$ is that the variable node has $t_j$ edges which
connect to a check node at position $i+j$. There are
$\binom{\dl+w-1}{w-1}$ types.  Assume that for each variable we
order its edges in an arbitrary but fixed order.  A {\em constellation}
$c$ is an $\dl$-tuple, $c=(c_1, \dots, c_{\dl})$ with elements in $[0,
w-1]$. Its operational significance is that if a variable node at
position $i$ has constellation $c$ then its $k$-th edge is
connected to a check node at position $i+c_k$.  Let $\tau(c)$ denote the
type of a constellation. Since we want the position of each edge
to be chosen independently we impose a uniform distribution on the
set of all constellations.  This imposes the following distribution on
the set of all types. We assign the probability 
\begin{align*} p(t) =
\frac{|\{c: \tau(c)=t\}|}{w^{\dl}}.  
\end{align*} 
Pick $M$ so that $M p(t)$ is a natural number for all
types $t$. For each position $i$ pick $M p(t)$ variables which have their
edges assigned according to type $t$. Further, use a random permutation
for each variable, uniformly chosen from the set of all permutations on
$\dl$ letters, to map a type to a constellation.

Under this assignment, and ignoring boundary effects, for each check position $i$, the number
of edges that come from variables at position $i-j$, $j \in [0,
w-1]$, is $M \frac{\dl}{w}$. In other words, it is exactly a fraction
$\frac{1}{w}$ of the total number $M \dl$ of sockets at position $i$. 
At the check nodes, distribute these edges according to a permutation
chosen uniformly at random from the set of all permutations on $M \dl$ letters,
to the $M \frac{\dl}{\dr}$ check nodes at this position. It is then not
very difficult to see that, under this distribution, for each check node
each edge is roughly independently chosen to be connected to one of its
nearest $w$ ``left'' neighbors. Here, ``roughly independent'' means that
the corresponding probability deviates at most by a term of order $1/M$
from the desired distribution.  As discussed beforehand, we will always
consider the limit in which $M$ first tends to infinity and then the
number of iterations tends to infinity. Therefore, for any fixed number
of rounds of DE the probability model is exactly the independent model
described above.

\begin{lemma}[Design Rate]\label{lem:designrate}
The design rate of the ensemble $(\dl, \dr, L, w)$, with $w \leq 2 L$,
is given by
\begin{align*}
R(\dl, \dr, L, w) & = 
(1-\frac{\dl}{\dr}) - \frac{\dl}{\dr} \frac{w+1-2\sum_{i=0}^{w} 
\bigl(\frac{i}{w}\bigr)^{\dr}}{2 L+1}.
\end{align*}
\end{lemma}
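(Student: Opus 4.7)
The plan is to compute the design rate in the standard way, $R=1-(\text{expected number of effective check nodes})/(\text{number of variable nodes})$, in the $M\to\infty$ regime assumed throughout the paper, where by ``effective'' I mean carrying at least one edge; degree-zero check nodes impose no constraint and must be excluded. The variable count is immediate, $n=M(2L+1)$. On the check side, although $M\dl/\dr$ check nodes are conceptually placed at every integer position, only those at $j$ for which $[j-w+1,j]\cap[-L,L]$ is nonempty can ever receive an edge, so the support of ``live'' positions is $j\in[-L,L+w-1]$. Among these, the interior range $j\in[-L+w-1,L]$ contains $2L-w+2$ positions where every socket is filled (so all $M\dl/\dr$ checks are effective), while the two symmetric boundary strips $[-L,-L+w-2]$ and $[L+1,L+w-1]$ (each of cardinality $w-1$) contain positions where a nontrivial fraction of check nodes turn out to have degree zero and must be subtracted.

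Let $n_j:=|[j-w+1,j]\cap[-L,L]|$, so that $n_j=w$ in the interior and $n_j=j+L+1$ or $n_j=L+w-j$ on the left/right boundaries. The construction described in the excerpt is such that, up to an $O(1/M)$ deviation that vanishes in the limit, each of the $\dr$ sockets of a check at position $j$ is independently attached to a uniformly random position in $[j-w+1,j]$ and actually connects to a variable only when that position lies in $[-L,L]$, which happens with probability $n_j/w$. Hence a check at position $j$ is entirely empty with probability $(1-n_j/w)^{\dr}$, and the expected number of effective check nodes at position $j$ is $(M\dl/\dr)\bigl(1-(1-n_j/w)^{\dr}\bigr)$.

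The rest is bookkeeping. The interior contributes the term $2L-w+2$, and each boundary strip contributes $\sum_{k=1}^{w-1}\bigl(1-((w-k)/w)^{\dr}\bigr)$; after substituting $i=w-k$ and using $(0/w)^{\dr}=0$ together with $(w/w)^{\dr}=1$, each boundary sum collapses to $w-\sum_{i=0}^{w}(i/w)^{\dr}$. Adding gives a total expected number of effective check nodes equal to $(M\dl/\dr)\bigl[(2L+1)+(w+1-2\sum_{i=0}^{w}(i/w)^{\dr})\bigr]$, so dividing by $M(2L+1)$ and subtracting from $1$ yields exactly the stated formula. The only conceptually non-routine step is recognizing that the empty boundary check nodes must be removed from the naive check count (which would otherwise give the wrong term $(w-1)/(2L+1)$ instead of $(w+1-2\sum_{i=0}^{w}(i/w)^{\dr})/(2L+1)$); once this is granted, the lemma is a direct evaluation and presents no further obstacle.
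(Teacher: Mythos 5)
Your proof is correct and follows essentially the same approach as the paper's: count the variable nodes as $M(2L+1)$, count the expected number of check nodes with at least one edge by using independence of the $\dr$ socket placements to get a $\bigl(1-(1-n_j/w)^{\dr}\bigr)$ factor at boundary positions, and simplify. The only cosmetic difference is the partition of check positions — you use non-overlapping interior/boundary strips $[-L+w-1,L]$ and $[-L,-L+w-2]\cup[L+1,L+w-1]$, whereas the paper uses full $w$-wide boundary strips whose extreme positions happen to be fully connected — but both lead to the same sum and the same closed form.
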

\begin{proof}
Let $V$ be the number of variable nodes and $C$ be the number
of check nodes that are connected to at least one of these variable nodes.
Recall that we {\em define} the design rate as $1-C/V$.

There are $V=M (2 L+1)$ variables in the graph.  The check nodes that
have potential connections to variable nodes
in the range $[-L, L]$ are indexed from $-L$ to $L+w-1$. Consider the
$M \frac{\dl}{\dr}$ check nodes at position $-L$. Each of the $\dr$
edges of each such check node is chosen independently from the range
$[-L-w+1, -L]$. The probability that such a check node
has at least one connection in the range $[-L, L]$ is equal to
$1-\bigl(\frac{w-1}{w}\bigr)^{\dr}$.  Therefore, the expected number
of check nodes at position $-L$ that are connected to the code is equal to
$M \frac{\dl}{\dr} (1-\bigl(\frac{w-1}{w}\bigr)^{\dr})$. In a similar
manner, the expected number of check nodes at position $-L+i$, $i=0,
\dots, w-1$, that are connected to the code is equal to $M \frac{\dl}{\dr}
(1-\bigl(\frac{w-i-1}{w}\bigr)^{\dr})$. All check nodes at positions
$-L+w, \dots, L-1$ are connected. Further, by symmetry, check
nodes in the range $L, \dots, L+w-1$ have an identical contribution
as check nodes in the range $-L, \dots, -L+w-1$. Summing up all
these contributions, we see that the number of check nodes which
are connected is equal to \begin{align*} C & = M \frac{\dl}{\dr} [2 L -
w + 2 \sum_{i=0}^{w} (1-\bigl(\frac{i}{w}\bigr)^{\dr})].  \end{align*}
\end{proof}
{\em Discussion:} \label{dis:designrate} In the above lemma we have
{\em defined} the design rate as the normalized difference of the number of
variable nodes and the number of check nodes that are involved in the
ensemble. This leads to a relatively simple expression which is
suitable for our purposes. But in this ensemble there is a non-zero
probability that there are two or more degree-one check nodes attached
to the same variable node. In this case, some of these degree-one check
nodes are redundant and do not impose constraints. This effect
only happens for variable nodes close to the boundary. 
Since we consider the case where $L$ tends to infinity, this slight difference
between the ``design rate'' and the ``true rate'' does not play a role. We
therefore opt for this simple definition. The design rate is 
a lower bound on the true rate.

\subsection{Other Variants}
There are many variations on the theme that show the same
qualitative behavior.  For real applications these and possibly other
variations are vital to achieve the best trade-offs. 
Let us give a few select examples.

\begin{itemize}
\item[(i)] {\em Diminished Rate Loss}:
One can start with a {\em cycle} (as is the case for tailbiting codes)
rather than a chain so that some of the
extra check nodes which we add at the boundary can be used for
the termination on both sides.  This reduces the rate-loss.
\item[(ii)] {\em Irregular and Structured Ensembles}:
We can start with irregular or structured ensembles.  Arrange a number
of graphs next to each other in a horizontal order.  Couple them by
connecting neighboring graphs up to some order. Emperically, once
the coupling is ``strong'' enough and spread out sufficiently,
the threshold is ``very close'' to the MAP threshold of the underlying ensembles.
 See also \cite{MLC10} for a study of such ensembles. 
\end{itemize}
The main aim of this paper is to explain why coupled LDPC
codes perform so well rather than optimizing the ensemble.  Therefore,
despite the practical importance of these variations, we focus on
the ensemble $(\dl, \dr, L, w)$. It is the simplest to analyze.

\section{General Principle}\label{sec:generalprinciple}
As mentioned before, the basic reason why coupled ensembles have such good
thresholds is that their BP threshold is very close to the MAP threshold
of the underlying ensemble. Therefore, as a starting point, let us
review how the BP and the MAP threshold of the underlying ensemble can
be characterized. A detailed explanation of the following
summary can be found in \cite{RiU08}.

\subsection{The Standard $(\dl, \dr)$-Regular Ensemble: BP versus MAP}\label{sec:standard}
Consider density evolution (DE) of the standard $(\dl, \dr)$-regular ensemble.
More precisely, consider the fixed point (FP) equation
\begin{align} \label{equ:standadde}
x = \epsilon(1-(1-x)^{\dr-1})^{\dl-1},
\end{align}
where $\epsilon$ is the channel erasure value and $x$ is the average erasure probability flowing from the variable
node side to the check node side.  Both the BP as well as the MAP
threshold of the $(\dl, \dr)$-regular ensemble can be characterized in
terms of solutions (FPs) of this equation.

\begin{lemma}[Analytic Characterization of Thresholds]\label{lem:standardthresholds}
Consider the$(\dl, \dr)$-regular ensemble.
Let $\epsilon^{\BPsmall}(\dl, \dr)$ denote its BP threshold
and let $\epsilon^{\MAPsmall}(\dl, \dr)$ denote its MAP threshold.
Define
\begin{align*}
p^{\BPsmall}(x) & = ((\dl-1)(\dr-1) -1) (1-x)^{\dr-2} - \sum_{i=0}^{\dr-3} (1-x)^i, \\
p^{\MAPsmall}(x) & = x + \frac{1}{\dr}(1-x)^{\dr-1}(\dl+\dl(\dr-1)x-\dr x)-\frac{\dl}{\dr}, \\
\epsilon(x) & = \frac{x}{(1-(1-x)^{\dr-1})^{\dl-1}}.
\end{align*}
Let $x^{\BPsmall}$ be the unique positive solution of the equation
$p^{\BPsmall}(x)=0$ and let $x^{\MAPsmall}$ be the unique positive solution of the
equation $p^{\MAPsmall}(x)=0$. 
Then $\epsilon^{\BPsmall}(\dl, \dr)=\epsilon(x^{\BPsmall})$ and
$\epsilon^{\MAPsmall}(\dl, \dr)=\epsilon(x^{\MAPsmall})$.
We remark that above, for ease of notation, we drop the dependence of $x^{\BPsmall}$ and
$x^{\MAPsmall}$ on $\dl$ and $\dr$.  
\end{lemma}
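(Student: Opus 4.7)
Both characterizations arise from the same idea: parameterize the density-evolution fixed-point branch by $x$, with channel coordinate $\epsilon(x)=x/(1-(1-x)^{\dr-1})^{\dl-1}$ obtained from \eqref{equ:standadde}, and then impose two different optimality conditions. For BP, we take the minimum of $\epsilon(x)$ along the branch (the smallest channel value for which a nontrivial fixed point exists). For MAP, we invoke the Maxwell construction and the area theorem applied to the EBP EXIT curve, as developed in \cite{MMRU04,MMU08}. Monotonicity of DE together with $\epsilon(x)\to\infty$ as $x\downarrow 0$ (since $\dl\ge 2$) and $\epsilon(1)=1$ guarantees that the infimum is attained at an interior critical point.

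\emph{BP threshold.} Setting the logarithmic derivative of $\epsilon(x)$ to zero gives
\begin{align*}
\frac{1}{x}=(\dl-1)(\dr-1)\,\frac{(1-x)^{\dr-2}}{1-(1-x)^{\dr-1}}.
\end{align*}
Substituting the identity $1-(1-x)^{\dr-1}=x\sum_{i=0}^{\dr-2}(1-x)^i$, clearing denominators, and separating the $i=\dr-2$ term from the rest yields exactly $p^{\BPsmall}(x)=0$. Uniqueness of the positive root follows from a short monotonicity argument: dividing by $(1-x)^{\dr-2}$, the resulting expression $(\dl-1)(\dr-1)-1-\sum_{i=0}^{\dr-3}(1-x)^{i-(\dr-2)}$ is strictly decreasing in $x\in(0,1)$, so it vanishes at a single point, which must be the global minimizer $x^{\BPsmall}$ of $\epsilon(x)$.

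\emph{MAP threshold.} Parameterize the EBP EXIT curve by $x\in[0,1]$, with EXIT coordinate $h(x)=(1-(1-x)^{\dr-1})^{\dl}$. The Maxwell construction identifies $\epsilon^{\MAPsmall}(\dl,\dr)=\epsilon(x^{\MAPsmall})$, where $x^{\MAPsmall}$ satisfies the equal-area condition
\begin{align*}
\int_0^{x^{\MAPsmall}} \epsilon(y)\,h'(y)\,dy = \epsilon(x^{\MAPsmall})\,h(x^{\MAPsmall}),
\end{align*}
obtained from the area theorem together with the fact that the MAP EXIT jumps at $\epsilon^{\MAPsmall}$ from $0$ onto the upper EBP branch and that the total MAP EXIT integral equals the design rate $1-\dl/\dr$. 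The key computational observation is that $h'(x)=\dl(\dr-1)(1-x)^{\dr-2}(1-(1-x)^{\dr-1})^{\dl-1}$, so the factor $(1-(1-x)^{\dr-1})^{\dl-1}$ cancels with the denominator of $\epsilon(y)$ and the integrand collapses to the polynomial $\dl(\dr-1)\,y(1-y)^{\dr-2}$. Integrating gives
\begin{align*}
\int_0^{x} \epsilon(y)\,h'(y)\,dy = \frac{\dl}{\dr} - \frac{(1-x)^{\dr-1}}{\dr}\bigl(\dl + \dl(\dr-1)x\bigr),
\end{align*}
while $\epsilon(x)h(x)=x-x(1-x)^{\dr-1}$. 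Rearranging the equal-area equality then produces exactly $p^{\MAPsmall}(x)=0$.

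\emph{Main obstacle.} The routine pieces are the derivative of $\epsilon(x)$ and the polynomial integration (the cancellation that reduces $\epsilon h'$ to a polynomial is what makes the closed form possible). The more delicate step, which I expect to be the main difficulty, is uniqueness of the positive root of $p^{\MAPsmall}$ and its identification with the Maxwell threshold: note that $p^{\MAPsmall}(0)=0$ and a short calculation gives $(p^{\MAPsmall})'(0)=0$, so the second- and third-order behavior near $0$ must be analyzed to exclude spurious zeros. I would handle this by showing that $p^{\MAPsmall}$ has exactly one sign change in $(0,1)$ using a convexity/sign argument on the explicit derivative, and by verifying that the resulting root lies on the upper (stable) EBP branch, i.e., $x^{\MAPsmall}>x^{\BPsmall}$, so that $\epsilon(x^{\MAPsmall})$ is indeed the Maxwell decoder threshold rather than the BP one.
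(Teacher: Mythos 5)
The paper does not actually give a proof of this lemma---it is stated as a summary of standard facts with a pointer to \cite{RiU08}---so there is no inline proof to compare against. Your route is the standard one and the computations check out: the logarithmic derivative of $\epsilon(x)$ together with the identity $1-(1-x)^{\dr-1}=x\sum_{i=0}^{\dr-2}(1-x)^i$ yields $p^{\BPsmall}(x)=0$ exactly, and for the MAP part the cancellation $\epsilon(y)h'(y)=\dl(\dr-1)\,y(1-y)^{\dr-2}$ followed by the elementary Beta integral gives the stated antiderivative $\frac{\dl}{\dr}-\frac{(1-x)^{\dr-1}}{\dr}\bigl(\dl+\dl(\dr-1)x\bigr)$, so that the area-theorem/Maxwell balance $\int_0^{x}\epsilon(y)h'(y)\,dy=\epsilon(x)h(x)$ reduces exactly to $p^{\MAPsmall}(x)=0$.

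The step you flag as the main obstacle---uniqueness of the positive zero of $p^{\MAPsmall}$ and its location on the stable branch---is in fact cleaner than you suggest and falls directly out of your own construction. Since $p^{\MAPsmall}(x)=\epsilon(x)h(x)-\int_0^x\epsilon(y)h'(y)\,dy$, differentiation gives $(p^{\MAPsmall})'(x)=\epsilon'(x)h(x)$, and $h(x)>0$ for $x\in(0,1)$, so the sign of $(p^{\MAPsmall})'$ on $(0,1)$ is the sign of $\epsilon'$. By your BP analysis, $\epsilon'$ vanishes at exactly one point $x^{\BPsmall}\in(0,1)$, with $\epsilon'<0$ before it and $\epsilon'>0$ after; hence $p^{\MAPsmall}$ is strictly decreasing on $(0,x^{\BPsmall})$ and strictly increasing on $(x^{\BPsmall},1)$. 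Together with $p^{\MAPsmall}(0)=0$ and $p^{\MAPsmall}(1)=1-\dl/\dr>0$, this gives a unique zero in $(0,1)$, necessarily in $(x^{\BPsmall},1)$, so $x^{\MAPsmall}>x^{\BPsmall}$ in one stroke---no separate convexity analysis near the origin is needed. (The paper itself uses the same identity $(p^{\MAPsmall})'(x)=(1-(1-x)^{\dr-1})^{\dl}\epsilon'(x)$ later, in the proof of part (vi) of Theorem~\ref{the:propertiesEXIT}.)
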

\begin{example}[Thresholds of $(3, 6)$-Ensemble]
Explicit computations show that 
$\epsilon^{\BPsmall}(\dl=3, \dr=6) \approx 0.42944$ and
$\epsilon^{\MAPsmall}(\dl=3, \dr=6) \approx 0.488151$.
\end{example}

\begin{lemma}[Graphical Characterization of Thresholds]
The left-hand side of Figure~\ref{fig:ebpexit36} shows the so-called extended BP (EBP)
EXIT curve associated to the $(3, 6)$-regular ensemble.  This is the
curve given by $\{\epsilon(x), (1-(1-x)^{\dr-1})^{\dl}\}$, $0 \leq x \leq 1$.  For all regular
ensembles with $\dl \geq 3$ this curve has a characteristic ``C'' shape.
It starts at the point $(1, 1)$ for $x=1$ and then moves downwards
until it ``leaves'' the unit box at the point $(1, \xunstab(1))$
and extends to infinity.
\begin{figure}[htp]
\centering
\input{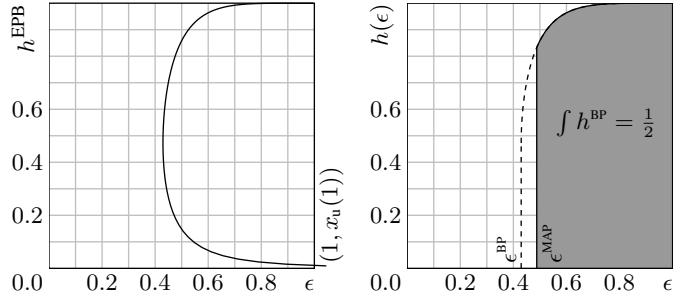}
\caption{\label{fig:ebpexit36} Left: The EBP EXIT curve
$\exitf^{\EBPsmall}$ of the $(\dl=3, \dr=6)$-regular ensemble. 
The curve goes ``outside the box'' at the point $(1, \xunstab(1))$ and tends to infinity.  Right:
The BP EXIT function $\exitf^{\BPsmall}(\epsilon)$. Both
the BP as well as the MAP threshold are determined by $\exitf^{\BPsmall}(\epsilon)$.} 
\end{figure}
The right-hand side of Figure~\ref{fig:ebpexit36} shows the BP EXIT
curve (dashed line). It is constructed from the EBP EXIT curve by ``cutting off'' the
lower branch and by completing the upper branch via a vertical line.

The BP threshold $\epsilon^{\BPsmall}(\dl, \dr)$ is the point at which
this vertical line hits the $x$-axis. In other words, the BP threshold
$\epsilon^{\BPsmall}(\dl, \dr)$ is equal to the smallest $\epsilon$-value
which is taken on along the EBP EXIT curve.

\begin{lemma}[Lower Bound on $x^{\BPsmall}$]\label{lem:lowerboundxBP}
For the $(\dl, \dr)$-regular ensemble 
\begin{align*}
x^{\BPsmall}(\dl, \dr) & \geq 1-(\dl-1)^{-\frac{1}{\dr-2}}.
\end{align*}
\end{lemma}
\begin{proof}
Consider the polynomial $p^{\BPsmall}(x)$. Note 
that $p^{\BPsmall}(x) \geq \tilde{p}(x)=((\dl-1)(\dr-1)-1) (1-x)^{\dr-2} - (\dr-2)$
for $x \in [0, 1]$.  Since 
$p^{\BPsmall}(0) \geq \tilde{p}(0) = (\dl-2)(\dr-1) > 0$, 
the positive root of $\tilde{p}(x)$ is a
lower bound on the positive root of $p^{\BPsmall}(x)$.  
But the positive root of $\tilde{p}(x)$ is at 
$1- (\frac{\dr-2}{(\dl-1)(\dr-1)-1})^{\frac{1}{\dr-2}}$.
This in turn is lower bounded by $1-(\dl-1)^{-\frac{1}{\dr-2}}$.
\end{proof}

To construct the MAP threshold $\epsilon^{\MAPsmall}(\dl, \dr)$,
integrate the BP EXIT curve starting at $\epsilon=1$ until the area
under this curve is equal to the design rate of the code.  The point at
which equality is achieved is the MAP threshold (see the right-hand side
of Figure~\ref{fig:ebpexit36}).  \end{lemma}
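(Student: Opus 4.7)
The plan is to sandwich the polynomial $p^{\BPsmall}$ from below by a simpler polynomial $\tilde{p}$ whose positive root can be computed in closed form, and then to relax that explicit root to obtain the cleaner bound $1-(\dl-1)^{-1/(\dr-2)}$. This works because $p^{\BPsmall}(0) = (\dl-2)(\dr-1) > 0$ (direct substitution, using $\dl \geq 3$ as needed to get the ``C''-shape), while $p^{\BPsmall}(x) \to -1$ as $x \to 1^-$, so $x^{\BPsmall}$ is the first zero crossing of $p^{\BPsmall}$ on $(0,1)$. Any continuous lower envelope that is also positive at the origin must itself cross zero no later than $p^{\BPsmall}$, hence its first positive root is a lower bound on $x^{\BPsmall}$.

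First I would pick the envelope via the crude but clean bound $(1-x)^i \leq 1$ on $[0,1]$, applied term by term inside $\sum_{i=0}^{\dr-3}(1-x)^i$. This yields
\begin{align*}
\tilde{p}(x) := ((\dl-1)(\dr-1)-1)(1-x)^{\dr-2} - (\dr-2) \;\leq\; p^{\BPsmall}(x),
\end{align*}
and one verifies $\tilde{p}(0) = (\dl-2)(\dr-1) > 0$ so the sandwich argument is legitimate. Solving $\tilde{p}(x)=0$ is elementary, since only one power of $(1-x)$ remains:
\begin{align*}
x_0 \;=\; 1-\left(\tfrac{\dr-2}{(\dl-1)(\dr-1)-1}\right)^{\frac{1}{\dr-2}}.
\end{align*}

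Finally I would polish $x_0$ into the stated form by checking $\tfrac{\dr-2}{(\dl-1)(\dr-1)-1} \leq \tfrac{1}{\dl-1}$; cross-multiplying reduces this to $(\dl-1)(\dr-2) \leq (\dl-1)(\dr-1)-1$, i.e.\ $\dl \geq 2$. Since $t \mapsto t^{1/(\dr-2)}$ is monotone, this upgrades to $x_0 \geq 1 - (\dl-1)^{-1/(\dr-2)}$, and chaining with $x^{\BPsmall} \geq x_0$ gives the claim.

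The only real decision in the argument, and the main thing one has to ``see'', is the relaxation of $p^{\BPsmall}$: trading $\sum_{i=0}^{\dr-3}(1-x)^i$ for the flat constant $\dr-2$ kills the $x$-dependence of the subtracted term while preserving the leading $(1-x)^{\dr-2}$ behavior that controls the small-$(1-x)$ regime where the root sits. Sharper envelopes (for instance, summing the geometric series exactly and then bounding) would tighten the constant but would not give as clean a closed form as the one in the statement, so the naive bound is in fact the right one to use.
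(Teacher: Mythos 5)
Your proposal is correct and follows essentially the same route as the paper: bound $\sum_{i=0}^{\dr-3}(1-x)^i \leq \dr-2$ to obtain $\tilde{p}(x) \leq p^{\BPsmall}(x)$, solve $\tilde{p}(x_0)=0$ in closed form, and then relax $\tfrac{\dr-2}{(\dl-1)(\dr-1)-1} \leq \tfrac{1}{\dl-1}$ to reach the stated form. Your version spells out the cross-multiplication step and the sandwich logic in slightly more detail, but there is no substantive difference in the argument.
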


\begin{lemma}[MAP Threshold for Large Degrees]\label{lem:largethreshold}
Consider the $(\dl, \dr)$-regular ensemble.
Let $r(\dl, \dr)=1-\frac{\dl}{\dr}$ denote the design rate so that $\dr = \frac{\dl}{1-r}$. Then,
for $r$ fixed and $\dl$ increasing, the MAP threshold
$\epsilon^{\MAPsmall}(\dl, \dr)$ converges exponentially fast (in $\dl$) to $1-r$.
\end{lemma}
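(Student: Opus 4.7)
The plan is to sandwich $\epsilon^{\MAPsmall}(\dl, \dr)$ between Shannon's converse above and a matching lower bound derived from analyzing $p^{\MAPsmall}$. For the upper bound, I would first invoke the graphical characterization: the area theorem gives $\int_{\epsilon^{\MAPsmall}}^{1} \exitf^{\BPsmall}(\epsilon)\,d\epsilon = r$, and since $\exitf^{\BPsmall}\leq 1$, this forces $\epsilon^{\MAPsmall} \leq 1-r$. So the whole game is to produce a matching lower bound.

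For the lower bound the strategy is to invoke Lemma~\ref{lem:standardthresholds} and work with $x^{\MAPsmall}$ directly. Note that $\epsilon(x) = x/(1-(1-x)^{\dr-1})^{\dl-1} \geq x$ on $[0,1]$, so $\epsilon^{\MAPsmall} \geq x^{\MAPsmall}$, and it suffices to show $x^{\MAPsmall} \geq 1 - r - \delta_{\dl}$ with $\delta_{\dl}$ exponentially small. Rewriting
\begin{align*}
p^{\MAPsmall}(x) &= x - (1-r) + \frac{1}{\dr}(1-x)^{\dr-1}\bigl(\dl + (\dl(\dr-1)-\dr)x\bigr) \\
&\leq x - (1-r) + \dl\,(1-x)^{\dr-1},
\end{align*}
where the inequality uses the crude bound $\dl+(\dl(\dr-1)-\dr)x \leq \dl\dr$ valid on $[0,1]$. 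Next I would evaluate at $x = 1-r-\alpha$ with $\alpha := 2\dl\bigl(\tfrac{1+r}{2}\bigr)^{\dr-1}$; for $\dl$ large this $\alpha$ is well below $(1-r)/2$, so $r+\alpha \leq (1+r)/2$ and the right-hand side is bounded by $-\alpha + \dl\bigl(\tfrac{1+r}{2}\bigr)^{\dr-1} = -\alpha/2 < 0$.

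To convert $p^{\MAPsmall}(1-r-\alpha)<0$ into $x^{\MAPsmall} > 1-r-\alpha$, I would pin down the sign pattern of $p^{\MAPsmall}$ on $(0,1)$. A short direct calculation gives $p^{\MAPsmall}(0)=0$, $(p^{\MAPsmall})'(0)=0$, and $(p^{\MAPsmall})''(0) = -(\dr-1)(\dl-2)<0$, so $p^{\MAPsmall}$ is strictly negative just above $0$; combined with uniqueness of the positive root (from Lemma~\ref{lem:standardthresholds}) and the boundary value $p^{\MAPsmall}(1)=r>0$, this forces $p^{\MAPsmall}<0$ on all of $(0,x^{\MAPsmall})$. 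Hence $1-r-\alpha \in (0,x^{\MAPsmall})$, yielding $x^{\MAPsmall}> 1-r-\alpha$ and consequently $1-r \geq \epsilon^{\MAPsmall} \geq 1-r- 2\dl\bigl(\tfrac{1+r}{2}\bigr)^{\dr-1}$. Since $\dr = \dl/(1-r)$ scales linearly in $\dl$, the error decays exponentially in $\dl$, as claimed. The main technical obstacle is the sign argument $p^{\MAPsmall}<0$ on $(0,x^{\MAPsmall})$; everything else is routine estimation once one trusts the analytic characterization of Lemma~\ref{lem:standardthresholds}.
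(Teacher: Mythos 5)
Your argument is correct and gives a clean, fully rigorous proof, but it takes a genuinely different route from the paper. The paper's proof is essentially an asymptotic expansion: it states, without intermediate steps, that the unique positive root of $p^{\MAPsmall}$ has the form $x = (1-r)\bigl(1 - r^{\dl/(1-r)-1}(\dl+r-1)/(\cdots) + o(\cdot)\bigr)$ and reads off exponential convergence of $x^{\MAPsmall}$, then records $\epsilon^{\MAPsmall}$ as a closed-form function of that root. You instead sandwich $\epsilon^{\MAPsmall}$ directly: the upper bound $\epsilon^{\MAPsmall}\le 1-r$ is an immediate consequence of the area characterization (or just Shannon), and the lower bound is obtained by (a) observing $\epsilon(x)\ge x$ so $\epsilon^{\MAPsmall}\ge x^{\MAPsmall}$, (b) crudely majorizing $p^{\MAPsmall}(x)\le x-(1-r)+\dl(1-x)^{\dr-1}$, (c) evaluating at $1-r-\alpha$ with $\alpha=2\dl\bigl(\tfrac{1+r}{2}\bigr)^{\dr-1}$ and checking negativity, and (d) using the sign pattern of $p^{\MAPsmall}$ (zero of second order at $0$ with $(p^{\MAPsmall})''(0)=-(\dr-1)(\dl-2)<0$, $p^{\MAPsmall}(1)=r>0$, uniqueness of the positive root) to conclude $p^{\MAPsmall}<0$ on $(0,x^{\MAPsmall})$ and hence $x^{\MAPsmall}>1-r-\alpha$. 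What your route buys is an explicit, non-asymptotic two-sided bound $1-r-2\dl\bigl(\tfrac{1+r}{2}\bigr)^{\dr-1}\le\epsilon^{\MAPsmall}\le 1-r$ with elementary verification of each step, at the cost of not producing the sharper leading-order constant in the paper's expansion. All the intermediate claims you make check out: the linear bound $\dl+(\dl(\dr-1)-\dr)x\le\dl\dr$ on $[0,1]$ is correct (the coefficient of $x$ is nonnegative for $\dl\ge 3$), $p^{\MAPsmall}(0)=(p^{\MAPsmall})'(0)=0$ and $(p^{\MAPsmall})''(0)=-(\dr-1)(\dl-2)$, and the sign argument combined with the lemma's uniqueness assertion does force $p^{\MAPsmall}<0$ on $(0,x^{\MAPsmall})$.
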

\begin{proof}
Recall that the MAP threshold is determined by the unique positive solution
of the polynomial equation $p^{\MAPsmall}(x)=0$, where $p^{\MAPsmall}(x)$ is given in Lemma~\ref{lem:standardthresholds}.
A closer look at this equation shows that this solution has the form
\begin{align*}
x = (1-r)\bigl(1-\frac{r^{\frac{\dl}{1-r}-1}(\dl+r-1)}{1-r^{\frac{\dl}{1-r}-2}
(1+\dl(\dl+r-2))} + o(\dl r^{\frac{\dl}{1-r}}) \bigr).
\end{align*}
We see that the root converges exponentially fast (in $\dl$) to $1-r$.
Further, in terms of this root we can write the MAP threshold as
\begin{align*}
x(1+\frac{1-r-x}{(\dl+r-1) x})^{\dl-1}.
\end{align*}
\end{proof}
\begin{lemma}[Stable and Unstable Fixed Points -- \cite{RiU08}]
\label{lem:stableandunstable}
Consider the standard $(\dl, \dr)$-regular ensemble with $\dl \geq 3$.
Define 
\begin{align}\label{equ:hfunction}
h(x) & = \epsilon (1-(1-x)^{\dr-1})^{\dl-1} - x.
\end{align}
Then, for $\epsilon^{\BP}(\dl, \dr) < \epsilon \leq 1$, there are exactly 
two strictly positive solutions of the equation $h(x)=0$
and they are both in the range $[0, 1]$.

Let  $\xstable$ be the larger of the two
and let $\xunstable$ be the smaller of
the two.  Then $\xstable$ is a strictly increasing
function in $\epsilon$ and $\xunstable$
is a strictly decreasing function in $\epsilon$. Finally,
$\xstab(\epsilon^\BPsmall)=\xunstab(\epsilon^\BPsmall)$.
\elemma
{\em Discussion:} Recall that $h(x)$ represents the change of the erasure
probability of DE in one iteration, assuming that the system has current
erasure probability $x$.  This change can be negative (erasure
probability decreases), it can be positive, or it can be zero (i.e., there is
a FP). We discuss some useful properties of $h(x)$ in Appendix~\ref{app:propertyofh(x)}. 

As the notation indicates, $\xstab$ corresponds to a {\em stable} FP
whereas $\xunstab$ corresponds to an unstable FP.  Here
stability means that if we initialize DE with the value $\xstable+\delta$
for a sufficiently small $\delta$ then DE converges back to $\xstable$.

\subsection{The $(\dl, \dr, L)$ Ensemble}
Consider the EBP EXIT curve of the $(\dl, \dr, L)$ ensemble.  To
compute this curve we proceed as follows. We fix a desired ``entropy''
value, see Definition~\ref{def:entropy}, call it $\chi$. We initialize
DE with the constant $\chi$. We then repeatedly perform one step
of DE, where in each step we fix the channel parameter in such a
way that the resulting entropy is equal to $\chi$. This is equivalent
to the procedure introduced in \cite[Section VIII]{MMRU09} to compute
the EBP EXIT curve for general binary-input memoryless output-symmetric
channels. Once the procedure has converged, we plot its EXIT value
versus the resulting channel parameter. We then repeat the procedure for
many different entropy values to produce a whole curve.

Note that DE here is not just DE for the underlying ensemble.  Due
to the spatial structure we in effect deal with a multi-edge ensemble \cite{RiU04b}
with many edge types. For our current casual discussion the exact
form of the DE equations is not important, but if you are curious
please fast forward to Section~\ref{sec:proof}.

Why do we use this particular procedure?  By using forward DE, one
can only reach stable FPs. But the above procedure allows one to
find points along the whole EBP EXIT curve, i.e., one can in
particular also produce {\em unstable} FPs of DE.

The resulting curve is shown in Figure~\ref{fig:lrLexit} for various values of $L$.
\begin{figure}[htp]
\begin{centering}
\input{ps/lrLexit}
\caption{EBP EXIT curves of the ensemble $(\dl=3, \dr=6, L)$ 
for $L=1, 2, 4, 8, 16, 32, 64$, and $128$.
The BP/MAP thresholds are 
$\epsilon^{\BPsmall/\MAPsmall}(3, 6, 1)=0.714309/0.820987$, 
$\epsilon^{\BPsmall/\MAPsmall}(3, 6, 2)=0.587842/0.668951$, 
$\epsilon^{\BPsmall/\MAPsmall}(3, 6, 4)=0.512034/0.574158$, 
$\epsilon^{\BPsmall/\MAPsmall}(3, 6, 8)=0.488757/0.527014$, 
$\epsilon^{\BPsmall/\MAPsmall}(3, 6, 16)=0.488151/0.505833$, 
$\epsilon^{\BPsmall/\MAPsmall}(3, 6, 32)=0.488151/0.496366$, 
$\epsilon^{\BPsmall/\MAPsmall}(3, 6, 64)=0.488151/0.492001$, 
$\epsilon^{\BPsmall/\MAPsmall}(3, 6, 128)=0.488151/0.489924$.
The light/dark gray areas mark the interior of the BP/MAP EXIT function
of the underlying $(3, 6)$-regular ensemble, respectively.
}
\label{fig:lrLexit}
\end{centering}
\end{figure}
Note that these EBP EXIT curves show a dramatically different behavior
compared to the EBP EXIT curve of the underlying ensemble. These
curves appear to be ``to the right'' of the threshold $\epsilon^{\MAPsmall}(3,
6) \approx 0.48815$.  For small values of $L$ one might be led to
believe that this is true since the design rate of such an ensemble is
considerably smaller than $1-\dl/\dr$.  But even for large values of
$L$, where the rate of the ensemble is close to $1-\dl/\dr$, this dramatic
increase in the threshold is still true.
Emperically we see that, for $L$ increasing, the EBP EXIT curve
approaches the MAP EXIT curve of the underlying $(\dl=3, \dr=6)$-regular ensemble. In
particular, for $\epsilon \approx \epsilon^{\MAPsmall}(\dl, \dr)$ the
EBP EXIT curve drops essentially vertically until it hits zero. We will
see that this is a fundamental property of this construction.

\subsection{Discussion}
A look at Figure~\ref{fig:lrLexit} might convey the impression that the
transition of the EBP EXIT function is completely flat and that the threshold
of the ensemble $(\dl, \dr, L)$ is exactly equal to the MAP threshold of
the underlying $(\dl, \dr)$-regular ensemble when $L$ tends to infinity.

Unfortunately, the actual behavior is more subtle.
Figure~\ref{fig:lrLwiggle} shows the EBP EXIT curve for $L=32$ with a
small section of the transition greatly magnified.  As one can see from
this magnification, the curve is not flat but exhibits small ``wiggles''
in $\epsilon$ around $\epsilon^{\MAPsmall}(\dl, \dr)$.  These wiggles
do not vanish as $L$ tends to infinity but their width remains constant.
As we will discuss in much more detail later, area considerations imply
that, in the limit as $L$ diverges to infinity, the BP threshold is
slightly below $\epsilon^{\MAPsmall}(\dl, \dr)$. Although this does not
play a role in the sequel, let us remark that the number of wiggles is
(up to a small additive constant) equal to $L$. 

Where do these wiggles come from?  They stem from the fact that the
system is discrete. If, instead of considering a system with sections
at integer points, we would deal with a continuous system where
neighboring "sections" are infinitesimally close, then these wiggles
would vanish.  This ``discretization'' effect is well-known in the
physics literature. By letting $w$ tend to infinity we can in effect
create a continuous system.  This is in fact our main motivation
for introducing this parameter.

Emperically, these wiggles are very small (e.g., they are of width
$10^{-7}$ for the $(\dl=3, \dr=6, L)$ ensemble), and further, these
wiggles tend to $0$ when $\dl$ is increased.  Unfortunately this is hard
to prove.
\begin{figure}[htp]
\begin{centering}
\input{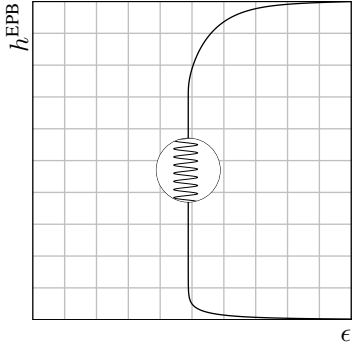}
\caption{
EBP EXIT curve for the $(\dl=3, \dr=6, L=32)$ ensemble.  The circle shows a
magnified portion of the curve. The horizontal magnification is $10^7$,
the vertical one is $1$.}
\label{fig:lrLwiggle}
\end{centering}
\end{figure}

We therefore study the ensemble $(\dl, \dr, L, w)$. The
wiggles for this ensemble are in fact larger, see e.g. Figure~\ref{fig:lrLwwiggle}.
\begin{figure}[htp]
\begin{centering}
\input{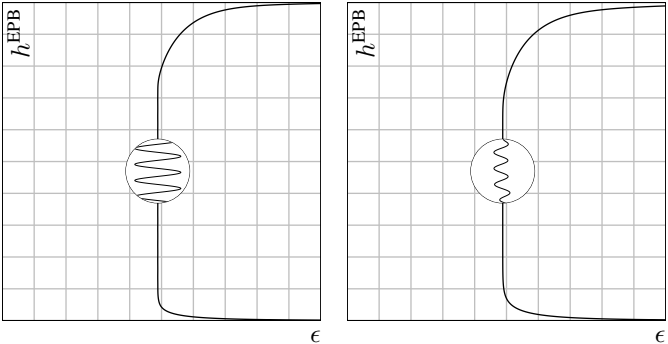}
\caption{EBP EXIT curve for the $(\dl=3, \dr=6, L=16, w)$ ensemble. Left:
$w=2$; The circle shows a magnified portion of the curve. The horizontal
magnification is $10^3$, the vertical one is $1$. Right: $w=3$; The circle
shows a magnified portion of the curve. The horizontal magnification is
$10^6$, the vertical one is $1$.
}
\label{fig:lrLwwiggle}
\end{centering}
\end{figure}
But, as mentioned above, the wiggles can be made arbitrarily small by letting
$w$ (the smoothing parameter) tend to infinity. E.g., in the left-hand
side of Figure~\ref{fig:lrLwwiggle}, $w=2$, whereas in the right-hand
side we have $w=3$. We see that the wiggle size has decreased by more
than a factor of $10^3$.

\section{Main Statement and Interpretation}
As pointed out in the introduction, numerical experiments indicate that
there is a large class of convolutional-like LDPC ensembles that all have
the property that their BP threshold is ``close'' to the MAP threshold
of the underlying ensemble. Unfortunately, no general theorem is known
to date that states when this is the case.  The following theorem gives
a particular instance of what we believe to be a general principle. The
bounds stated in the theorem are loose and can likely be improved
considerably. Throughout the paper we assume that $\dl \geq 3$.

\subsection{Main Statement}
\begin{theorem}[BP Threshold of the $(\dl, \dr, L, w)$ Ensemble]\label{the:main}
Consider transmission over the BEC$(\epsilon)$ using random elements from
the ensemble $(\dl, \dr, L, w)$.  Let $\epsilon^{\BPsmall}(\dl, \dr,
L, w)$ denote the BP threshold and let $R(\dl, \dr, L, w)$ denote the
design rate of this ensemble.  

Then, in the limit as $M$ tends to infinity, and for $w>\max\Big\{2^{16}, 2^4\dl^2\dr^2, \frac{(2\dl\dr(1+\frac{2\dl}{1\!-\!2^{\!-\!1\!/\!(\dr\!-\!2)}}))^8}{(1\!-\!2^{\!-1\!/\!(\dr\!-\!2)})^{16}(\frac12(1\!-\!\frac{\dl}{\dr}))^8}\Big\}$,
\begin{align}
 \epsilon^{\BPsmall}(\dl, \dr, L, w) & \leq 
 \epsilon^{\MAPsmall}(\dl, \dr, L, w) \leq \nonumber \\ 
 & \epsilon^{\MAPsmall}(\dl, \dr)\!+\!\frac{w-1}{2 L(1\!-\!(1\!-\!x^{\MAPsmall}(\dl, \dr))^{\dr\!-\!1})^{\dl}}\\
 \epsilon^{\BPsmall}(\dl, \dr, L, w) & \geq 
\Big( \epsilon^{\MAPsmall}(\dl, \dr) \!-\! 
 w^{-\frac18}\frac{8\dl\dr  + \frac{4\dr\dl^2}{(1-4w^{-\frac18})^{\dr}}}{(1\!-\!2^{-\frac{1}{\dr}})^2}
 \Big) \nonumber \\ & \,\,\,\,\,\,\,\,\times \big(1-4w^{-1/8}\big)^{\dr\dl}. 
\label{equ:epslowerbound}
\end{align}

In the limit as $M$, $L$ and $w$ (in that order) tend to infinity,
\begin{align}
\lim_{w \rightarrow \infty}
\lim_{L \rightarrow \infty}
 R(\dl, \dr, L, w) & = 1 - \frac{\dl}{\dr}, \label{equ:ratelimit} \\
\lim_{w \rightarrow \infty}
\lim_{L \rightarrow \infty}
 \epsilon^{\BPsmall}(\dl, \dr, L, w) & = 
\lim_{w \rightarrow \infty}
\lim_{L \rightarrow \infty}
\epsilon^{\MAPsmall}(\dl, \dr, L, w)  \nonumber \\ 
& = \epsilon^{\MAPsmall}(\dl,\dr) . \label{equ:epslimit} 
\end{align}
\end{theorem}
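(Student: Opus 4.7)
The theorem splits into three claims: an upper bound on the MAP threshold of the coupled ensemble (which automatically upper-bounds the BP threshold), a lower bound on the BP threshold, and the limiting statements \eqref{equ:ratelimit}--\eqref{equ:epslimit}. I would handle the limits last by sandwiching: sending $L\to\infty$ first kills the $O((w-1)/L)$ correction in the MAP upper bound, and sending $w\to\infty$ afterwards kills the $w^{-1/8}$ correction in the BP lower bound; combined with the trivial $\epsilon^{\BPsmall}\leq\epsilon^{\MAPsmall}$ this pins both limits to $\epsilon^{\MAPsmall}(\dl,\dr)$. The rate limit \eqref{equ:ratelimit} follows directly from Lemma~\ref{lem:designrate}.

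For the MAP upper bound I would invoke the area theorem. The area under the MAP EXIT curve of $(\dl,\dr,L,w)$ equals its design rate, which by Lemma~\ref{lem:designrate} is $(1-\dl/\dr)-\gamma$ with $\gamma = O((w-1)/L)$. Meanwhile, the MAP EXIT curve of the underlying $(\dl,\dr)$-regular ensemble has a vertical jump of height $H:=(1-(1-x^{\MAPsmall}(\dl,\dr))^{\dr-1})^{\dl}$ at $\epsilon^{\MAPsmall}(\dl,\dr)$ (this is the height at which the EBP EXIT curve ``leaves the unit box''; see the right-hand side of Figure~\ref{fig:ebpexit36}). Pushing the threshold of the coupled ensemble to the right by $\Delta$ beyond $\epsilon^{\MAPsmall}(\dl,\dr)$ would cost at least $\Delta\cdot H$ in area relative to the underlying MAP EXIT; matching the rate gap $\gamma$ against this missing area yields the stated bound $\Delta\leq (w-1)/(2L\cdot H)$.

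The BP lower bound is the heart of the theorem. A BP fixed point is a vector $\underline{x}=(x_i)$, with $x_i=0$ for $|i|>L$, satisfying the coupled DE equation
\begin{align*}
x_i = \epsilon \Big(1 - \frac{1}{w}\sum_{j=0}^{w-1}\Big(1 - \frac{1}{w}\sum_{k=0}^{w-1} x_{i+j-k}\Big)^{\dr-1}\Big)^{\dl-1}.
\end{align*}
My plan is a displacement/potential argument. First, I use monotonicity of DE to restrict to the spatially monotone, unimodal FP that rises from zero at the left boundary, plateaus in the interior, and decays back to zero on the right. Second, I introduce a spatial potential $U(\underline{x})$ whose increment between adjacent positions is, up to an $O(w^{-\alpha})$ error, the Maxwell antiderivative $\int_0^{x_i} h(y)\,dy$ of the scalar DE function $h$ from \eqref{equ:hfunction}. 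Third, I telescope $U$ across the chain: since $U$ vanishes at both endpoints, and since the monotone profile passes through every value in $[0,x^{\ast}]$, this forces
\begin{align*}
\int_0^{x^{\ast}} h(y)\,dy \;\geq\; -O(w^{-1/8})
\end{align*}
for some $x^{\ast}$ attained by the profile. By the Maxwell/area characterization of $\epsilon^{\MAPsmall}(\dl,\dr)$ in Lemma~\ref{lem:standardthresholds}, this implies $\epsilon \geq \epsilon^{\MAPsmall}(\dl,\dr) - O(w^{-1/8})$; the residual multiplicative factor $(1-4w^{-1/8})^{\dr\dl}$ comes from a contraction buffer needed to keep all $\dl+\dr$ nested DE operations uniformly close to their continuum counterparts.

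The main obstacle is making the discrete-to-continuum step quantitative. In the continuum limit $w\to\infty$ the potential identity is exact and one reads off $\epsilon\geq\epsilon^{\MAPsmall}(\dl,\dr)$ directly; for finite $w$ one must bound the error in replacing $w$-windowed averages of $(1-x)^{\dr-1}$ and $(\cdot)^{\dl-1}$ by their pointwise values on the monotone profile, propagate these errors through both the check-node and variable-node stages, and sum them across the transition region. Balancing the local Taylor remainder against the spatial transition width is where the exponent $1/8$ enters, and is also the origin of the unwieldy lower bound on $w$ in the hypothesis of the theorem. This quantitative smoothing analysis is the technical heart of the proof.
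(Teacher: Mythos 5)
Your treatment of the MAP upper bound and the rate limit tracks the paper's argument in spirit (area theorem against the rate deficit; the paper makes it precise via a circular ensemble that is then shortened). For the BP lower bound, which is the bulk of the theorem, you propose a genuinely different route: the paper never telescopes a potential. It instead (i) proves existence of a proper one-sided non-decreasing FP by Brouwer's theorem on a constant-entropy slice (Theorem~\ref{thm:onesidedDE}); (ii) interpolates that single FP into a one-parameter family $\{(\underline{\epsilon}(\alpha),\x(\alpha))\}_{\alpha=0}^1$ in which the channel parameter is allowed to vary with position so that every interpolant is a FP (Definition~\ref{def:EXIT}); (iii) runs the standard BEC area theorem along this fabricated curve, showing the enclosed area is within $O(w/L)$ of $1-\dl/\dr$, which then pins $\epsilon^*$ to within $O(w^{-1/8})$ of $\epsilon^{\MAPsmall}(\dl,\dr)$ via the polynomial $p^{\MAPsmall}$ (Theorem~\ref{the:propertiesEXIT}, parts (v)--(vi)); and (iv) shows forward DE at any parameter below $\inf_{\alpha\geq 1/4,\,i}\epsilon_i(\alpha)$ is pointwise squeezed under the family and collapses (Lemma~\ref{lem:stability}). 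Your single-fixed-point potential telescope sidesteps the entire curve-construction apparatus and dispenses with position-dependent $\epsilon_i$ — the chief source of looseness in the paper's constant — and it anticipates the cleaner coupled-potential method that later became standard. What the paper's route buys is that the BEC area theorem is available verbatim and the bookkeeping is completely explicit. Two cautions on your sketch: (a) as written, increments of the form ``$\int_0^{x_i}h$'' between adjacent positions do not telescope to $\int_0^{x^*}h$; what you want is a Riemann-sum increment $\approx h(x_i)(x_i-x_{i-1})$, and this is precisely the momentum identity \eqref{equ:momentum} that the paper proves — but deploys only inside the transition-length Lemma~\ref{lem:transitionlength}, not to conclude the threshold bound; (b) the step you acknowledge sweeping under the rug (replacing windowed averages by pointwise values) is exactly where the paper spends Lemma~\ref{lem:spacing} and Lemma~\ref{lem:transitionlength}, and you would additionally still need a forward-DE comparison in the style of Lemma~\ref{lem:stability} and the closing comparison against a flat constellation below $\xunstab(1)$ to convert ``any non-trivial unimodal FP has $\epsilon\gtrsim\epsilon^{\MAPsmall}-O(w^{-1/8})$'' into a bound on the BP threshold itself.
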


{\em Discussion:}
\begin{itemize}
\item[(i)]
The lower bound on $\epsilon^{\BPsmall}(\dl,\dr,L,w)$ is the main result
of this paper. It shows that, up to a term which tends to zero when
$w$ tends to infinity, the threshold of the chain is equal to the MAP
threshold of the underlying ensemble.  The statement in the theorem
is weak. As we discussed earlier, the convergence speed w.r.t. $w$
is most likely exponential.  We prove only a convergence speed of
$w^{-\frac18}$. We pose it as an open problem to improve this bound. We
also remark that, as seen in \eqref{equ:epslimit}, the MAP threshold
of the $(\dl, \dr, L, w)$ ensemble tends to $\epsilon^{\MAPsmall}(\dl,
\dr)$ for any finite $w$ when $L$ tends to infinity, whereas the BP threshold is bounded away from
$\epsilon^{\MAPsmall}(\dl, \dr)$ for any finite $w$.

\item[(ii)] 
We right away prove the upper bound on $\epsilon^{\BPsmall}(\dl, \dr,
L, w)$.  For the purpose of our proof, we first consider a ``circular''
ensemble.  This ensemble is defined in an identical manner as the $(\dl,
\dr, L, w)$ ensemble except that the  positions are now from $0$ to $K-1$
and index arithmetic is performed modulo $K$. This circular ensemble has
design rate equal to $1-\dl/\dr$.  Set $K=2L+w$. The original ensemble
is recovered by setting any consecutive $w-1$ positions to zero. 
We first provide a lower bound on the conditional entropy for
the circular ensemble when transmitting over a BEC with parameter $\epsilon$.
We then show that setting $w-1$ sections to $0$, does not significantly decrease this
entropy. Overall this gives an upper bound on the MAP threshold of the original ensemble. 

It is not hard to see that the BP EXIT curve\footnote{The BP EXIT curve is the
plot of the extrinsic estimate of the BP decoder versus the channel erasure
fraction (see \cite{RiU08} for details).} is the same for both the $(\dl,
\dr)$-regular ensemble and the circular ensemble.  Indeed, the forward DE (see
Definition~\ref{def:forwardDE}) converges to the same fixed-point for both
ensembles. Consider the $(\dl, \dr)$-regular ensemble and let $\epsilon \in
[\epsilon^{\MAPsmall}(\dl,\dr), 1]$.  The conditional entropy when transmitting
over a BEC with parameter $\epsilon$ is at least equal to $1-\dl/\dr$ minus the
area under the BP EXIT curve between $[\epsilon, 1]$ (see Theorem 3.120 in
\cite{RiU08}).  Call this area $A(\epsilon)$.  Here, the entropy is normalized
by $K M$, where $K$ is the length of the circular ensemble and $M$ denotes the
number of variable nodes per section.  Assume now that we set $w-1$ consecutive
sections of the circular ensemble to $0$ in order to recover the original
ensemble.  As a consequence, we ``remove'' an entropy (degrees of freedom) of
at most $(w-1)/K$ from the circular system. The remaining entropy is therefore
positive (and hence we are above the MAP threshold of the circular ensemble) as
long as $1-\dl/\dr-(w-1)/K-A(\epsilon) > 0$. Thus the MAP threshold of the
circular ensemble is given by the supremum over all $\epsilon$ such that
$1-\dl/\dr-(w-1)/K-A(\epsilon) \leq 0$.  Now note that
$A(\epsilon^{\MAPsmall}(\dl,\dr))=1-\dl/\dr$, so that the above condition becomes
$A(\epsilon^{\MAPsmall}(\dl,\dr))-A(\epsilon) \leq (w-1)/K$.  But the BP EXIT
curve is an increasing function in $\epsilon$ so that
$A(\epsilon^{\MAPsmall}(\dl,\dr))-A(\epsilon) >
(\epsilon-\epsilon^{\MAPsmall}(\dl,\dr)) (1\!-\!(1\!-\!x^{\MAPsmall}(\dl,
\dr))^{\dr\!-\!1})^{\dl}$. We get the stated upper bound on
$\epsilon^{\MAPsmall}(\dl, \dr, L, w)$ by lower bounding $K$ by $2 L$.


\item[(iii)]
According to Lemma~\ref{lem:designrate}, $\lim_{L \rightarrow \infty}
\lim_{M \rightarrow \infty} R(\dl, \dr, L, w) = 1 - \frac{\dl}{\dr}$. This
immediately implies the limit (\ref{equ:ratelimit}). The limit for
the BP threshold $\epsilon^{\BPsmall}(\dl, \dr, L, w)$ follows from
(\ref{equ:epslowerbound}).

\item[(iv)] According to Lemma~\ref{lem:largethreshold}, the MAP threshold
$\epsilon^{\MAPsmall}(\dl, \dr)$ of the underlying ensemble quickly
approaches the Shannon limit. We therefore see that convolutional-like
ensembles provide a way of approaching capacity with low
complexity.  E.g., for a rate equal to one-half, we get 
$\epsilon^{\MAPsmall}(\dl=3, \dr=6)=0.48815$, 
$\epsilon^{\MAPsmall}(\dl=4, \dr=8)=0.49774$, 
$\epsilon^{\MAPsmall}(\dl=5, \dr=10)=0.499486$, 
$\epsilon^{\MAPsmall}(\dl=6, \dr=12)=0.499876$, 
$\epsilon^{\MAPsmall}(\dl=7, \dr=14)=0.499969$.
\end{itemize}

\subsection{Proof Outline}\label{sec:proofoutline}
The proof of the lower bound in Theorem~\ref{the:main} is long.  We therefore break it
up into several steps.  Let us start by discussing each of the steps
separately. This hopefully clarifies the main ideas. But it will also be
useful later when we discuss how the main statement can potentially be
generalized. We will see that some steps are quite generic, whereas other
steps require a rather detailed analysis of the particular chosen system.

\begin{itemize}
\item[(i)] {\em Existence of FP:} 
``The'' key to the proof is to show the existence
of a unimodal FP $(\epsilon^*, \x^*)$ which takes on
an essentially constant value in the ``middle'', has a fast
``transition'', and has arbitrarily small values towards the
boundary (see Definition~\ref{def:fixedpoints}). Figure~\ref{fig:accordeonfp}
\begin{figure}[htp]
\begin{centering}
\input{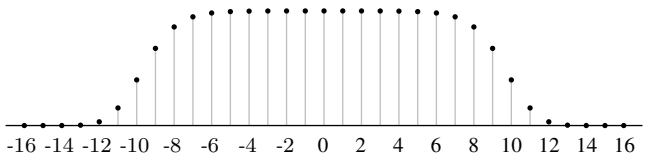}
\caption{Unimodal FP of the $(\dl=3, \dr=6, L=16, w=3)$ 
ensemble with small values towards the boundary, a fast transition,
and essentially constant values in the middle.}
\label{fig:accordeonfp}
\end{centering}
\end{figure}
shows a typical such example.  We will see later that the
associated channel parameter of such a FP, $\epsilon^*$, is necessarily very close to
$\epsilon^{\MAPsmall}(\dl, \dr)$.

\item[(ii)] {\em Construction of EXIT Curve:}
Once we have established the existence of such a special FP we construct
from it a whole {\em FP family}.  The elements in this family of FPs
look essentially identical. They differ only in their 
``width.'' This width changes continuously,
initially being equal to roughly $2L+1$ until
it reaches zero. As we will see, this family ``explains'' how the
overall constellation (see Definition~\ref{def:fixedpoints}) collapses once the channel parameter has reached a
value close to $\epsilon^{\MAPsmall}(\dl, \dr)$: starting from the two
boundaries, the whole constellation ``moves in'' like a wave until the
two wave ends meet in the middle. The EBP EXIT curve is a projection of
this wave (by computing the EXIT value of each member of the family).
If we look at the EBP EXIT curve, this phenomenon corresponds to the
very steep vertical transition close to $\epsilon^{\MAPsmall}(\dl, \dr)$.

Where do the wiggles in the EBP EXIT curve come from? Although the
various FPs look ``almost'' identical (other than the place of the
transition) they are not exactly identical. The $\epsilon$ value changes
very slightly (around $\epsilon^*$).  The larger we choose $w$ the smaller we can make the
changes (at the cost of a longer transition).

When we construct the above family of FPs it is mathematically convenient
to allow the channel parameter $\epsilon$ to depend on the position. Let
us describe this in more detail.

We start with a special FP as depicted in Figure~\ref{fig:accordeonfp}.
From this we construct a smooth family $(\underline{\epsilon}(\alpha),
\x(\alpha))$, parameterized by $\alpha$, $\alpha \in [0, 1]$, where
$\x(1)=\underline{1}$ and where $\x(0)=\underline{0}$.  The components
of the vector $\underline{\epsilon}(\alpha)$ are essentially constants
(for $\alpha$ fixed).  The possible exceptions are components 
towards the boundary. We allow those components to take on larger (than
in the middle) values.

From the family $(\underline{\epsilon}(\alpha), \x(\alpha))$ we derive
an EBP EXIT curve and we then measure the area enclosed by this curve.
We will see that this area is close to the design rate.  From this we will
be able to conclude that $\epsilon^* \approx \epsilon^{\MAPsmall}(\dl,
\dr)$.

\item[(iii)] {\em Operational Meaning of EXIT Curve:} 
We next show that the EBP EXIT curve constructed in step (ii) has an
operational meaning. More precisely, we show that if we pick a channel
parameter sufficiently below $\epsilon^*$ then forward DE converges to
the trivial FP.

\item[(iv)] {\em Putting it all Together:}
The final step is to combine all the constructions and bounds discussed
in the previous steps to show that $\epsilon^{\BPsmall}(\dl, \dr, w,
L)$ converges to $\epsilon^{\MAPsmall}(\dl, \dr)$ when $w$ and $L$
tend to infinity.  \end{itemize}

\section{Proof of Theorem~\ref{the:main}}\label{sec:proof}
This section contains the technical details of Theorem~\ref{the:main}.
We accomplish the proof by following the steps outlined in the previous
section. To enhance the readability of this section we have moved some
of the long proofs to the appendices.

\subsection{Step (i): Existence of FP}
\begin{definition}[Density Evolution of $(\dl, \dr, L, w)$ Ensemble]
Let $x_i$, $i\in \integers$, denote the average erasure probability which
is emitted by variable nodes at position $i$. For $i \not \in [-L, L]$
we set $x_i=0$.
For $i \in [-L, L]$ the FP condition implied by DE is
\begin{align}\label{eq:densevolxi}
x_i 
& = \epsilon\Bigl(1-\frac{1}{w} \sum_{j=0}^{w-1} \bigl(1-\frac{1}{w} \sum_{k=0}^{w-1} x_{i+j-k} \bigr)^{\dr-1} \Bigr)^{\dl-1}.
\end{align}
If we define
\begin{align}\label{equ:fi}
f_i & =  \Bigl( 1- \frac1w\sum_{k=0}^{w-1}x_{i-k} \Bigr)^{\dr-1},
\end{align}
then (\ref{eq:densevolxi}) can be rewritten as
\begin{align*}
x_i & = \epsilon \Bigl(1-\frac{1}{w} \sum_{j=0}^{w-1} f_{i+j}\Bigr)^{\dl-1}.
\end{align*}
In the sequel it will be handy to have an even shorter form for the right-hand
side of (\ref{eq:densevolxi}).  Therefore, let
\begin{align}
g(x_{i-w+1}, \dots, x_{i+w-1}) &= \Bigl(1-\frac{1}{w} \sum_{j=0}^{w-1} f_{i+j}\Bigr)^{\dl-1}.
\end{align}
Note that 
\begin{align*}
g(x, \dots, x) = (1-(1-x)^{\dr-1})^{\dl-1},
\end{align*}
where the right-hand side represents DE for the underlying $(\dl,
\dr)$-regular ensemble.

The function $f_i(x_{i-w+1}, \dots, x_{i})$ defined in
(\ref{equ:fi}) is decreasing in all its arguments $x_j \in
[0, 1]$, $j=i-w+1, \dots, i$. In the sequel, it is understood
that $x_i \in [0,1]$. The channel parameter $\epsilon$ is allowed
to take values in $\mathbb{R}^+$.
\end{definition}

\begin{definition}[FPs of Density Evolution]\label{def:fixedpoints}
Consider DE for the $(\dl, \dr, L, w)$ ensemble.
Let $\x=(x_{-L}, \dots, x_{L})$. We call $\x$ the {\em constellation}. 
We say that $\x$ forms a FP
of DE with parameter $\epsilon$ if $\x$ fulfills (\ref{eq:densevolxi})
for $i \in [-L, L]$.  As a short hand we then say that $(\epsilon, \x)$
is a FP.  We say that $(\epsilon, \x)$ is a {\em non-trivial}
FP if $\x$ is not identically zero.
More generally, let 
$$
\underline{\epsilon} = (\epsilon_{-L}, \dots, \epsilon_0,\dots, \epsilon_L),  
$$
where $\epsilon \in \mathbb{R}^+$ for $i\in [-L,L]$. 
We say that $(\underline{\epsilon}, \x)$ forms a FP if 
\begin{align}\label{eq:densevolxigeneral}
x_i = \epsilon_i g(x_{i-w+1}, \dots, x_{i+w-1}), \quad i\in [-L,L].
\end{align}\qed
\end{definition}

\begin{definition}[Forward DE and Admissible Schedules]\label{def:forwardDE} 
Consider DE for the $(\dl, \dr, L, w)$ ensemble.  More
precisely, pick a parameter $\epsilon \in [0,1]$. Initialize 
$\x^{(0)}=(1, \dots, 1)$. Let $\x^{(\ell)}$ be the result of
$\ell$ rounds of DE. I.e., $\x^{(\ell+1)}$ is generated from
$\x^{(\ell)}$ by applying the DE equation \eqref{eq:densevolxi} to each
section $i\in [-L,L]$,
\begin{align*}
x_i^{(\ell+1)} & = \epsilon g(x_{i-w+1}^{(\ell)},\dots,x_{i+w-1}^{(\ell)}).
\end{align*}
We call this the {\em parallel} schedule. 

More generally, consider a schedule in which in each step $\ell$
an arbitrary subset of the sections is updated, constrained only by
the fact that every section is updated in infinitely many steps. We
call such a schedule {\em admissible}. Again, we call $\x^{(\ell)}$
the resulting sequence of constellations.  

In the sequel we will refer to this procedure as {\em forward} DE
by which we mean the appropriate {\em initialization} and the
subsequent DE procedure. E.g., in the next lemma we will discuss
the FPs which are reached under forward DE.  These FPs have special
properties and so it will be convenient to be able to refer to them
in a succinct way and to be able to distinguish them from general
FPs of DE.  
\end{definition}

\begin{lemma}[FPs of Forward DE]\label{lem:forwardDE} 
Consider forward DE for the $(\dl, \dr, L, w)$ ensemble.  Let
$\x^{(\ell)}$ denote the sequence of constellations under an admissible
schedule.  Then $\x^{(\ell)}$ converges to a FP of DE and this
FP is independent of the schedule.  In particular, it is equal
to the FP of the parallel schedule.
\end{lemma}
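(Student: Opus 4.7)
The plan is to exploit monotonicity of the DE operator together with the fact that the ``all-ones'' initialization provides a decreasing majorant. Let
\begin{align*}
F_i(\x) = \epsilon\, g(x_{i-w+1},\dots,x_{i+w-1}), \quad i \in [-L,L],
\end{align*}
with the convention $x_j=0$ for $j\notin[-L,L]$, and let $F=(F_i)$ denote the parallel update map. The key observation is that $g$ (and hence each $F_i$) is non-decreasing in each of its arguments: the function $f_i$ from \eqref{equ:fi} is non-increasing in each $x_j$, so $1-\tfrac1w\sum_j f_{i+j}$ is non-decreasing in each $x_j$, and raising to the nonnegative power $\dl-1$ preserves monotonicity. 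Moreover $F$ maps $[0,1]^{2L+1}$ into itself since $\epsilon\leq 1$ and $g\leq 1$.

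First I would analyze the parallel schedule. Starting from $\x^{(0)}=(1,\dots,1)$ one has $\x^{(1)}=F(\x^{(0)})\leq \x^{(0)}$, and by induction together with monotonicity $\x^{(\ell+1)}\leq \x^{(\ell)}$ for every $\ell$. Being bounded below by zero, the sequence converges entrywise to some $\x^\star$, which is a fixed point of $F$ by continuity. Furthermore, if $\y\in[0,1]^{2L+1}$ is any fixed point then $\y=F^{\ell}(\y)\leq F^{\ell}(\underline{1})=\x^{(\ell)}$, so $\y\leq\x^\star$; hence $\x^\star$ is the largest fixed point of DE in $[0,1]^{2L+1}$.

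Next I would consider an arbitrary admissible schedule producing $\tilde\x^{(\ell)}$, with $S_\ell\subseteq[-L,L]$ the subset of sections updated at step $\ell$. The lower bound $\tilde\x^{(\ell)}\geq\x^\star$ follows by a direct induction: if $i\in S_\ell$ then $\tilde x_i^{(\ell)}=F_i(\tilde\x^{(\ell-1)})\geq F_i(\x^\star)=x^\star_i$ by monotonicity and the inductive hypothesis, while if $i\notin S_\ell$ then $\tilde x_i^{(\ell)}=\tilde x_i^{(\ell-1)}\geq x^\star_i$. For the matching upper bound I would prove by induction on $n$ that there exists $\ell(n)$ with $\tilde\x^{(\ell)}\leq\x^{(n)}$ for all $\ell\geq\ell(n)$. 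The base case $n=0$ is immediate since $\tilde\x^{(\ell)}\leq\underline{1}$. For the inductive step, admissibility guarantees that every section is updated at least once during some window $(\ell(n),\ell(n+1)]$; at the first such update of section $i$, monotonicity gives $\tilde x_i^{(\ell')}=F_i(\tilde\x^{(\ell'-1)})\leq F_i(\x^{(n)})=x^{(n+1)}_i$, and since all subsequent updates at $i$ still take inputs bounded above by $\x^{(n)}$, the value stays $\leq x^{(n+1)}_i$ thereafter. Letting $\ell\to\infty$ yields $\limsup_\ell \tilde\x^{(\ell)}\leq\x^{(n+1)}$ for every $n$, and hence $\limsup_\ell \tilde\x^{(\ell)}\leq\x^\star$, combining with the lower bound to give $\lim_\ell \tilde\x^{(\ell)}=\x^\star$.

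The main obstacle is the bookkeeping in the upper bound argument: one must exploit both the monotonicity of $F$ and the ``every section updated infinitely often'' clause of admissibility simultaneously, being careful that once a section has been refreshed after the inductive time $\ell(n)$, all subsequent values at that section remain trapped below the corresponding parallel iterate. Everything else is a routine application of Tarski-type monotone convergence on a compact interval.
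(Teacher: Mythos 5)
Your proof is correct, and it handles the schedule-independence claim more explicitly than the paper does. The paper's argument for schedule independence is sketched at a high level: it invokes the principle that "for any two admissible schedules the corresponding computation trees are nested," concludes that schedule 2 is at least as good as schedule 1 for appropriate time offsets, and then appeals to symmetry. Your route is different in execution: you first characterize the parallel-schedule limit $\x^\star$ as the \emph{largest} fixed point (using $\y \leq F^\ell(\underline{1})$ for any FP $\y$), and then sandwich the arbitrary-schedule iterates between $\x^\star$ (via a simple induction on the update rule) and $\x^{(n)}$ for every $n$ (via the windowing argument using the "every section updated infinitely often" clause). The net effect is a fully self-contained, elementary monotone-operator proof that avoids the notion of computation graphs altogether; what the paper buys with its phrasing is brevity and a pointer to the standard message-passing "nested unrolling" intuition, while what your version buys is a rigorous, inspection-verifiable argument. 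One small presentational point: the conclusion that every section is updated during \emph{some finite} window $(\ell(n),\ell(n+1)]$ requires that the index set $[-L,L]$ be finite, which it is here; it is worth saying so since the same statement over an infinite chain would need more care.
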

\begin{proof}
Consider first the parallel schedule.  We claim that the vectors
$\x^{(\ell)}$ are ordered, i.e., $\x^{(0)}\geq \x^{(1)}\geq \dots
\geq \underline{0}$ (the ordering is pointwise).  This is true since
$\x^{(0)}=(1, \dots, 1)$, whereas $\x^{(1)}\leq(\epsilon, \dots, \epsilon)
\leq (1, \dots, 1)=\x^{(0)}$. It now follows by induction on the number
of iterations that the sequence $\x^{(\ell)}$ is monotonically decreasing.

Since the sequence $\x^{(\ell)}$ is also bounded from below it
converges. Call the limit $\x^{(\infty)}$.  Since the DE equations
are continuous it follows that $\x^{(\infty)}$ is a fixed point of DE
\eqref{eq:densevolxi} with parameter $\epsilon$. We call $\x^{(\infty)}$
the forward FP of DE.

That the limit (exists in general and that it) does not depend on the
schedule follows by standard arguments and we will be brief.  The idea
is that for any two admissible schedules the corresponding computation
trees are nested. This means that if we look at the computation graph
of schedule let's say 1 at time $\ell$ then there exists a time $\ell'$
so that the computation graph under schedule $2$ is a superset of the
first computation graph. To be able to come to this conclusion we have
crucially used the fact that for an admissible schedule every section is
updated infinitely often. This shows that the performance under schedule
2 is at least as good as the performance under schedule 1.  The converse
claim, and hence equality, follows by symmetry.
\end{proof}

\begin{definition}[Entropy]\label{def:entropy}
Let $\x$ be a constellation.  We define the (normalized) {\em entropy} of $\x$ to be
$$\chi(\x)  = \frac1{2L+1}\sum_{i=-L}^L x_i.$$ 
\end{definition} 
{\em Discussion:} More precisely, we should call $\chi(\x)$ the average {\em 
message} entropy. But we will stick with the shorthand entropy in the
sequel.

\begin{lemma}[Nontrivial FPs of Forward DE]\label{lem:nontrivialtwosided}
Consider the ensemble $(\dl, \dr, L, w)$.  Let $\x$ be the FP of forward
DE for the parameter $\epsilon$.  
For $\epsilon \in (\frac{\dl}{\dr}, 1]$ and $\chi \in [0, \epsilon^{\frac{1}{\dl-1}}(\epsilon-\frac{\dl}{\dr}))$, if 
\begin{align}\label{equ:chibound}
L  &  \geq
\frac{w}{2(\frac{\dr}{\dl}(\epsilon-\chi \epsilon^{-\frac{1}{\dl-1}})-1)}
\end{align}
then $\chi(\x) \geq \chi$.
\end{lemma}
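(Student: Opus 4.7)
The plan is to establish the lower bound on $\chi(\x)$ by integrating the density-evolution fixed-point equation over positions and applying elementary convexity estimates, with a boundary correction of size $O(w/L)$.

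First, I raise both sides of the fixed-point equation~\eqref{eq:densevolxi} to the power $1/(\dl-1)$ and sum over $i\in[-L,L]$. Setting $T=2L+1$, this yields the integrated identity
\begin{equation*}
\sum_{i=-L}^L x_i^{1/(\dl-1)} \;=\; \epsilon^{1/(\dl-1)}\Bigl(T - \sum_k \alpha_k f_k\Bigr),
\end{equation*}
where the weights $\alpha_k\in[0,1]$ sum to $T$, equal $1$ on the interior window $[-L+w-1,L]$, and decay linearly to $0$ in a boundary strip of total weight $\sum_k(1-\alpha_k)=w-1$. In parallel, the check-side sum $\sum_{k=-L}^{L+w-1}(1-\bar x_k) = (T+w-1) - T\chi(\x)$ connects the mean of $1-\bar x_k$ directly to $\chi(\x)$.

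Second, I combine two pointwise inequalities on $[0,1]$: the bound $x^{1/(\dl-1)}\geq x$ for $\dl\geq 2$, yielding $\sum_i x_i^{1/(\dl-1)}\geq T\chi(\x)$ on the variable side; and Jensen's inequality applied to the concave map $y\mapsto y^{1/(\dr-1)}$ on the check side, via $f_k^{1/(\dr-1)} = 1-\bar x_k$, giving the lower bound
\begin{equation*}
\sum_k f_k \;\geq\; (T+w-1)\Bigl(1 - \tfrac{T\chi(\x)}{T+w-1}\Bigr)^{\dr-1}.
\end{equation*}
Using $f_k\leq 1$ and $\sum_k(1-\alpha_k)=w-1$, this transfers to $\sum_k\alpha_k f_k \geq \sum_k f_k - (w-1)$.

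Third, substituting into the integrated identity and setting $\zeta:=T\chi(\x)/(T+w-1)$ produces an inequality of the form
\begin{equation*}
\zeta \;\leq\; \epsilon^{1/(\dl-1)}\bigl(1-(1-\zeta)^{\dr-1}\bigr) + \tfrac{(w-1)\epsilon^{1/(\dl-1)}}{T+w-1}.
\end{equation*}
Linearising via Bernoulli's inequality $(1-\zeta)^{\dr-1}\geq 1-(\dr-1)\zeta$ and rearranging in terms of $\chi(\x)$ then gives an inequality equivalent to
\begin{equation*}
\tfrac{\dr}{\dl}\bigl(\epsilon - \chi(\x)\,\epsilon^{-1/(\dl-1)}\bigr) \;\leq\; 1 + \tfrac{w}{2L}.
\end{equation*}
Under the hypothesized lower bound $L\geq w/[2(\tfrac{\dr}{\dl}(\epsilon-\chi\epsilon^{-1/(\dl-1)})-1)]$, this forces $\chi(\x)\geq \chi$, concluding the proof.

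The main obstacle is ensuring the Jensen and Bernoulli applications combine to yield a lower bound on $\chi(\x)$ rather than an upper bound: the concavity of $y^{1/(\dr-1)}$ must be used on the unweighted check-side sum (to lower-bound $\sum_k f_k$), whereas the variable-side inequality $x\leq x^{1/(\dl-1)}$ goes the opposite way. The $\dl/\dr$ threshold in the hypothesis then emerges naturally from linearising the check-side map at the origin, where the factor $\dr-1$ must overcome a corresponding factor from the variable side; correspondingly, the condition $\epsilon>\dl/\dr$ is exactly what makes the right-hand side of the final inequality strictly greater than $1$, so that the $w/(2L)$ correction has room to be non-trivial.
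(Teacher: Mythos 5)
There is a genuine gap, and it is structural rather than computational: your argument uses only the fixed-point property \eqref{eq:densevolxi}, but the trivial constellation $\x=\underline{0}$ is also a fixed point of these equations for every $\epsilon$. Any chain of inequalities derived solely from the FP equation is therefore satisfied by $\x=\underline{0}$, so it can never force $\chi(\x)\geq \chi>0$. You can see this concretely in your own chain: after the Jensen and boundary steps, the clean form of your constraint is $\zeta \leq \epsilon^{1/(\dl-1)}\bigl(1-(1-\zeta)^{\dr-1}\bigr)$ with $\zeta = T\chi(\x)/(T+w-1)$, which holds at $\zeta=0$; applying Bernoulli's inequality then gives $\zeta\bigl(1-\epsilon^{1/(\dl-1)}(\dr-1)\bigr)\leq O(w/L)$, which is vacuous whenever $\epsilon^{1/(\dl-1)}(\dr-1)\geq 1$ (the relevant regime) and otherwise yields an \emph{upper} bound on $\chi(\x)$ — the opposite direction. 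The claimed rearrangement to $\frac{\dr}{\dl}\bigl(\epsilon-\chi(\x)\epsilon^{-1/(\dl-1)}\bigr)\leq 1+\frac{w}{2L}$ does not follow from these steps, and it cannot: evaluated at $\chi(\x)=0$ it would assert that the trivial FP does not exist once $\epsilon>\frac{\dl}{\dr}(1+\frac{w}{2L})$, which is false. The factor $\dr/\dl$ in the lemma does not come from linearising DE at the origin (that produces $\dr-1$); it comes from the design rate $1-\dl/\dr$.

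What is missing is precisely the ingredient that distinguishes the \emph{forward} DE fixed point from an arbitrary one, namely its operational meaning as the $M\to\infty$ BP decoding performance. The paper's proof is a counting argument: the design rate of Lemma~\ref{lem:designrate} lower-bounds the true rate, so the system has at least $(2L+1)R(\dl,\dr,L,w)M$ degrees of freedom; transmission over $\BEC(\epsilon)$ resolves in expectation at most $(2L+1)(1-\epsilon)M$ of them, so MAP decoding leaves at least $R(\dl,\dr,L,w)-1+\epsilon$ unresolved per position; since BP cannot beat MAP and the BP node-erasure probability at position $i$ is $\epsilon^{-\frac{1}{\dl-1}}x_i^{\frac{\dl}{\dl-1}}$, one gets $\epsilon^{-\frac{1}{\dl-1}}\frac{1}{2L+1}\sum_i x_i^{\frac{\dl}{\dl-1}}\geq R-1+\epsilon$, and with $x_i\geq x_i^{\frac{\dl}{\dl-1}}$ this yields $\chi(\x)\geq\epsilon^{\frac{1}{\dl-1}}(R-1+\epsilon)$, from which \eqref{equ:chibound} follows by inserting the design-rate formula. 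To repair your proof you would need to inject comparable information — e.g., monotone comparison with the all-ones initialization or the rate/area argument — not sharper manipulation of the FP equation itself.
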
 
\begin{proof} 
Let $R(\dl, \dr, L, w)$ be the design rate of the $(\dl, \dr, L, w)$
ensemble as stated in Lemma~\ref{lem:designrate}.  Note that the design
rate is a lower bound on the actual rate. It follows that the system has
at least $(2 L+1) R(\dl, \dr, L, w) M$ degrees of freedom. If we transmit
over a channel with parameter $\epsilon$ then in expectation at most
$(2 L+1)(1-\epsilon)M$ of these degrees of freedom are resolved. Recall
that we are considering the limit in which $M$ diverges to infinity. 
Therefore we can work with
averages and do not need to worry about the variation of the quantities
under consideration.  It follows that the number of degrees of freedom left
unresolved, measured per position and normalized by $M$, is at least
$(R(\dl, \dr, L, w)-1+\epsilon)$.

Let $\x$ be the forward DE FP corresponding to parameter $\epsilon$.
Recall that $x_i$ is the average message which flows from a variable
at position $i$ towards the check nodes. From this we can compute the
corresponding probability that the node value at position $i$ has not
been recovered.  It is equal to $\epsilon \bigl( \frac{x_i}{\epsilon}
\bigr)^{\frac{\dl}{\dl-1}} = \epsilon^{-\frac{1}{\dl-1}}
x_i^{\frac{\dl}{\dl-1}}$. Clearly, the BP decoder cannot be better than the
MAP decoder. Further, the MAP decoder cannot resolve the unknown degrees of
freedom. It follows that we must have
\begin{align*}
\epsilon^{-\frac{1}{\dl-1}} \frac{1}{2 L+1} \sum_{i=-L}^{L} x_i^{\frac{\dl}{\dl-1}} \geq 
R(\dl, \dr, L, w)-1+\epsilon.
\end{align*}
Note that $x_i \in [0, 1]$ so that 
$x_i \geq x_i^{\frac{\dl}{\dl-1}}$.
We conclude that
\begin{align*}
\chi(\x) =  
\frac{1}{2 L+1} \sum_{i=-L}^{L} x_i \geq 
\epsilon^{\frac{1}{\dl-1}}(R(\dl, \dr, L, w)-1+\epsilon).
\end{align*}
Assume that we want a constellation with entropy at least $\chi$.
Using the expression for $R(\dl, \dr, L, w)$ from Lemma~\ref{lem:designrate},
this leads to the inequality
\begin{align} \label{equ:intermediate}
\epsilon^{\frac{1}{\dl-1}} (-\frac{\dl}{\dr}- 
\frac{\dl}{\dr} \frac{w+1-2\sum_{i=0}^{w} 
\bigl(\frac{i}{w}\bigr)^{\dr}}{2 L+1}
+\epsilon) \geq \chi.
\end{align}
Solving for $L$ and simplifying the inequality by upper bounding
$1-2\sum_{i=0}^{w} \bigl(\frac{i}{w}\bigr)^{\dr}$ by $0$ and lower
bounding $2L+1$ by $2L$ leads to (\ref{equ:chibound}).
\end{proof}

Not all FPs can be constructed by forward DE. In particular, one can only
reach (marginally) ``stable'' FPs by the above procedure.  Recall from
Section~\ref{sec:proofoutline}, step (i), that we want to construct an unimodal
FP which ``explains'' how the constellation collapses. Such a FP is by
its very nature unstable. 

It is difficult to prove the existence of such a FP by direct
methods. We therefore proceed in stages.  We first show the existence of
a ``one-sided'' increasing FP. We then construct the desired unimodal FP
by taking two copies of the one-sided FP, flipping one copy, and gluing these FPs together.  
\begin{definition}[One-Sided Density Evolution] 
Consider the tuple $\x = (x_{-L}, \dots, x_0)$. The
FP condition implied by {\em one-sided} DE is equal to
\eqref{eq:densevolxi} with $x_{i}= 0$  for $ i < -L$ and $x_{i}= x_{0}$
for $i > 0$.  \end{definition}

\begin{definition}[FPs of One-Sided DE]
We say that $\x$ is a one-sided { \em FP} (of DE) with parameter
$\epsilon$ and length $L$ if (\ref{eq:densevolxi}) is fulfilled for $i \in [-L, 0]$, with 
$x_{i}= 0$  for $ i < -L$ and $x_{i}= x_{0}$
for $i > 0$.

In the same manner as we have done this for two-sided FPs, if
$\underline{\epsilon} = (\epsilon_{-L}, \dots, \epsilon_0)$, then we
define one-sided FPs with respect to $\underline{\epsilon}$.

We say that $\x$ is {\em non-decreasing} if $x_i \leq x_{i+1}$ for
$i=-L,\dots, 0$.  \end{definition}

\begin{definition}[Entropy]\label{def:entropyonesided}
Let $\x$ be a one-sided { \em FP}.  We define the (normalized)
{\em entropy} of $\x$ to be $$\chi(\x)  = \frac1{L+1}\sum_{i=-L}^0 x_i.
$$ \end{definition} 

\begin{definition}[Proper One-Sided FPs]
Let $(\epsilon, \x)$ be a {\em non-trivial} and {\em non-decreasing}
one-sided FP.  As a short hand, we then say that $(\epsilon,
\x)$ is a {\em proper one-sided FP}.  \end{definition}
A proper one-sided FP is shown in Figure~\ref{fig:one-sided_fixed_point}.

\begin{definition}[One-Sided Forward DE and Schedules]\label{def:onesidedforwardDE}
Similar to Definition~\ref{def:forwardDE}, one can define the {\em
one-sided forward DE} by initializing all sections with $1$
and by applying DE according to an admissible schedule. 
\end{definition}

\begin{lemma}[FPs of One-Sided Forward DE]\label{lem:nontrivialonesided}
Consider an $(\dl, \dr, L, w)$ ensemble and let $\epsilon \in [0, 1]$.
Let $\x^{(0)}=(1, \dots, 1)$ and let $\x^{(\ell)}$ denote the result of
applying $\ell$ steps of one-sided forward DE according to an admissible
schedule (cf. Definition~\ref{def:onesidedforwardDE}). Then

\begin{itemize}
\item[(i)]
$\x^{(\ell)}$ converges to a limit which is a FP of one-sided
DE. This limit is independent of the schedule and the limit is either
proper or trivial. As a short hand we say that $(\epsilon,\x)$
is a one-sided FP of forward DE.

\item[(ii)]
For $\epsilon \in (\frac{\dl}{\dr}, 1]$ and $\chi \in [0,
\epsilon^{\frac{1}{\dl-1}}(\epsilon-\frac{\dl}{\dr}))$, if $L$ fulfills
(\ref{equ:chibound}) then $\chi(\x) \geq \chi$.
\end{itemize}
\end{lemma}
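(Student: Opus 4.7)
The proof naturally splits along the two claims.

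For part (i), I would mirror the proof of Lemma~\ref{lem:forwardDE}. First consider the parallel schedule starting from $\x^{(0)}=(1,\dots,1)$. The map $g$ is non-decreasing in each of its arguments (since each $f_j$ is decreasing in each $x_{j-k}$, and $g$ is decreasing in each $f$), so by induction on $\ell$ the iterates satisfy $\x^{(\ell+1)}\leq \x^{(\ell)}$ pointwise: at $\ell=0$ this holds because $\x^{(1)}\leq(\epsilon,\dots,\epsilon)\leq(1,\dots,1)$, and the step is immediate from monotonicity of $g$. Boundedness below by $0$ yields convergence to a limit $\x^{(\infty)}$, and continuity of the DE map shows that $\x^{(\infty)}$ is a FP of one-sided DE. A parallel induction shows that every $\x^{(\ell)}$ is non-decreasing in $i$: the one-sided BC ($x_j=0$ for $j<-L$, $x_j=x_0$ for $j>0$) preserves monotonicity under extension, and for $i<i'$ the argument window for $x_{i'}^{(\ell+1)}$ is a right-shift of the one for $x_i^{(\ell+1)}$, whose entries pointwise dominate; monotonicity of $g$ then gives $x_{i'}^{(\ell+1)}\geq x_i^{(\ell+1)}$. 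Taking the limit, $\x^{(\infty)}$ is non-decreasing, so it is either identically zero (trivial) or non-trivial and non-decreasing (proper). Schedule independence follows from the standard nested-computation-tree argument used in Lemma~\ref{lem:forwardDE}.

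For part (ii), my plan is to couple the one-sided forward DE with the two-sided forward DE on $[-L,L]$ (same $\dl,\dr,w,\epsilon$) and then transfer the bound of Lemma~\ref{lem:nontrivialtwosided}. Let $\y^{(\ell)}$ be the two-sided parallel iterates from $\y^{(0)}=(1,\dots,1)$. A simultaneous induction maintains: (a) $\y^{(\ell)}$ is symmetric, $y_i^{(\ell)}=y_{-i}^{(\ell)}$; (b) $\y^{(\ell)}$ is unimodal with peak at $0$, so $y_j^{(\ell)}\leq y_0^{(\ell)}$ for all $j>0$; (c) $x_i^{(\ell)}\geq y_i^{(\ell)}$ for every $i\in[-L,0]$, in particular $x_0^{(\ell)}\geq y_0^{(\ell)}$. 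At each update, combining (b) with $x_0^{(\ell)}\geq y_0^{(\ell)}$ shows that the pinned value $x_0^{(\ell)}$ used by the one-sided DE for positions $j>0$ dominates $y_j^{(\ell)}$; together with (c) on $[-L,0]$ this makes the one-sided argument window pointwise dominate the two-sided one, and monotonicity of $g$ preserves (c). Taking $\ell\to\infty$, $x_i^{(\infty)}\geq y_i^{(\infty)}$ for $i\in[-L,0]$.

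The final step is a bookkeeping computation. By Lemma~\ref{lem:nontrivialtwosided}, if $L$ satisfies (\ref{equ:chibound}) with a target slightly larger than $\chi$ then $\chi(\y^{(\infty)})\geq\chi$, and symmetry of $\y^{(\infty)}$ gives
\begin{equation*}
\sum_{i=-L}^{0}y_i^{(\infty)}=\tfrac{1}{2}\bigl((2L+1)\chi(\y^{(\infty)})+y_0^{(\infty)}\bigr)\geq\tfrac{2L+1}{2}\chi(\y^{(\infty)}).
\end{equation*}
Hence $\chi(\x^{(\infty)})\geq\tfrac{2L+1}{2(L+1)}\chi(\y^{(\infty)})$. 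The correction factor $\tfrac{2L+1}{2(L+1)}$ tends to $1$ and is absorbed into the slack of the two-sided estimate (notably the bound $w+1-2\sum_{i=0}^{w}(i/w)^{\dr}\leq w$, which is already loose by more than this factor), so the same form of condition (\ref{equ:chibound}) delivers $\chi(\x^{(\infty)})\geq\chi$. The main obstacle is the coupling induction: maintaining the three invariants (a)--(c) across a single DE step while correctly matching the \emph{tied} right boundary of the one-sided DE against the symmetric extension of the two-sided DE — this is where the structure of the one-sided BC must be used most delicately.
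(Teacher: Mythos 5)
Your proof of part (i) is the same as the paper's: show the parallel-schedule iterates are pointwise decreasing and non-decreasing in $i$ by monotonicity of $g$, conclude the limit is a FP and either trivial or proper, and invoke the nested-computation-tree argument for schedule-independence. Your proof of part (ii) also follows the paper's route: couple the one-sided forward DE with the two-sided forward DE, maintain by induction that the one-sided values dominate on $[-L,0]$ (using that the one-sided pinned value $x_0^{(\ell)}$ dominates $y_j^{(\ell)}$ for $j>0$, which the two-sided iterate ``sees'' there by symmetry and unimodality), then transfer the entropy bound of Lemma~\ref{lem:nontrivialtwosided}.

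The only place where your argument diverges from the paper's --- and where there is a real gap --- is the final bookkeeping. You drop $y_0^{(\infty)}$ from $S=\sum_{i=-L}^0 y_i^{(\infty)}=\tfrac12\bigl((2L+1)\chi(\y^{(\infty)})+y_0^{(\infty)}\bigr)$ and land on $\chi(\x^{(\infty)})\ge\tfrac{2L+1}{2(L+1)}\chi(\y^{(\infty)})$, which falls \emph{short} of $\chi(\y^{(\infty)})$ by the factor $\tfrac{2L+1}{2L+2}<1$; you then wave at ``slack'' in (\ref{equ:chibound}) to absorb it. That slack is real but was never quantified, so as written the conclusion $\chi(\x)\ge\chi$ does not follow from (\ref{equ:chibound}) with the stated $\chi$. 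The fix is to \emph{not} throw away $y_0^{(\infty)}$: since $\y^{(\infty)}$ is unimodal with peak at $0$, $y_0^{(\infty)}\ge S/(L+1)$, and then
\begin{equation*}
\frac{S}{L+1}-\chi(\y^{(\infty)})=\frac{S}{L+1}-\frac{2S-y_0^{(\infty)}}{2L+1}=\frac{(L+1)y_0^{(\infty)}-S}{(L+1)(2L+1)}\ge 0,
\end{equation*}
so $\chi(\x^{(\infty)})\ge S/(L+1)\ge\chi(\y^{(\infty)})\ge\chi$ with no correction factor at all. This is exactly the paper's closing observation (``the average over $[-L,0]$ is at least the average over $[-L,L]$ because the value at $0$ is maximal''), and it removes the hand-waving.
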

\begin{proof}
The existence of the FP and the independence of the schedule
follows along the same line as the equivalent statement
for two-sided FPs in Lemma~\ref{lem:forwardDE}. We hence skip the details.
Assume that this limit $\x^{(\infty)}$ is non-trivial. We want to show
that it is proper. This means we want to show that it is non-decreasing.
We use induction. The initial constellation is non-decreasing.
Let us now show that this property
stays preserved in each step of DE if we apply a parallel schedule. More
precisely, for any section $i \in [-L,0]$, 
\begin{align*}
x_i^{(\ell+1)} & = \epsilon g(x_{i-w+1}^{(\ell)},\dots,x_{i+w-1}^{(\ell)}) \\
& \stackrel{(a)}{\leq} \epsilon g(x_{i+1-w+1}^{(\ell)},\dots,x_{i+1+w-1}^{(\ell)}) \\
& = x_{i+1}^{(\ell+1)},
\end{align*} 
where $(a)$ follows from the monotonicity of $g(\dots)$ and the induction
hypothesis that $\x^{(\ell)}$ is non-decreasing.  

Let us now show that for $\epsilon \in (\frac{\dl}{\dr}, 1]$ and $\chi
\in [0, \epsilon^{\frac{1}{\dl-1}}(\epsilon-\frac{\dl}{\dr}))$, if $L$
fulfills (\ref{equ:chibound}) then $\chi(\x) \geq \chi$.  First, recall
from Lemma~\ref{lem:nontrivialtwosided} that the corresponding two-sided
FP of forward DE has entropy at least $\chi$ under the stated conditions.
Now compare one-sided and two-sided DE for the same initialization with
the constant value $1$ and the parallel schedule. We claim that for
any step the values of the one-sided constellation at position $i$,
$i \in [-L, 0]$, are larger than or equal to the values of the two-sided
constellation at the same position $i$. To see this we use induction. The
claim is trivially true for the initialization. Assume therefore that
the claim is true at a particular iteration $\ell$. For all points $i
\in [-L, -w+1]$ it is then trivially also true in iteration $\ell+1$,
using the monotonicity of the DE map.  For points $i \in [-w+2, 0]$,
recall that the one sided DE ``sees'' the value $x_0$ for all positions
$x_i$, $i \geq 0$, and that $x_0$ is the largest of all $x$-values.
For the two-sided DE on the other hand, by symmetry, $x_i=x_{-i} \leq
x_0$ for all $i\geq 0$. Again by monotonicity, we see that the desired
conclusion holds.

To conclude the proof: note that if for a unimodal two-sided constellation
we compute the average over the positions $[-L, 0]$ then we get at least
as large a number as if we compute it over the whole length $[-L, L]$. This follows
since the value at position $0$ is maximal.  \end{proof}

\begin{figure}[htp]
\begin{centering}
\input{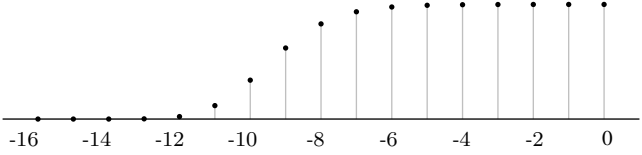}
\caption{A proper one-sided FP $(\epsilon, \x)$ for the ensemble $(\dl=3,\dr=6,L=16,w=3)$,
where $\epsilon=0.488151$.
As we will discuss in Lemma~\ref{lem:maximum}, for sufficiently large
$L$, the maximum value of $\x$, namely $x_0$, approaches the stable value
$\xstable$. Further, as discussed in Lemma~\ref{lem:transitionlength}, the
width of the transition is of order $O( \frac{w}{\delta})$, where $\delta>0$ is
a parameter that indicates which elements of the constellation we want to include
in the transition.
} 
\label{fig:one-sided_fixed_point}
\end{centering}
\end{figure}

Let us establish some basic properties of proper one-sided FPs. 
\begin{lemma}[Maximum of FP]\label{lem:maximum}
Let $(\epsilon, \x)$, $0 \leq \epsilon \leq 1$, be a proper one-sided FP of length $L$.
Then  $\epsilon > \epsilon^{\BPsmall}(\dl, \dr)$ and
$$
\xunstable \le x_{0} \le \xstable,
$$ 
where $\xstable$ and $\xunstable$ denote the stable and unstable
non-zero FP associated to $\epsilon$, respectively.
\end{lemma}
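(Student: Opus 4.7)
The plan is to reduce the lemma to the scalar FP equation for the underlying $(\dl,\dr)$-regular ensemble by focusing on the maximum of the constellation, and then to invoke Lemma~\ref{lem:stableandunstable}.

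First I would show that $h(x_0)\ge 0$, where $h$ is the scalar DE map in \eqref{equ:hfunction}. Since $\x$ is non-decreasing and the one-sided boundary convention sets $x_i=x_0$ for $i>0$, every entry of the window $(x_{-w+1},\dots,x_{w-1})$ entering the DE equation at position $0$ lies in $[0,x_0]$. The map $g$ is non-decreasing in each of its arguments (each $f_i$ is decreasing in its inputs and $g$ is decreasing in each $f_i$, so the composition is non-decreasing). Hence
\[
x_0 = \epsilon\, g(x_{-w+1},\dots,x_{w-1}) \le \epsilon\, g(x_0,\dots,x_0) = \epsilon\bigl(1-(1-x_0)^{\dr-1}\bigr)^{\dl-1},
\]
which is exactly $h(x_0)\ge 0$. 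Because $\x$ is non-trivial and non-decreasing, $x_0>0$.

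Combining this with Lemma~\ref{lem:stableandunstable} handles the bulk of the claim. For $\epsilon<\epsilon^{\BPsmall}(\dl,\dr)$ the lemma and the definition of the BP threshold give $h(x)<0$ on $(0,1]$, contradicting $h(x_0)\ge 0$. For $\epsilon>\epsilon^{\BPsmall}(\dl,\dr)$, the sign analysis of $h$ sketched in the discussion following Lemma~\ref{lem:stableandunstable} identifies $\{x\in(0,1]:h(x)\ge 0\}$ with the closed interval $[\xunstable,\xstable]$, immediately yielding the sandwich $\xunstable\le x_0\le \xstable$.

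The only genuine obstacle is excluding the threshold case $\epsilon=\epsilon^{\BPsmall}(\dl,\dr)$. At threshold the only admissible positive value is the coalesced root $x_0=x^{\BPsmall}<1$, and the displayed inequality must then hold with equality. I would exploit this by propagation: strict monotonicity of $g$ in each coordinate on the interior (a routine chain-rule check from the definitions of $f_i$ and $g$, using $x_0<1$ to ensure no factor vanishes) forces $x_j=x_0$ for every $j\in[-w+1,w-1]$. Applying the same argument shifted to position $-1$, then $-2$, and iterating leftwards yields $x_i=x_0$ for every $i\in[-L,0]$. But then the FP equation at position $-L$ has at least one boundary entry $x_j=0$ with $j<-L$ in its window (this is where I use $w\ge 2$), so strict monotonicity gives $x_{-L}<\epsilon\, g(x_0,\dots,x_0)=x_0$, contradicting $x_{-L}=x_0>0$. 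Hence $\epsilon>\epsilon^{\BPsmall}(\dl,\dr)$, completing the proof. The main nuisance is verifying strict coordinatewise monotonicity of $g$; everything else follows mechanically from the scalar picture already packaged in Lemma~\ref{lem:stableandunstable}.
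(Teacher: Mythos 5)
Your argument is correct and follows the same core approach as the paper: bound the FP equation at position $0$ by replacing every window entry with $x_0$ (using monotonicity of $g$ and the non-decreasing property of $\x$), which yields $h(x_0)\ge 0$, and then read off the conclusion from the scalar picture in Lemma~\ref{lem:stableandunstable}. In fact your writeup is more complete than the paper's: the paper dispatches $\epsilon<\epsilon^{\BPsmall}$ and $x_0\notin[\xunstable,\xstable]$ exactly as you do, but for the boundary case $\epsilon=\epsilon^{\BPsmall}$ it only remarks that ``a slightly more careful analysis shows $\epsilon\neq\epsilon^{\BPsmall}$'' and skips the details. Your propagation argument -- using strict positivity of $\partial g/\partial x_l$ on the open cube (which holds because $0<x_0<1$ forces both $(1-\tfrac1w\sum_j f_j)^{\dl-2}>0$ and some $(1-\tfrac1w\sum_k x_{j-k})^{\dr-2}>0$) to force $x_i=x_0$ for all $i\in[-L,0]$, and then contradicting the equation at $i=-L$ where the window necessarily contains a zero entry -- is a valid way to fill that gap.
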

\begin{proof}
We start by proving that $\epsilon \geq \epsilon^{\BPsmall}(\dl, \dr)$.
Assume to the contrary that $\epsilon < \epsilon^{\BPsmall}(\dl, \dr)$.
Then
\begin{align*}
x_{0} & = \epsilon g(x_{-w+1}, \dots, x_{w-1}) \leq 
\epsilon g(x_{0}, \dots, x_{0})
< x_0,
\end{align*}
a contradiction. Here, the last step follows since $\epsilon <
\epsilon^{\BPsmall}(\dl, \dr)$ and $0 < x_0 \leq 1$.

Let us now consider the claim that $\xunstable \le x_{0} \le \xstable$.
The proof follows along a similar line of arguments.  Since
$\epsilon^{\BPsmall}(\dl, \dr) \leq \epsilon \leq 1$, both $\xstable$
and $\xunstable$ exist and are strictly positive.  Suppose that $x_{0}
> \xstable$ or that $x_{0} < \xunstable$. Then
\begin{align*}
x_{0} & = \epsilon g(x_{-w+1}, \dots, x_{w-1}) \le \epsilon g(x_{0}, \dots, x_{0}) < x_0,
\end{align*}
a contradiction.

A slightly more careful analysis shows that $\epsilon \neq \epsilon^{\BPsmall}$,
so that in fact we have strict inequality, namely $\epsilon > \epsilon^{\BPsmall}(\dl, \dr)$.
We skip the details.
\end{proof}

\begin{lemma}[Basic Bounds on FP]\label{lem:avgprop}
Let $(\epsilon,\x)$ be a proper one-sided FP of length $L$. 
Then for all $i\in [-L,0]$, 
\begin{align*}
& \text{(i)}\,\,\, x_i  \leq  
\epsilon (1-(1-\frac1{w^2}\sum_{j, k=0}^{w-1}x_{i+j-k})^{\dr-1})^{\dl-1}, \\
& \text{(ii)}\,\,\, x_i \leq \epsilon\Big(\frac{\dr-1}{w^2}\sum_{j, k=0}^{w-1}
x_{i+j-k} \Big)^{\dl-1}, \\
& \text{(iii)}\,\,\, x_i  \geq   \epsilon \Big(\frac{1}{w^2}\sum_{j, k=0}^{w-1}x_{i+j-k} \Big)^{\dl-1}, \nonumber \\
& \text{(iv)}\,\,\, x_i  \geq \nonumber \\ 
& \epsilon \Big(\Big(1 - \frac1w\sum_{k=0}^{w-1}x_{i+w-1-k}\Big)^{\dr-2}\frac{\dr-1}{w^2}
\sum_{j, k=0}^{w-1}x_{i+j-k} \Big)^{\dl-1}. \nonumber \\ 
\end{align*}
\end{lemma}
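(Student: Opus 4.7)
The plan is to start from the FP equation~\eqref{eq:densevolxi} and derive each of the four bounds by successively relaxing it with elementary inequalities. Writing $u_j = \frac{1}{w}\sum_{k=0}^{w-1} x_{i+j-k}$ for convenience, the FP equation reads
\begin{align*}
x_i = \epsilon\Bigl(1 - \frac{1}{w}\sum_{j=0}^{w-1}(1-u_j)^{\dr-1}\Bigr)^{\dl-1},
\end{align*}
and the four claimed bounds are all obtained by bounding the bracketed quantity above/below.

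For (i), the function $u \mapsto (1-u)^{\dr-1}$ is convex on $[0,1]$ (since $\dr \geq 2$), so Jensen's inequality gives $\frac{1}{w}\sum_j (1-u_j)^{\dr-1} \geq (1-\frac{1}{w}\sum_j u_j)^{\dr-1}$. Substituting and noting that $\frac{1}{w}\sum_j u_j = \frac{1}{w^2}\sum_{j,k} x_{i+j-k}$ yields (i). For (ii), apply Bernoulli's inequality $(1-v)^{\dr-1} \geq 1 - (\dr-1)v$ to the inner expression in (i); this turns $1 - (1-v)^{\dr-1}$ into an upper bound of $(\dr-1)v$ and produces (ii) immediately.

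For (iii) and (iv), I use the telescoping identity
\begin{align*}
1 - (1-u)^{\dr-1} = u \sum_{l=0}^{\dr-2}(1-u)^l.
\end{align*}
For (iii), drop all but the $l=0$ term (or equivalently observe that $(1-u)^{\dr-1} \leq 1-u$ for $u \in [0,1]$, $\dr \geq 2$) to conclude $1-(1-u_j)^{\dr-1} \geq u_j$, and then average over $j$ to obtain $\frac{1}{w^2}\sum_{j,k} x_{i+j-k}$ inside the outer exponent. For (iv), keep the full sum but lower bound each term $(1-u_j)^l$ by $(1-u_j)^{\dr-2}$ for $l \leq \dr-2$, giving $1-(1-u_j)^{\dr-1} \geq (\dr-1)\, u_j (1-u_j)^{\dr-2}$. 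Now invoke the fact that $(\epsilon,\x)$ is \emph{proper}, i.e.\ the constellation is non-decreasing: this forces $u_j$ to be non-decreasing in $j$, so $(1-u_j)^{\dr-2} \geq (1-u_{w-1})^{\dr-2}$. Pulling this common factor outside the average over $j$ and using $u_{w-1} = \frac{1}{w}\sum_k x_{i+w-1-k}$ produces exactly (iv).

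The calculations are routine; the only place where any care is needed is (iv), where the lower bound is non-trivial precisely because it exploits a structural property (monotonicity of $\x$) that is not present for arbitrary FPs. This is the one step in the chain that would fail without the ``proper'' assumption, and it is the reason (iv) is stronger than (iii) — the factor $(1-\frac{1}{w}\sum_k x_{i+w-1-k})^{\dr-2}$ reflects the rightmost (largest) local average, which by monotonicity is the worst case for the inner exponentiation. Otherwise, the arithmetic is entirely mechanical and follows the same pattern as the analogous bounds for the underlying $(\dl,\dr)$-regular ensemble.
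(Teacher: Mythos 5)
Your proof is correct and follows essentially the same route as the paper's: Jensen's inequality for (i), Bernoulli's inequality for (ii), the telescoping identity $1-(1-u)^{\dr-1}=u\sum_{l=0}^{\dr-2}(1-u)^l$ for (iii) and (iv), and the monotonicity of the proper FP to extract the common factor $(1-u_{w-1})^{\dr-2}$ in (iv). The only cosmetic difference is that you obtain (ii) by relaxing (i) rather than applying Bernoulli termwise before averaging, but since the same convexity/Bernoulli facts are used, the two routes produce identical intermediate bounds.
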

\begin{proof}
We have
\begin{align*}
x_i 
& = \epsilon\Bigl(1-\frac{1}{w} \sum_{j=0}^{w-1} \bigl(1-\frac{1}{w} \sum_{k=0}^{w-1} x_{i+j-k} \bigr)^{\dr-1} \Bigr)^{\dl-1} . 
\end{align*}

Let $\mathfrak{f}(x) = (1-x)^{\dr-1}$, $x \in [0, 1]$. Since $\mathfrak{f}''(x) =
(\dr-1)(\dr-2)(1-x)^{\dr-3}\geq 0$, $\mathfrak{f}(x)$ is convex.
Let $y_{j} = \frac1w\sum_{k=0}^{w-1}x_{i+j-k}$. We have 
\begin{align*}
\frac{1}{w} \sum_{j=0}^{w-1} \bigl(1-\frac{1}{w} \sum_{k=0}^{w-1} x_{i+j-k} \bigr)^{\dr-1} = \frac{1}{w} \sum_{j=0}^{w-1} \mathfrak{f}(y_{j}).
\end{align*}
Since $\mathfrak{f}(x)$ is convex, using Jensen's inequality, we obtain
\begin{align*}
\frac{1}{w} \sum_{j=0}^{w-1} \mathfrak{f}(y_{j}) \geq \mathfrak{f}(\frac1w\sum_{j=0}^{w-1}y_j),
\end{align*}
which proves claim (i).

The derivation of the remaining inequalities is based on the following
identity: 
\begin{align}\label{equ:identity}
1-B^{\dr-1} = (1-B)(1+B+\dots+B^{\dr-2}).
\end{align}
For $0\leq B \leq 1$ this gives rise to the following inequalities:
\begin{align}
& 1-B^{\dr-1} \geq  (\dr-1)B^{\dr-2}(1-B), \label{equ:prodinequa} \\
& 1-B^{\dr-1} \geq (1-B), \label{equ:prodinequb} \\
& 1-B^{\dr-1} \leq (\dr-1) (1-B). \label{equ:prodinequc}
\end{align}
Let $B_j=1 - \frac1w\sum_{k=0}^{w-1}x_{i+j-k}$, so that $1 - f_{i+j} = 1 -
B_j^{\dr-1}$ (recall the definition of $f_{i+j}$ from \eqref{equ:fi}). 
Using (\ref{equ:prodinequb}) this proves (iii):
\begin{align*}
x_i & = \epsilon\Big(\frac1w\sum_{j=0}^{w-1}(1-f_{i+j}) \Big)^{\dl-1} \geq \epsilon\Big(\frac1w\sum_{j=0}^{w-1}(1-B_{j}) \Big)^{\dl-1}\\
& = \epsilon\Big(\frac1w\sum_{j=0}^{w-1}\frac1w\sum_{k=0}^{w-1}x_{i+j-k} \Big)^{\dl-1}.
\end{align*}
If we use (\ref{equ:prodinequc}) instead then we get (ii).
To prove (iv) we use (\ref{equ:prodinequa}):
\begin{align*}
& x_i  \geq \epsilon\Big(\frac{\dr-1}w\sum_{j=0}^{w-1}(1-B_{j})B_j^{\dr-2} \Big)^{\dl-1} = \\
& \epsilon\Big(\frac{\dr-1}w\sum_{j=0}^{w-1}\Bigl(\frac1w\sum_{k=0}^{w-1}x_{i+j-k}\Bigr)\Big(1 - \frac1w\sum_{k=0}^{w-1}x_{i+j-k}\Big)^{\dr-2} \Big)^{\dl-1}.
\end{align*}
Since $\x$ is increasing, $\sum_{k=0}^{w-1}x_{i+j-k} \leq
\sum_{k=0}^{w-1}x_{i+w-1-k}$. Hence,
\begin{align*}
x_i\geq 
\epsilon\Big(\Big(1 \!-\! \frac1w\sum_{k=0}^{w-1}x_{i+w-1-k}\Big)^{\dr-2}\frac{\dr\!-\!1}{w^2}
\sum_{j, k=0}^{w-1}x_{i+j-k} \Big)^{\dl-1}.
\end{align*}
\end{proof}

\begin{lemma}[Spacing of FP]\label{lem:spacing}
Let $(\epsilon, \x)$, $\epsilon \geq 0$, be a proper one-sided FP of length $L$.
Then for $i \in [-L+1, 0]$,
\begin{align*}
x_i-x_{i-1} & \leq \epsilon \frac{ (\dl-1)(\dr-1) \big(\frac{x_i}{\epsilon}\big)^{\frac{\dl-2}{\dl-1}}}{w^2} \Bigl(\sum_{k=0}^{w-1}x_{i+k}\Bigr) \\
& \leq \epsilon \frac{ (\dl-1)(\dr-1) \big(\frac{x_i}{\epsilon}\big)^{\frac{\dl-2}{\dl-1}}}{w}.
\end{align*} 
Let $\xavg_i$ denote the weighted average $\xavg_i=\frac1{w^2}\sum_{j, k=0}^{w-1} x_{i+j-k}$. Then, for any $i \in [-\infty, 0]$, 
\begin{align*}
\xavg_i - \xavg_{i-1} 
& \leq \frac{1}{w^2} \sum_{k=0}^{w-1} x_{i+k} \leq \frac{1}{w}.
\end{align*} 
\end{lemma}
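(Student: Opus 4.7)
The plan is to exploit the fixed-point relation $x_i = \epsilon A_i^{\dl-1}$ with $A_i := 1 - \frac{1}{w}\sum_{j=0}^{w-1} f_{i+j}$, where $f_j$ is as in \eqref{equ:fi}. The whole argument reduces to two applications of the elementary convexity bound $a^{p}-b^{p} \leq p\, a^{p-1}(a-b)$ (valid for $0\le b\le a\le 1$ and $p\ge 1$), chained through two telescoping identities, each crucially relying on the fact that $\x$ is non-decreasing (the ``proper'' hypothesis).

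First I would observe the telescoping identity $A_i - A_{i-1} = \frac{1}{w}(f_{i-1} - f_{i+w-1})$, which is non-negative by monotonicity of $\x$ (each $f_\cdot$ is decreasing in its $x$-arguments). Applying the convexity bound with $p=\dl-1$ gives $x_i - x_{i-1}\le \epsilon(\dl-1) A_i^{\dl-2}\,(A_i-A_{i-1})$, and substituting $A_i = (x_i/\epsilon)^{1/(\dl-1)}$ produces the factor $(x_i/\epsilon)^{(\dl-2)/(\dl-1)}$ that appears in the statement.

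Next I would apply the same convexity trick to $t\mapsto t^{\dr-1}$, writing $f_j = B_j^{\dr-1}$ with $B_j := 1-\frac{1}{w}\sum_{k=0}^{w-1}x_{j-k}$. A second telescoping identity gives
$$B_{i-1}-B_{i+w-1} = \frac{1}{w}\Bigl(\sum_{k=0}^{w-1}x_{i+k}-\sum_{k=0}^{w-1}x_{i-w+k}\Bigr),$$
and using $B_{i-1}\le 1$ together with the convexity bound at $p=\dr-1$ yields $f_{i-1}-f_{i+w-1}\le \tfrac{\dr-1}{w}\sum_{k=0}^{w-1}x_{i+k}$ after discarding the non-negative subtracted sum. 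Combining with the previous step gives claim (i); the weaker $1/w$ bound follows immediately from $x_j\le 1$.

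For the average $\xavg_i$, I would rewrite the double sum as $\xavg_i = \frac{1}{w}\sum_{j=0}^{w-1} s_{i+j}$ with $s_j:=\frac{1}{w}\sum_{k=0}^{w-1}x_{j-k}$, so that telescoping in $i$ gives $\xavg_i-\xavg_{i-1} = \frac{1}{w}(s_{i+w-1}-s_{i-1})$, and the identity already used in the previous paragraph bounds this by $\frac{1}{w^2}\sum_{k=0}^{w-1}x_{i+k}\le \frac{1}{w}$. There is no real obstacle; the proof is essentially bookkeeping, and the only subtlety is tracking indices correctly across the three nested sums so that monotonicity of $\x$ is invoked in the direction that yields a one-sided (upper) bound on the increments.
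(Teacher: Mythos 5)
Your proof is correct and follows essentially the same route as the paper: both arguments telescope $A_i - A_{i-1} = \tfrac{1}{w}(f_{i-1}-f_{i+w-1})$, bound the difference of $(\dl-1)$-th and $(\dr-1)$-th powers by the standard one-sided estimate (your convexity inequality $a^p-b^p\le p\,a^{p-1}(a-b)$ is just the paper's factorization identity \eqref{equ:abidentity} combined with $A\ge B$), and read off $A_i^{\dl-2}=(x_i/\epsilon)^{(\dl-2)/(\dl-1)}$. The only cosmetic departure is in the last claim, where you derive $\xavg_i-\xavg_{i-1}=\tfrac{1}{w}(s_{i+w-1}-s_{i-1})$ by telescoping the outer sum rather than expanding the double sum into weighted consecutive differences as the paper does; both reach the same bound.
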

\begin{proof}
Represent both $x_i$ as well as $x_{i-1}$ in terms of the DE 
equation (\ref{eq:densevolxigeneral}).
Taking the difference,
\begin{align}\label{equ:spacingbnd1}
& \frac{x_{i}-x_{i-1}}{\epsilon}  = \nonumber \\  & 
\Bigl(1-\frac{1}{w} \sum_{j=0}^{w-1} f_{i+j}\Bigr)^{\dl-1} -  \Bigl(1-\frac{1}{w} \sum_{j=0}^{w-1} f_{i+j-1}\Bigr)^{\dl-1}.
\end{align}
Apply the identity 
\begin{align}\label{equ:abidentity}
A^m - B^m = (A-B)(A^{m-1} + A^{m-2}B + \dots + B^{m-1}),
\end{align}
where we set  $A=\Bigl(1-\frac{1}{w} \sum_{j=0}^{w-1} f_{i+j}\Bigr)$,
$B=\Bigl(1-\frac{1}{w} \sum_{j=0}^{w-1} f_{i+j-1}\Bigr)$, and $m=\dl-1$.
Note that $A \geq B$.  Thus 
\begin{align} 
& \Bigl(1-\frac{1}{w} \sum_{j=0}^{w-1} f_{i+j}\Bigr)^{\dl-1} -  \Bigl(1-\frac{1}{w} \sum_{j=0}^{w-1} f_{i+j-1}\Bigr)^{\dl-1} \nonumber \\
& = A^{\dl-1} - B^{\dl-1} \nonumber \\
& = (A-B)(A^{\dl-2} + A^{\dl-3}B + \dots + B^{\dl-2})  \nonumber \\
& \stackrel{\text{(i)}}{\leq} (\dl-1)(A-B)A^{\dl-2} \nonumber \\
& \stackrel{\text{(ii)}}{=} \frac{ (\dl-1) A^{\dl-2} }{w} (f_{i-1}-f_{i+w-1} ) \nonumber .
\end{align} 
In step (i) we used the fact that $A\geq B$ implies $A^{\dl-2} \geq
A^pB^q$ for all $p, q \in \naturals$ so that $p+q=\dl-2$.  
In step (ii) we made the substitution $A-B =
\frac{1}w(f_{i-1} - f_{i+w-1})$.
Since $x_i = \epsilon A^{\dl-1}$, $A^{\dl-2} =
\bigl(\frac{x_i}{\epsilon}\bigr)^{\frac{\dl-2}{\dl-1}}$. Thus
\begin{align*}
\frac{x_{i}-x_{i-1}}{\epsilon} 
& \le \frac{ (\dl-1) \bigl(\frac{x_i}{\epsilon}\bigr)^{\frac{\dl-2}{\dl-1}} }{w} 
(f_{i-1}-f_{i+w-1}). 
\end{align*} 
Consider the term $(f_{i-1}-f_{i+w-1})$.  
Set $f_{i-1} = C^{\dr-1}$ and $f_{i+w-1} = D^{\dr-1}$, 
where $C=\Bigl( 1 - \frac1w \sum_{k=0}^{w-1} x_{i-1-k}\Bigr)$ and
$D=\Bigl( 1 - \frac1w \sum_{k=0}^{w-1} x_{i+w-1-k}\Bigr)$.
Note that $0\leq C, D\leq 1$.
Using again (\ref{equ:abidentity}),
\begin{align*}
(f_{i-1} \!-\! f_{i+w-1}) & = (C\!-\!D) (C^{\dr-2} + C^{\dr-3}D + \dots + D^{\dr-2}) \\
& \leq (\dr-1)(C-D) \nonumber. 
\end{align*}
Explicitly,
\begin{align*}
(C-D) & = \frac1w (\sum_{k=0}^{w-1} (x_{i+w-1-k} - x_{i-1-k})) 
\leq \frac1w \sum_{k=0}^{w-1} x_{i+k}, 
\end{align*}
which gives us the desired upper bound. By setting all $x_{i+k} =
1$ we obtain the second, slightly weaker, form.

To bound the spacing for the weighted averages we write $\xavg_i$ and $\xavg_{i-1}$ explicitly,
\begin{align*}
\xavg_i - \xavg_{i-1} & = \frac1{w^2}\Big(  (x_{i+w-1}-x_{i+w-2})  \\ 
&+  2(x_{i+w-2}-x_{i+w-3}) + \dots +  w(x_{i}-x_{i-1}) \\ 
&+ (w-1)(x_{i-1}-x_{i-2}) + \dots + (x_{i-w+1} - x_{i-w}) \Big) \\
& \leq \frac{1}{w^2} \sum_{k=0}^{w-1} x_{i+k} \leq \frac{1}{w}.
\end{align*}
\end{proof}

The proof of the following lemma is long.  Hence we relegate it to
Appendix~\ref{app:transitionlength}.
\begin{lemma}[Transition Length] \label{lem:transitionlength}
Let $w \geq  2\dl^2 \dr^2 $.
Let $(\epsilon, \x)$, $\epsilon \in (\epsilon^{\BPsmall}, 1]$, be a proper
one-sided FP of length $L$.  
Then, for all $0< \delta < \frac{3}{2^5 \dl^4 \dr^6 (1+12 \dl \dr)}$,
\begin{align*}
\vert \{ i: \delta < x_i < \xstable - \delta  \}\vert & \leq   w
\frac{c(\dl, \dr)}{\delta}, 
\end{align*}
where $c(\dl,\dr)$ is a strictly positive constant independent of $L$ and $\epsilon$.
\end{lemma}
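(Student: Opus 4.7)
The plan is to upper-bound the transition length by showing that the weighted average $\bar{x}_i = \frac{1}{w^2}\sum_{j,k=0}^{w-1}x_{i+j-k}$ advances through the interval $(\delta, \xstable-\delta)$ at a rate of at least $\Omega(\delta/w)$ per index, so that at most $O(w/\delta)$ positions can lie in the transition. Since $\x$ is monotone, a bound on the transition positions for $\bar{x}_i$ transfers to $x_i$ up to an additive $O(w)$ term, which is absorbed into the constant $c(\dl,\dr)$.

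First, I would use the sandwich bounds of Lemma~\ref{lem:avgprop}(i)--(iv) to show that, up to an error vanishing in $w$, the FP condition forces $x_i \approx \epsilon(1-(1-\bar{x}_i)^{\dr-1})^{\dl-1}$, so that $\bar{x}_i$ is a near-fixed-point of the underlying scalar density evolution. The scalar map has only three fixed points $\{0, \xunstab, \xstable\}$, so any slowly varying stretch of the coupled FP inside the transition region must cluster near $\xunstab$. Based on this, I would split the transition range into three sub-regions: a lower sub-region $(\delta, \xunstab-\eta)$, an inner band $[\xunstab-\eta, \xunstab+\eta]$, and an upper sub-region $(\xunstab+\eta, \xstable-\delta)$, for a suitable $\eta = \eta(\dl,\dr)$.

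Second, I would estimate each of the three contributions. In the outer sub-regions, the scalar map drifts away from the identity, so combining Lemma~\ref{lem:avgprop}(ii)--(iv) with Lemma~\ref{lem:spacing} yields a lower bound on the increment $\bar{x}_{i+1}-\bar{x}_i$ of order $\Omega(x_i/w)$ near the lower end and $\Omega((\xstable-x_i)/w)$ near the upper end; since $x_i \geq \delta$ (resp.\ $\xstable-x_i \geq \delta$) in those sub-regions, each yields at most $O(w/\delta)$ positions. In the inner band, I would exploit the instability of $\xunstab$ under the scalar density evolution: a Taylor expansion of $g$ around $(\xunstab,\dots,\xunstab)$ shows that the deviation $|x_i - \xunstab|$ grows by a factor strictly greater than one per step, modulo a smoothing error of order $1/w$ inherited from the averaging. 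Together with the monotonicity of $\x$, this confines the inner band to $O(w)$ positions, which is absorbed into $O(w/\delta)$.

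The main obstacle will be making the instability argument robust against the averaging inherent in $g$. The hypothesis $w \geq 2\dl^2\dr^2$ is precisely what is needed for the linear (unstable) term in the Taylor expansion of $g$ near $\xunstab$ to dominate the quadratic remainder and the $1/w$ smoothing error, so that the geometric escape from $\xunstab$ is not drowned out. The explicit restriction on $\delta$ ensures that the three sub-regions nest properly (in particular, that $\delta$ is much smaller than $\eta$ and that the sandwich bounds remain non-degenerate uniformly), and the constant $c(\dl,\dr)$ is assembled from the combinatorial factors in the Taylor coefficients of $g$ together with the explicit constants arising from Lemmas~\ref{lem:avgprop} and~\ref{lem:spacing}.
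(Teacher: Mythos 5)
Your decomposition into lower region, inner band around $\xunstab$, and upper region is superficially parallel to the paper's four-stage split, and your treatment of the outer regions (local drift at rate tied to the distance from the nearest zero of $h$, plus the Spacing Lemma) is broadly in the spirit of the paper's steps (i) and (iv). However, the key step — getting through the neighborhood of $\xunstab$ where $|h|$ vanishes — is handled in a way that does not work.

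You claim that a Taylor expansion of $g$ around the constant constellation $\xunstable$ shows that $|x_i - \xunstable|$ grows by a factor strictly greater than one per section, i.e., geometric escape from the interior fixed point. This confuses the \emph{time} instability of $\xunstable$ under the scalar DE iteration with the \emph{spatial} behavior of the coupled FP constellation. Linearizing the coupled FP equation about the constant $\xunstable$ gives $y_i = \frac{\lambda}{w^2}\sum_{j,k=0}^{w-1}y_{i+j-k}$ with $\lambda = 1+h'(\xunstable)>1$. The associated symbol $\hat K(r)=\frac{1}{w^2}\sum_{j,k=0}^{w-1}r^{j-k}$ satisfies $\hat K(r)\geq 1$ for every real $r>0$ (by AM--GM, since $\sum_{j,k}(j-k)=0$), so $\lambda\hat K(r)=1$ has \emph{no} positive real root; there is no spatial mode $y_i\propto r^i$ with $r\neq 1$, and the roots of the characteristic equation sit on the unit circle (they are oscillatory). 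In particular there is no geometric growth rate bounded away from $1$ — nor even a rate $1+c/w$ that you could iterate — and you cannot conclude that the inner band has $O(w)$ sections by a local linearization. (The ``smoothing error of order $1/w$'' is also misplaced: the linearization is exact at first order, and the error is $O(y^2)$, not $O(1/w)$.) The paper avoids this entirely by a \emph{nonlocal} momentum/area argument: summing $x_i\leq\xavg_i+h(\xavg_i)$ over $i\leq k$ yields $x_{k+(w-1)}-x_{k-(w-1)}\geq -\frac{6}{w}\sum_{i\leq k}h(\xavg_i)$, and the right-hand side is bounded below by a Riemann sum of $\int_0^{x_*/2}(-h)\geq\frac34\kappa_*(x_*)^2>0$, so the constellation advances by a fixed $(\dl,\dr)$-dependent constant over every $2(w-1)$ sections even where $|h|$ is tiny. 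This accumulated-momentum mechanism (not local instability) is what carries the profile across $\xunstable$, and it is precisely where the hypothesis $w\geq 2\dl^2\dr^2$ is used (to guarantee, via the Spacing Lemma and $x_*>1/(\dl^2\dr^2)$, that the Riemann sum captures the integral). A secondary gap: in your upper outer region you need uniform lower bounds on $\kappa^*(\epsilon)$ and $\lambda^*(\epsilon)$, which degenerate as $\epsilon\to\epsilon^{\BPsmall}$; the paper extracts a uniform lower bound by first establishing, via step (iii), that $\xstable-\xunstable\geq\frac{3\kappa_*(x_*)^2}{4(1+12\dl\dr)}$ and hence $\epsilon\geq\epsilon^{\text{min}}>\epsilon^{\BPsmall}$. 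Without the momentum argument you cannot establish this either, so the upper outer region is also left uncontrolled.
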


Let us now show how we can construct a large class of
one-sided FPs which are not necessarily stable. In particular
we will construct increasing FPs. The proof of the following
theorem is relegated to Appendix~\ref{app:onesidedDE}.
\begin{theorem}[Existence of One-Sided FPs]\label{thm:onesidedDE}
Fix the parameters $(\dl, \dr, w)$ and let 
$\xunstab(1) < \chi$. 
Let $L \geq L(\dl, \dr, w, \chi)$, where
\begin{align*}
 &L(\dl, \dr, w, \chi)  =  \\ & \max\Big\{\frac{4 \dl w}{\dr(1\!-\!\frac{\dl}{\dr})(\chi\!-\!\xunstab(1))}, \frac{8 w}{\kappa^*(1)(\chi\!-\!\xunstab(1))^2}, \\
& \phantom{ L(\dl, \dr, w, \chi)  = \max\Big\{}  \frac{8 w}{\lambda^*(1)(\chi-\xunstab(1))(1-\frac{\dl}{\dr})}, \frac{w}{\frac{\dr}{\dl}-1}
\Big\}. 
\end{align*}
There exists a proper one-sided FP $\x$ of length $L$ that
{\em either} has entropy $\chi$ and channel parameter bounded by
\begin{align*}
\epsilon^{\BPsmall}(\dl, \dr) < \epsilon < 1,
\end{align*}
{\em or} has entropy bounded by
\begin{align*}
 \frac{(1-\frac{\dl}{\dr})(\chi-\xunstab(1))}{8}-\frac{\dl w}{2 \dr(L+1)} \leq \chi(\x) \leq \chi
\end{align*}
and channel parameter $\epsilon=1$.
\end{theorem}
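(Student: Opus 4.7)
The plan is to construct the desired FP by a fixed-point argument on an entropy-preserving map, mirroring the procedure described informally in Section~\ref{sec:generalprinciple} for tracing the EBP EXIT curve: fix the entropy $\chi$ and let the channel parameter $\epsilon$ be determined by the FP condition.

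Let $X_\chi$ denote the set of non-decreasing tuples $\x = (x_{-L},\dots,x_0) \in [0,1]^{L+1}$ with normalized entropy $\chi(\x) = \chi$; this is a non-empty compact convex subset of $\mathbb{R}^{L+1}$. Define the un-scaled one-sided DE update $\y(\x)$ componentwise by $y_i = g(x_{i-w+1},\dots,x_{i+w-1})$, using the one-sided boundary convention $x_j = 0$ for $j<-L$ and $x_j = x_0$ for $j>0$. Consider the renormalized map
\[
\Phi_\chi(\x) \;=\; \frac{\chi}{\chi(\y(\x))}\, \y(\x),
\]
which has entropy exactly $\chi$ by construction and is continuous wherever $\chi(\y(\x))>0$; monotonicity of $g$ in each of its arguments guarantees that $\y(\x)$, and hence $\Phi_\chi(\x)$, is non-decreasing whenever $\x$ is. A fixed point $\x^*$ of $\Phi_\chi$ is by construction a proper one-sided FP of DE at channel parameter $\epsilon^* = \chi/\chi(\y(\x^*))$.

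I would split the analysis into two cases according to the entropy $\chi_1$ of the one-sided forward-DE FP at $\epsilon=1$ (which exists and is proper by Lemma~\ref{lem:nontrivialonesided}). If $\chi_1 \ge \chi$, then $\chi(\y(\x)) \ge \chi$ holds on all of $X_\chi$ (using monotonicity of $g$ and comparison with the forward FP), so $\Phi_\chi$ maps $X_\chi$ continuously into itself and Brouwer's theorem yields a fixed point; Lemma~\ref{lem:maximum} then forces $\epsilon^* > \epsilon^{\BPsmall}(\dl, \dr)$, while the hypothesis $\chi > \xunstab(1)$ combined with a short extremal argument excludes $\epsilon^* = 1$, delivering the first alternative. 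If instead $\chi_1 < \chi$, then no FP at entropy $\chi$ can exist at $\epsilon \le 1$, and I would take the forward-DE FP itself; the two-sided bracket on $\chi(\x_1)$ in the second alternative would follow from the entropy lower bound of Lemma~\ref{lem:nontrivialonesided}, refined via the transition-width estimate of Lemma~\ref{lem:transitionlength} and the spacing bound of Lemma~\ref{lem:spacing}.

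The main obstacle is verifying that $\Phi_\chi$ genuinely maps $X_\chi$ into itself in case (a): showing $\chi(\y(\x)) \ge \chi$ uniformly on $X_\chi$, and in particular ruling out degenerate constellations localized near position $0$ that would give pathological values of $\epsilon^*$, requires a careful comparison argument tying the response of the map to the ``shape'' of $\x$. This is exactly the role of the explicit lower bound on $L$ in the theorem: the four terms in $L(\dl, \dr, w, \chi)$ are calibrated so that the transition window of width $O(w/\delta)$ from Lemma~\ref{lem:transitionlength} fits strictly inside $[-L,0]$, the boundary contributions to $\chi(\y(\x))$ controlled by Lemma~\ref{lem:avgprop} are negligible, and the gap $\chi - \xunstab(1)$ translates into a quantitative separation from the unstable branch. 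Matching these constants to the stated lower bound on $L$ will be the technically heaviest part of the argument.
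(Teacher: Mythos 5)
Your overall plan — apply Brouwer to an entropy-preserving renormalized DE map, then case-split on whether the FP is reached at $\epsilon<1$ or at $\epsilon=1$ — matches the paper's strategy, but the map you define does not actually send your domain into itself, and this is not a technicality that can be repaired by adjusting the constants in $L(\dl,\dr,w,\chi)$.

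The problem is the single-branch map $\Phi_\chi(\x)=\frac{\chi}{\chi(U(\x))}U(\x)$. You assert that when $\chi_1=\chi(\z)\ge\chi$ one has $\chi(U(\x))\ge\chi$ uniformly over $X_\chi$, but this is false: even for large $L$ there are non-decreasing $\x$ with $\chi(\x)=\chi$ that put most of their mass deep in the ``destructive'' region of DE (below $\xunstab(1)$, where the underlying map $x\mapsto(1-(1-x)^{\dr-1})^{\dl-1}$ is a strict contraction toward $0$) and a small amount of mass near $1$. Applying $U$ crushes the low tail much faster than it increases the high part, so $\chi(U(\x))<\chi$; your scaling factor then exceeds $1$ and $\Phi_\chi(\x)$ can leave $[0,1]^{L+1}$. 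Monotonicity of $g$ gives $U(\x)\le U(\z)=\z$ only when $\x\le\z$ — a constraint absent from your $X_\chi$ — and even with that constraint added it only yields $\chi(U(\x))\le\chi(\z)$, which is an \emph{upper} bound in the wrong direction. The paper handles exactly this difficulty by (a) working on the smaller set $S(\chi)=\{\x: \chi(\x)=\chi,\ \x\le\z,\ \x \text{ non-decreasing}\}$, and (b) replacing your scaling on the branch $\chi(U(\x))<\chi$ by the affine blend $V(\x)=\alpha(\x)U(\x)+(1-\alpha(\x))\z$ with $\alpha(\x)=\frac{\chi(\z)-\chi}{\chi(\z)-\chi(U(\x))}\in[0,1]$, which keeps the image below $\z$ and at entropy exactly $\chi$. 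This two-branch structure is what makes the self-map property hold, and it costs a non-trivial continuity check at the branch boundary that your map avoids only because it is not well defined.

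There is a second, smaller gap in the same spot. Once Brouwer gives a fixed point of the two-branch $V$, the case $\chi(U(\x))<\chi$ does \emph{not} correspond to $\chi_1<\chi$ — it can occur even when $\chi(\z)>\chi$. In that case the FP of $V$ satisfies $\x\ge U(\x)$, and one must iterate $U$ downward from $\x$ to obtain a (possibly much smaller) one-sided FP at $\epsilon=1$ and then lower-bound its entropy by a comparison chain (paper: a staircase profile, a bounded number of DE steps to climb past $\xunstab(1)$, then Lemma~\ref{lem:nontrivialonesided} on a sub-window of length $\Theta(L(\chi-\xunstab(1)))$). Your plan of just taking the forward-DE FP $\z$ covers the genuinely easy sub-case $\chi(\z)\le\chi$, but not the sub-case where $\chi(\z)>\chi$ and the Brouwer FP nonetheless lands in the second branch. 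The four terms in $L(\dl,\dr,w,\chi)$ are calibrated for this comparison chain (e.g.\ the term $\tfrac{w}{\dr/\dl-1}$ makes $\chi(\z)\ge\tfrac12(1-\dl/\dr)$ via Lemma~\ref{lem:nontrivialonesided}; the terms in $\kappa^*(1)$ and $\lambda^*(1)$ bound the number of DE steps to cross the transition), not for fitting a transition window of Lemma~\ref{lem:transitionlength} inside $[-L,0]$ as you suggest.
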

{\em Discussion:} We will soon see that, for the range of parameters of
interest, the second 
alternative is not possible either. In the light of this,
the previous theorem asserts for this range of parameters the existence of a proper FP
of entropy $\chi$. In what follows, this FP will be the key
ingredient to construct the whole EXIT curve.

\subsection{Step (ii): Construction of EXIT Curve}

\begin{definition}[EXIT Curve for $(\dl, \dr, L, w)$-Ensemble]\label{def:EXIT} 
Let $(\epsilon^*, \x^*)$, $0 \leq \epsilon^* \leq 1$, denote a proper
one-sided FP of length $L'$ and entropy $\chi$.  Fix $1\leq L < L'$.

The {\em interpolated} family of constellations based on
$(\epsilon^*, \x^*)$ is denoted by $\{\underline{\epsilon}(\alpha),
\x(\alpha)\}_{\alpha=0}^{1}$. It is indexed from $-L$ to $L$.

This family is
constructed from the one-sided FP $(\epsilon^*, \x^*)$. By definition,
each element $\x(\alpha)$ is symmetric. Hence, it suffices to define the
constellations in the range $[-L, 0]$ and then to set $x_i(\alpha)=x_{-i}(\alpha)$ for $i \in [0, L]$.  
As usual, we set $x_i(\alpha)=0$ for $i\notin [-L,L]$. 
For $i \in [-L, 0]$ and $\alpha
\in [0, 1]$ define
\begin{align*}
x_i(\alpha)  & = \begin{cases} (4 \alpha -3)+(4-4\alpha) x^*_0,
& \alpha \in [\frac34, 1], \\ (4 \alpha -2)x^*_0-(4\alpha-3)
x^*_i, & \alpha \in [\frac12, \frac34), \\ 
a(i, \alpha),
& \alpha \in (\frac14, \frac12), \\
4 \alpha x^*_{i-L'+L}, & \alpha \in (0, \frac14], \end{cases} \\
\epsilon_i(\alpha) & = \frac{x_i(\alpha)}{g((x_{i-w+1}(\alpha), \dots,
(x_{i+w-1}(\alpha))},
\end{align*}
where for $\alpha \in (\frac14, \frac12)$,
\begin{align*}
a(i, \alpha) & = 
{x^*}^{4 (L'-L) (\frac12-\alpha) \!\!\!\!\mod(1)}_{
i-\lceil 4(\frac12-\alpha)(L'-L)\rceil} \cdot {x^*}^{1-4 (L'-L) (\frac12-\alpha) \!\!\!\!\mod(1)}_{
	i-\lceil 4(\frac12-\alpha) (L'-L) \rceil+1}.
\end{align*}

The constellations $\x(\alpha)$ are increasing (component-wise) as
a function of $\alpha$, with 
$\x(\alpha=0)=(0, \dots, 0)$ and with
$\x(\alpha=1)=(1, \dots, 1)$. 
\end{definition}
{\em Remark:} Let us clarify the notation occurring in the definition of the term $a(i, \alpha)$ above. The
expression for $a(i, \alpha)$ consists of the product of two consecutive
sections of $\x^*$, indexed by the subscripts $i-\lceil
4(\frac12-\alpha)(L'-L)\rceil$ and $i-\lceil 4(\frac12-\alpha) (L'-L) \rceil+1$.
The erasure values at the two sections are first raised to the powers $4 (L'-L)
(\frac12-\alpha)\mod(1)$ and $1-4 (L'-L) (\frac12-\alpha)\mod(1)$, before taking
their product. Here,$\mod(1)$ represents real numbers in the interval $[0,1]$. \\ 
{\em Discussion:} 
The interpolation is split into 4 phases. For $\alpha \in [\frac34,
1]$, the constellations decrease from the constant value $1$ to the
constant value $x^*_0$. For the range $\alpha \in [\frac12, \frac34]$,
the constellation decreases further, mainly towards the boundaries,
so that at the end of the interval it has reached the value $x_i^*$
at position $i$ (hence, it stays constant at position $0$). The third
phase is the most interesting one. For $\alpha \in [\frac14, \frac12]$
we ``move in'' the constellation $\x^*$ by ``taking out'' sections in the
middle and interpolating between two consecutive points.  In particular,
the value $a(i, \alpha)$ is the result of ``interpolating'' between two
consecutive $x^*$ values, call them $x^*_j$ and $x^*_{j+1}$, where the
interpolation is done in the exponents, i.e., the value is of the form
${x^*}_j^{\beta} \cdot {x^*}_{j+1}^{1-\beta}$. Finally, in the last phase all
values are interpolated in a linear fashion until they have reached $0$.

\begin{example}[EXIT Curve for $(3, 6, 6, 2)$-Ensemble]
Figure~\ref{fig:interpolation} shows a small example which illustrates this
interpolation for the $(\dl=3, \dr=6, L=6, w=2)$-ensemble.
We start with a FP of entropy $\chi=0.2$ for $L'=12$.
This constellation has $\epsilon^*=0.488223$ and 
\begin{align*}
\x^*=(&
0,
0,
0,
0,
0,
0.015, \\
&
0.131,
0.319,
0.408,
0.428,
0.431,
0.432,
0.432
).
\end{align*}
Note that, even though the constellation is quite short, $\epsilon^*$ is
close to $\epsilon^{\MAPsmall}(\dl=3, \dr=6) \approx 0.48815$, and
$x_0^*$ is close to $\xstab(\epsilon^{\MAPsmall}) \approx 0.4323$.
From $(\epsilon^*, \x^*)$ we create an EXIT curve for $L=6$.
\begin{figure}[htp]
\begin{centering}
\input{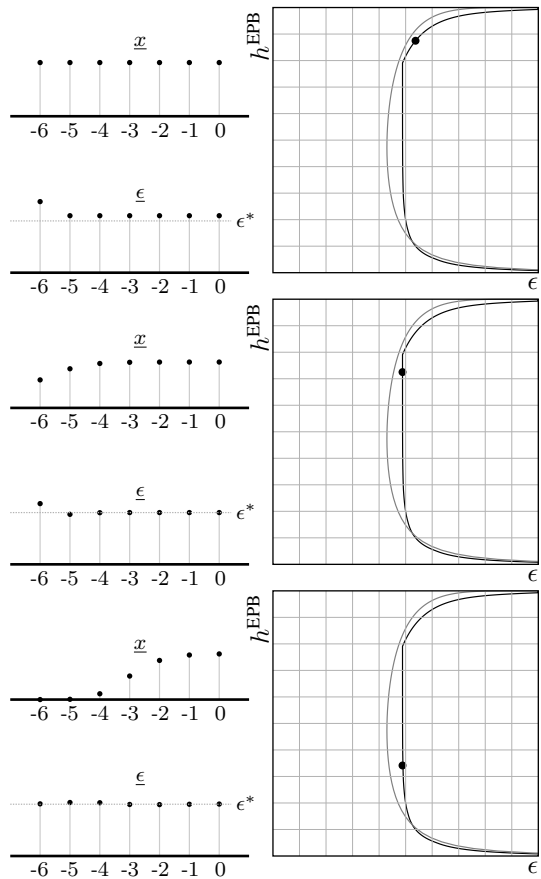}
\caption{Construction of EXIT curve for $(3, 6, 6, 2)$-ensemble.
The figure shows three particular points in the interpolation, namely
the points $\alpha=0.781$ (phase (i)), $\alpha=0.61$ (phase (ii)), 
and $\alpha=0.4$ (phase (iii)). For each parameter both the constellation $\x$
as well as the local channel parameters $\underline{\epsilon}$ are shown in the figure on left.
The right column of the figure illustrates a projection of the EXIT curve. I.e., we plot the average EXIT value of the constellation versus the channel value of the 0th section. 
For reference, also the EBP EXIT curve of the underlying $(3, 6)$-regular
ensemble is shown (gray line).
}
\label{fig:interpolation}
\end{centering}
\end{figure}
The figure shows $3$ particular points of the interpolation, one in each
of the first $3$ phases.

Consider, e.g., the top figure
corresponding to phase (i). The constellation $\x$ in this case is
completely flat. Correspondingly, the local channel values are also
constant, except at the left boundary, where they are slightly higher
to compensate for the ``missing'' $x$-values on the left.

The second figure from the top shows a point corresponding to phase
(ii).  As we can see, the $x$-values close to $0$ have not changed,
but the $x$-values close to the left boundary decrease towards the
solution $\x^*$. Finally, the last figure shows a point in phase (iii).
The constellation now ``moves in.'' In this phase, the $\epsilon$ values
are close to $\epsilon^*$, with the possible exception of $\epsilon$
values close to the right boundary (of the one-sided constellation). These values can become large.
\end{example}

The proof of the following theorem can be found in Appendix~\ref{app:fundamental}.
\begin{theorem}[Fundamental Properties of EXIT Curve]\label{the:propertiesEXIT}
Consider the parameters $(\dl, \dr, w)$.  Let $(\epsilon^*, \x^*)$, $\epsilon^*\in (\epsilon^{\BPsmall}, 1]$,  
denote a proper one-sided FP of length $L'$ and entropy $\chi>0$.  Then for $1 \leq L
 < L'$, the EXIT curve of Definition \ref{def:EXIT} has the
following properties:
\begin{itemize}
\item[(i)] {\em Continuity:} 
The curve $\{\underline{\epsilon}(\alpha), \x(\alpha)\}_{\alpha=0}^{1}$
is continuous for $\alpha \in [0, 1]$ and differentiable for
$\alpha=[0, 1]$ except for a finite set of points.
\item[(ii)] 
{\em Bounds in Phase (i):} 
For $\alpha \in [\frac34, 1]$, 
\begin{align*}
\epsilon_i(\alpha) & 
\begin{cases}
 =\epsilon_0(\alpha), & i \in [-L+w-1, 0], \\
\geq \epsilon_0(\alpha), & i \in [-L, 0].
\end{cases}
\end{align*}
\item[(iii)] 
{\em Bounds in Phase (ii):} 
For $\alpha \in [\frac12, \frac34]$ and $i \in [-L, 0]$,
\begin{align*}
\epsilon_i(\alpha) 
\geq \epsilon(x_0^*) \frac{x_{-L}^*}{x_0^*},
\end{align*}
where $\epsilon(x)=\frac{x}{(1-(1-x)^{\dr-1})^{\dl-1}}$.

\item[(iv)] 
{\em Bounds in Phase (iii):}
Let
\begin{align}\label{equ:gamma}
\gamma = &
(\frac{(\dr-1)(\dl-1)(\epsilon^*)^{\frac1{\dl-1}}(1+w^{1/8})}{w})^{\dl-1}.
\end{align}
Let $\alpha \in [\frac14, \frac12]$.
For $x_i(\alpha) > \gamma$, 
\begin{align*}
\epsilon_i(\alpha) & 
\begin{cases}
\leq \epsilon^* \big(1+\frac{1}{w^{1/8}}\big), & i \in [-L+w-1, -w+1], \\ 
\geq \epsilon^*\Big(1 - \frac1{1+w^{1/8}}\Big),   & i \in [-L, 0].
\end{cases}
\end{align*} 

For $x_i(\alpha) \leq \gamma$ and $w > \max\{2^4 \dl^2 \dr^2, 2^{16}\}$,
\begin{align*}
\epsilon_i(\alpha) & 
\geq \epsilon^*\Big(1 - \frac4{w^{1/8}}\Big)^{(\dr-2)(\dl-1)},  i \in [-L, 0]. 
\end{align*} 

\item[(v)] {\em Area under EXIT Curve:}
The EXIT value at position $i \in [-L, L]$
is defined by
\begin{align*}
h_i(\alpha) = (g(x_{i-w+1}(\alpha), \dots, x_{i+w-1}(\alpha)))^{\frac{\dl}{\dl-1}}.
\end{align*}
Let
\begin{align*}
A(\dl, \dr, w, L)=\int_{0}^{1}\frac1{2L+1}\sum_{i=-L}^{L} h_i(\alpha)d\epsilon_i(\alpha),
\end{align*}
denote the area of the EXIT integral. 
Then 
$$
\vert A(\dl, \dr, w, L) - (1-\frac{\dl}{\dr})  \vert \leq  \frac{w}{L} \dl \dr.
$$

\item[(vi)] {\em Bound on $\epsilon^*$:}
For $w > \max\{2^4 \dl^2 \dr^2, 2^{16}\}$,
\begin{align*}
\vert \epsilon^{\MAPsmall}(\dl, \dr) - \epsilon^* \vert \leq 
\frac{2\dl\dr\vert x_0^* -
\xstab(\epsilon^*)\vert+c(\dl, \dr, w,
L)}{(1-(\dl-1)^{-\frac1{\dr-2}})^2}
\end{align*}
where
\begin{align*}
c(\dl,& \dr, w, L) =  
4\dl \dr w^{-\frac18} + \frac{w\dl(2+\dr)}{L} \\
& + \dl \dr (x^*_{-L'+L}+x^*_0-x^*_{-L})  + \frac{2\dr\dl^2}{(1 \!-\! 4w^{-\frac{1}8})^{\dr}} w^{-\frac78}.
\end{align*}
\end{itemize}
\end{theorem}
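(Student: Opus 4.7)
My plan is to handle the six parts in turn, treating (i)--(iii) as routine verifications and concentrating effort on (iv)--(vi), which form the analytic core.

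Part (i) is bookkeeping: each of the four pieces defining $x_i(\alpha)$ is a polynomial (phases i, ii, iv) or a geometric mean (phase iii) in $\alpha$ and the $x^*_j$, hence continuous within its phase. One checks that the formulas match at $\alpha\in\{1/4,1/2,3/4\}$ and that the ceiling $\lceil 4(L'-L)(1/2-\alpha)\rceil$ in phase (iii) introduces only finitely many points where the derivative jumps but the value is preserved. Continuity of $\epsilon_i(\alpha)=x_i(\alpha)/g(\cdots)$ follows from continuity of $g$ together with the fact that both $x_i$ and $g$ vanish as $\alpha^{\dl-1}$ at $\alpha=0$. For (ii), in phase (i) every entry of $\x(\alpha)$ equals the common constant $c(\alpha)=(4\alpha-3)+(4-4\alpha)x^*_0$. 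For $i\in[-L+w-1,0]$, symmetry forces every argument of $g$ to equal $c$, so $\epsilon_i=c/g(c,\ldots,c)=\epsilon_0$; for $i<-L+w-1$ some arguments are $0$, and the strict monotonicity of $g$ in each coordinate yields $g(\cdots)<g(c,\ldots,c)$, hence $\epsilon_i>\epsilon_0$. For (iii), in phase (ii) the linear interpolation gives $x^*_i\le x_i(\alpha)\le x^*_0$ pointwise, so by monotonicity $g(\x(\alpha))\le g(x^*_0,\ldots,x^*_0)$ while the numerator obeys $x_i(\alpha)\ge x^*_i\ge x^*_{-L}$; hence $\epsilon_i(\alpha)\ge x^*_{-L}/g(x^*_0,\ldots,x^*_0)=\epsilon(x^*_0)x^*_{-L}/x^*_0$.

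The substantive step is (iv). In phase (iii), the constellation is a right-shifted, geometrically interpolated copy of $\x^*$: with $\beta(\alpha)=4(L'-L)(\tfrac12-\alpha)$ and $s=\lceil\beta\rceil$, $t=\beta\!\mod 1$, one has $x_i(\alpha)=(x^*_{i-s})^{t}(x^*_{i-s+1})^{1-t}$. Because $\x^*$ satisfies $x^*_j=\epsilon^*g(x^*_{j-w+1},\ldots,x^*_{j+w-1})$ and the DE stencil is translation-invariant in the bulk, an exact integer shift of $\x^*$ already satisfies the FP equation with channel $\epsilon^*$. Two sources of error remain: the geometric interpolation between neighboring sections, and the boundary truncation at $\pm L$. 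For the first, Lemma~\ref{lem:spacing} (Spacing of FP) gives $x^*_{j}-x^*_{j-1}=O(1/w)$, which I would propagate through $g$ using the Lipschitz bounds in Lemma~\ref{lem:avgprop} (specifically the convex envelope (i) together with (ii)). For the tail regime $x_i(\alpha)\le\gamma$, multiplicative error bounds break down, and instead I use bounds (iii)--(iv) of Lemma~\ref{lem:avgprop} to control $\epsilon_i$ via the power $(1-O(w^{-1/8}))^{(\dr-2)(\dl-1)}$. The calibration $w^{-1/8}$ is chosen to equate the admissible multiplicative error in the bulk with the admissible additive error from the tail, given the $O(1/w)$ spacing.

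Part (v) is a telescoping/area-theorem argument. The per-section trajectory $\alpha\mapsto(\epsilon_i(\alpha),h_i(\alpha))$ tracks four distinguishable segments: in phase (i) it traces the upper EBP EXIT branch of the underlying $(\dl,\dr)$-regular ensemble from $(1,1)$ down to $(\epsilon(x^*_0),h(x^*_0))$; in phases (ii)--(iii) it performs an almost-vertical drop at $\epsilon\approx\epsilon^*$ by (iv); in phase (iv) it contracts linearly to $(0,0)$. The vertical portions contribute zero to $\int h\,d\epsilon$ because $d\epsilon=0$, so each interior section contributes exactly $\int_{\epsilon(x^*_0)}^{1}h(\xstab(\epsilon))\,d\epsilon$ up to the $O(w^{-1/8})$ perturbation from (iv). By the standard area theorem for the underlying ensemble, this is close to $1-\dl/\dr$. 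Only $O(w)$ boundary sections behave anomalously, each contributing at most $O(1)$ in absolute value, which produces the claimed $w\dl\dr/L$ error.

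Finally, part (vi) is a Maxwell-comparison step combining (v) with Lemma~\ref{lem:standardthresholds}. The Maxwell characterization states $\int_{\epsilon^{\MAPsmall}(\dl,\dr)}^{1}h(\xstab(\epsilon))\,d\epsilon=1-\dl/\dr$, while (v) together with the trajectory analysis from (v) gives $\int_{\epsilon^*}^{1}h(\xstab(\epsilon))\,d\epsilon=1-\dl/\dr+\text{errors}$, with the errors coming from (a) the gap $|x^*_0-\xstab(\epsilon^*)|$ (since our phase (i) branch terminates at $x^*_0$ rather than at $\xstab(\epsilon^*)$), (b) the $O(w/L)$ boundary mismatch from (v), (c) the $O(w^{-1/8})$ bulk error from (iv), and (d) the tail contributions from $x^*_{-L'+L},x^*_{-L},x^*_0$ in the phase (ii) and (iv) closing pieces. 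Dividing by a lower bound on the integrand $h(\xstab(\epsilon))\ge(1-(\dl-1)^{-1/(\dr-2)})^2$ (Lemma~\ref{lem:lowerboundxBP}) converts the area discrepancy into the claimed bound on $|\epsilon^{\MAPsmall}(\dl,\dr)-\epsilon^*|$. The hardest part will be (iv): keeping the bulk and tail errors simultaneously controlled while passing the $O(1/w)$ spacing through the nested products in $g$, and choosing $\gamma$ so that (iii) of Lemma~\ref{lem:avgprop} suffices below $\gamma$ and smoothness of $g$ suffices above.
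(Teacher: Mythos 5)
Your plan for parts (i)--(iv) and the shape of (vi) track the paper's proof closely: in (i) the piecewise formulas are matched at the phase boundaries, in (ii) the constancy of the constellation in phase one forces $\epsilon_i=\epsilon_0$ in the bulk and $\epsilon_i\geq\epsilon_0$ near $-L$, in (iii) the pointwise bound $x_{-L}^*\leq x_i(\alpha)\leq x_0^*$ combined with monotonicity of $g$ gives the stated lower bound, and for (iv) the right tools are indeed Lemma~\ref{lem:spacing} for the bulk (translated into a multiplicative bound on $x^*_{i-p+1}/x^*_{i-p}$, with the $w^{1/8}$ coming from the threshold $\gamma$) plus Lemma~\ref{lem:avgprop}(iii)--(iv) and a H\"older step for the tail. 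For (vi), you correctly identify the final step as a Taylor expansion of $p^{\MAPsmall}$ around $\xstab(\epsilon^*)$, followed by a comparison with $p^{\MAPsmall}(\xstab(\epsilon^{\MAPsmall}))=0$ and a division by a lower bound on the derivative.

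The genuine gap is in part (v). Your argument decomposes the per-section trajectory into the phase-(i) upper branch from $(1,1)$ down to $(\epsilon(x_0^*),\,h(x_0^*))$, a near-vertical drop in phases (ii)--(iii), and a closing piece in phase (iv); it then invokes the area theorem of the underlying regular ensemble to conclude that the phase-(i) piece is close to $1-\dl/\dr$. But that piece evaluates exactly (as the paper's $T_1$ computes) to $1-\frac{\dl}{\dr}-p^{\MAPsmall}(x_0^*)$. To claim this is within $O(w^{-1/8})$ of $1-\dl/\dr$ you would need $p^{\MAPsmall}(x_0^*)\approx 0$, i.e.\ $x_0^*\approx x^{\MAPsmall}$ --- which is exactly the conclusion of part (vi). Using (vi) inside (v) is circular, since (vi) is proved from (v). The paper avoids this by proving (v) with an argument that is completely independent of where $x_0^*$ sits on the EBP curve: it applies the standard area theorem to each of the $M(2L+1)$ height-$2$ computation trees (length-$(1+\dl(\dr-1))$ codes with $\dl$ check nodes), obtaining a per-tree total of $1+\dl(\dr-2)$, and then determines the average leaf-node contribution $\dl(\dr-1)^2/\dr$ by a permutation-symmetry argument over branches with the same check position. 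Subtracting yields exactly $1-\dl/\dr$ per bulk root node, with the $O(w/L)$ error coming only from the $O(w)$ boundary roots, regardless of the value of $x_0^*$. Your trajectory decomposition is precisely the content of (vi), not an alternative route to (v), and part (v) needs its own combinatorial input.
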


\subsection{Step (iii): Operational Meaning of EXIT Curve}
\blemma[Stability of $\{(\underline{\epsilon}(\alpha), \x(\alpha))\}_{\alpha=0}^{1}$]\label{lem:stability}
Let $\{(\underline{\epsilon}(\alpha), \x(\alpha))\}_{\alpha=0}^{1}$
denote the EXIT curve constructed in Definition~\ref{def:EXIT}.  For $\beta \in (0, 1)$, let
\begin{align*}
\epsilon^{(\beta)} & = \inf_{\beta \leq \alpha \leq 1} \{ \epsilon_i(\alpha): i \in [-L, L]\}.
\end{align*}
Consider forward DE (cf. Definition~\ref{def:forwardDE}) with parameter $\epsilon$, $\epsilon <
\epsilon^{(\beta)}$. Then the sequence $\x^{(\ell)}$ (indexed from $-L$ to $L$) converges to a FP
which is point-wise upper bounded by $\x(\beta)$.
\elemma
\begin{proof}
Recall from Lemma~\ref{lem:forwardDE} that the sequence $\x^{(\ell)}$
converges to a FP of DE, call it $\x^{(\infty)}$. We claim that
$\x^{(\infty)} \leq \x(\beta)$.

We proceed by contradiction. Assume that $\x^{(\infty)}$ is not point-wise
dominated by $\x(\beta)$.
Recall that by construction of $\x(\alpha)$ the components are
decreasing in $\alpha$ and that they are continuous.  
Further, $\x^{(\infty)} \leq \epsilon < \x(1)$.
Therefore, 
\begin{align*}
\gamma = \inf_{\beta \leq \alpha \leq 1}\{\alpha \mid \x^{(\infty)} \leq \x(\alpha)\}
\end{align*}
is well defined. By assumption $\gamma>\beta$. Note that there must
exist at least one position $i \in [-L, 0]$ so that
$x_i(\gamma)=x_i^{(\infty)}$\footnote{It is not hard to show that under
forward DE, the constellation $\x^{(\ell)}$ is unimodal and symmetric around
$0$. This immediately follows from an inductive argument using
Definition~\ref{def:forwardDE}.}. But since $\epsilon < \epsilon_i(\gamma)$
and since $g(\dots)$ is monotone in its components,  
\begin{align*}
x_i(\gamma) 
& = \epsilon_i(\gamma) g(x_{i-w+1}(\gamma), \dots, x_{i+w-1}(\gamma)) \\
& > \epsilon g(x_{i-w+1}^{(\infty)}, \dots, x_{i+w-1}^{(\infty)}) = x^{(\infty)}_i,
\end{align*}
a contradiction.
\end{proof}

\subsection{Step (iv): Putting it all Together}
We have now all the necessary ingredients to prove
Theorem~\ref{the:main}. In fact, the only statement that needs proof is
(\ref{equ:epslowerbound}). First note that $\epsilon^{\BPsmall}(\dl,
\dr, L, w)$ is a non-increasing function in $L$. This follows by
comparing DE for two constellations, one, say, of length $L_1$
and one of length $L_2$, $L_2 > L_1$.  It therefore suffices to prove
(\ref{equ:epslowerbound}) for the limit of $L$ tending to infinity.

Let $(\dl, \dr, w)$ be fixed with $w>w(\dl, \dr)$, where 
$$w(\dl, \dr)=\max\Big\{2^{16}, 2^4\dl^2\dr^2, \frac{(2\dl\dr(1+\frac{2\dl}{1\!-\!2^{\!-\!1\!/\!(\dr\!-\!2)}}))^8}{(1\!-\!2^{\!-1\!/\!(\dr\!-\!2)})^{16}(\frac12(1\!-\!\frac{\dl}{\dr}))^8}\Big\}.$$ 
Our strategy is as follows. We pick $L'$ (length of constellation)
sufficiently large (we will soon see what ``sufficiently'' means)
and choose an entropy, call it $\hat{\chi}$. Then we apply
Theorem~\ref{thm:onesidedDE}. Throughout this section, we will use $\x^*$ and $\epsilon^*$
to denote the FP and the corresponding channel parameter  
guaranteed by Theorem~\ref{thm:onesidedDE}. We are faced with two possible scenarios.
Either there exists a {\em FP with the desired properties} or there
exists a {\em FP with parameter $\epsilon^*=1$} and entropy at most
$\hat{\chi}$. We will then show (using Theorem~\ref{the:propertiesEXIT})
that for sufficiently large $L'$ the second alternative is not possible.
As a consequence, we will have shown the existence of a FP with the
desired properties. Using again Theorem~\ref{the:propertiesEXIT}
we then show that $\epsilon^*$ is close to
$\epsilon^{\MAPsmall}$ and that $\epsilon^*$ is a lower bound for the
BP threshold of the coupled code ensemble.

Let us make this program precise.  Pick $\hat{\chi} =
\frac{\xunstab(1)+x^{\BPsmall}(\dl, \dr)}{2}$ and $L'$ ``large''.
In many of the subsequent steps we require specific lower bounds on $L'$.
Our final choice is one which obeys all these lower bounds.
Apply Theorem~\ref{thm:onesidedDE} with parameters $L'$ and
$\hat{\chi}$.  We are faced with two alternatives.

Consider first the possibility that the constructed one-sided FP $\x^*$
has parameter $\epsilon^*=1$ and entropy bounded by
\begin{align*}
\frac{(1-\frac{\dl}{\dr})(x^{\BPsmall} - \xunstab(1))}{16}-
\frac{\dl w}{2 \dr (L'+1)} 
\leq \chi(\x^*) \leq  \frac{x^{\BPsmall}\!+\!\xunstab(1)}{2}. 
\end{align*}
For sufficiently large $L'$ this can be simplified to
\begin{align}\label{equ:entropybnd}
\frac{(1-\frac{\dl}{\dr})(x^{\BPsmall} - \xunstab(1))}{32}
\leq \chi(\x^*) \leq  \frac{x^{\BPsmall}\!+\!\xunstab(1)}{2}. 
\end{align}
Let us now construct an EXIT curve based on $(\epsilon^*, \x^*)$
for a system of length $L$, $1 \leq L < L'$.
According to Theorem~\ref{the:propertiesEXIT}, it must be true that
\begin{align}\label{equ:epsstarbound}
 \epsilon^*& \leq  
 \epsilon^{\MAPsmall}(\dl, \dr) + \frac{2\dl\dr\vert x_0^* -
\xstab(\epsilon^*)\vert+c(\dl, \dr, w,
L)}{(1-(\dl-1)^{-\frac1{\dr-2}})^2}.
\end{align}
We claim that by choosing $L'$ sufficiently large and by choosing $L$ appropriately
we can guarantee that 
\begin{align}\label{equ:choiceofL}
\vert x_0^* - \xstab(\epsilon^*)\vert \leq \delta,\, \vert x_0^* - x^*_{-L} \vert \leq \delta, \, x_{-L'+L}^*\leq \delta,
\end{align}
where $\delta$ is any strictly positive number. If we assume this claim
for a moment, then we see that the right-hand-side of (\ref{equ:epsstarbound}) can be made
 strictly less than 1. Indeed, this follows from $w>w(\dl, \dr)$ (hypothesis of the theorem) by choosing $\delta$
sufficiently small (by making $L'$ large enough) and by choosing $L$ to be proportional to $L'$ (we will see how this is done in the sequel). This is a contradiction, since by assumption $\epsilon^*=1$.
This will show that the second alternative must apply.  

Let us now prove the bounds in (\ref{equ:choiceofL}).  In the sequel
we say that sections with values in the interval $[0, \delta]$ are
part of the {\em tail}, that sections with values in $[\delta,
\xstab(\epsilon^*)-\delta]$ form the {\em transition}, and that
sections with values in $[\xstab(\epsilon^*)-\delta, \xstab(\epsilon^*)]$
represent the {\em flat} part. Recall from Definition~\ref{def:entropy} that the
entropy of a constellation is the average (over all the $2L+1$ sections) erasure
fraction. The bounds in  (\ref{equ:choiceofL}) are equivalent to
saying that {\em both} the tail as well as the flat part must  
have length at least $L$.  From Lemma~\ref{lem:transitionlength},
for sufficiently small $\delta$, the transition has length at most
$\frac{wc(\dl, \dr)}{\delta}$ (i.e., the number of sections $i$ with erasure
value, $x_i$, in the interval $[\delta, \xstab(\epsilon^*)-\delta]$), a constant independent of $L'$.
 Informally, therefore, most of the length $L'$ consists of the tail or the flat
part. 

Let us now show all this more precisely.
First, we show that the flat part is large, i.e., it is at least
a fixed fraction of $L'$. We argue as follows.  Since the transition
contains only a constant number of sections, its contribution to the
entropy is small. More precisely, this contribution is upper bounded by $\frac{wc(\dl,
\dr)}{(L'+1)\delta}$. Further, the contribution to the entropy from
the tail is small as well, namely at most $\delta$.  Hence, the
total contribution to the entropy stemming from the tail plus the
transition is at most $\frac{wc(\dl, \dr)}{(L'+1)\delta}+\delta$.
However, the entropy of the FP is equal to
$\frac{x^{\BPsmall}+\xunstab(1)}{2}$. As a consequence, the flat
part must have length which is at least a fraction
$\frac{x^{\BPsmall}+\xunstab(1)}{2}-\frac{wc(\dl,
\dr)}{(L'+1)\delta}-\delta$ of $L'$. This fraction is strictly positive
if we choose $\delta$ small enough and $L'$ large enough.

By a similar argument we can show that the tail length is also a strictly
positive fraction of $L'$. From Lemma~\ref{lem:maximum},
$\xstab(\epsilon^*) > x^{\BPsmall}$.  Hence the flat part cannot
be too large since the entropy is equal to
$\frac{x^{\BPsmall}+\xunstab(1)}{2}$, which is strictly smaller
than $x^{\BPsmall}$. As a consequence, the tail has length 
at least a  fraction
$1-\frac{x^{\BPsmall}+\xunstab(1)}{2(x^{\BPsmall}-\delta)}-\frac{1+\frac{wc(\dl,\dr)}{\delta}}{L'+1}
$ of $L'$. As before, this fraction is also strictly positive if
we choose $\delta$ small enough and $L'$ large enough. Hence, by
choosing $L$ to be the lesser of the length of the flat part and
the tail, we conclude that the bounds in \eqref{equ:choiceofL} are
valid and that $L$ can be chosen arbitrarily large (by increasing $L'$).

Consider now the second case. In this case $\x^*$ is a proper one-sided FP  with
entropy equal to $\frac{x^{\BPsmall}+\xunstab(1)}{2}$ and
with parameter $\epsilon^{\BPsmall}(\dl, \dr) < \epsilon^* < 1$.  Now,
using again Theorem~\ref{the:propertiesEXIT}, we can show 
\begin{align*}
\epsilon^* & > \epsilon^{\MAPsmall}(\dl, \dr) \!-\! 2w^{-\frac18}
 \frac{4\dl\dr  + \frac{2\dr\dl^2}{(1-4w^{-\frac18})^{\dr}}}{(1\!-\!(\dl\!-\!1)^{-\frac{1}{\dr-1}})^2} \\
& \stackrel{\dl \geq 3}{\geq}  \epsilon^{\MAPsmall}(\dl, \dr) \!-\! 
 2w^{-\frac18}\frac{4\dl\dr  + \frac{2\dr\dl^2}{(1-4w^{-\frac18})^{\dr}}}{(1\!-\!2^{-\frac{1}{\dr}})^2}.
\end{align*}
To obtain the above expression, we take $L'$ to be sufficiently large in order
to bound the term in $c(\dl, \dr, w, L)$ which contains $L$. 
We also use \eqref{equ:choiceofL} and  choose $\delta$ to be sufficiently small to bound the corresponding terms.
 We also
replace $w^{-7/8}$ by $w^{-1/8}$ in $c(\dl, \dr, w, L)$.

To summarize: we conclude that for an entropy equal to
$\frac{x^{\BPsmall}(\dl, \dr)+\xunstab(1)}{2}$, for sufficiently
large $L'$, $\x^*$ must be a proper one-sided FP with parameter
$\epsilon^* $ bounded as above.

Finally, let us show that $\epsilon^*\big(1-\frac4{w^{1/8}}\big)^{\dr\dl}$
is a lower bound on the BP threshold. We start by claiming that
\begin{align*}
\epsilon^*\Big(1-\frac4{w^{1/8}}\Big)^{\dr\dl} & < 
\epsilon^*\Big(1-\frac4{w^{1/8}}\Big)^{(\dr-2)(\dl-1)}  \\ 
 & = \inf_{\frac14\leq \alpha\leq 1}\{\epsilon_i(\alpha): i \in [-L, L]\}.
\end{align*}
To prove the above claim we just need to check that
$\epsilon(x_0^*)x_{-L}^*/x_0^*$ (see bounds in phase (ii) of
Theorem~\ref{the:propertiesEXIT}) is greater than the above infimum. Since
in the limit of $L'\to \infty$, $\epsilon(x_0^*)x_{-L}^*/x_0^*
\to \epsilon^*$, for sufficiently large $L'$ the claim is true.

From the hypothesis of the theorem we have $w>2^{16}$.  Hence
$\epsilon^*(1-4w^{-1/8})^{\dr\dl}>0$. Apply forward DE (cf. Definition~\ref{def:forwardDE}) with parameter
$\epsilon<\epsilon^*(1-4w^{-1/8})^{\dr\dl}$ and length $L$. Denote the FP by
$\x^{\infty}$ (with indices belonging to $[-L,L]$). 
From Lemma~\ref{lem:stability} we then conclude that $\x^{\infty}$
 is point-wise upper bounded by $\x(\frac14)$. 
But for
$\alpha=1/4$ we have  
\begin{align*}
x_i(1/4)\leq x_0(1/4) = x^*_{-L'+L} \leq \delta < \xunstab(1) \quad \forall \,i,
\end{align*}
where we make use of the fact that $\delta$ can be chosen arbitrarily small. 
Thus $x_i^{(\infty)} < \xunstab(1)$ for all $i\in [-L ,L]$. Consider
a one-sided constellation, $\y$, with $y_i=x_0(1/4)<\xunstab(1)$ for all $i\in [-L, 0]$. 
Recall that for a one-sided constellation $y_i=y_0$ for all $i>0$ and as usual $y_i=0$ for
$i<-L$. Clearly, $\x^{(\infty)}\leq \y$. Now apply one-sided forward DE
to $\y$ with parameter $\epsilon$ (same as the one we applied to get $\x^{\infty}$) 
and call it's limit $\y^{(\infty)}$.
From part (i) of Lemma~\ref{lem:nontrivialonesided} we conclude that
the limit $\y^{(\infty)}$ is either proper or trivial. Suppose that
$\y^{\infty}$ is proper (implies non-trivial). Clearly, $y_i^{\infty}<\xunstab(1)$ for
all $i\in [-L, 0]$.  But from Lemma~\ref{lem:maximum} we have that for
any proper one-sided FP $y_0 \geq \xunstab(\epsilon) \geq \xunstab(1)$,
a contradiction.  Hence we conclude that $\y^{\infty}$ must be trivial
and so must be $\x^{\infty}$.

\section{Discussion and Possible Extensions}
\subsection{New Paradigm for Code Design}\label{sec:optimization}
The explanation of why convolutional-like LDPC ensembles perform so
well given in this paper gives rise to a new paradigm in
code design.

In most designs of codes based on graphs one encounters a
trade-off between the threshold and the error floor behavior.  E.g.,
for standard irregular graphs an optimization of the threshold tends
to push up the number of degree-two variable nodes. The same quantity,
on the other hand, favors the existence of low weight (pseudo)codewords.

For convolutional-like LDPC ensembles the important operational quantity
is the MAP threshold of the underlying ensemble. As, e.g., regular LDPC
ensembles show, it is simple to improve the MAP threshold {\em and}
to improve the error-floor performance -- just increase the minimum
variable-node degree. From this perspective one should simply pick as
large a variable-node degree as possible.

There are some drawbacks to picking large degrees. First, picking large degrees also
increases the complexity of the scheme. Second, although currently little 
is known about the scaling behavior of the convolutional-like LDPC ensembles,
it is likely that large degrees imply a slowing down of the convergence
of the performance of finite-length ensembles to the asymptotic limit.
This implies that one has to use large block lengths.  Third, the larger
we pick the variable-node degrees the higher the implied rate loss. Again,
this implies that we need very long codes in order
to bring down the rate loss to acceptable levels. It is tempting to
conjecture that the minimum rate loss that is required in order to achieve
the change of thresholds is related to the area under the EXIT curve
between the MAP and the BP threshold.  E.g., in Figure~\ref{fig:lrLexit}
this is the light gray area.  For the underlying ensemble this is exactly
the amount of guessing (help) that is needed so that a local algorithm
can decode correctly, assuming that the underlying channel parameter is
the MAP threshold.

Due to the above reasons, an actual code design will therefore 
try to maintain relatively small average degrees so as to keep this gray
area small. But the additional degree of freedom can be used to design
codes with good thresholds {\em and} good error floors.

\subsection{Scaling Behavior}
In our design there are three parameters that tend to infinity.
The number of variables nodes at each position, called $M$, the length of the constellation $L$, and
the length of the smoothing window $w$. Assume we fix $w$ and we are
content with achieving a threshold slightly below the MAP threshold.
How should we scale $M$ with respect to $L$ so that we achieve the
best performance? This question is of considerable practical
importance.  Recall that the total length of the code is of order $L
\cdot M$. We would therefore like to keep this product small. Further,
the rate loss is of order $1/L$ (so $L$ should be large) and $M$ should
be chosen large so as to approach the performance predicted by DE.
Finally, how does the number of required iterations scale as a function
of $L$?

Also, in the proof we assumed that we fix $L$ and let $M$ tend to infinity
so that we can use DE techniques. We have seen that in this limit the
boundary conditions of the system dictate the performance of the
system regardless of the size of $L$ (as long as $L$ is fixed and $M$
tends to infinity). Is the same behavior still true if we let $L$ tend
to infinity as a function of $M$? At what scaling does the behavior change?

\subsection{Tightening of Proof}
As mentioned already in the introduction, our proof is weak -- it
promises that the BP threshold approaches the MAP threshold of the
underling ensemble at a speed of $w^{-1/8}$. Numerical experiments
indicate that the actual convergence speed is likely to be exponential
and that the prefactors are very small. Why is the analytic statement
so loose and how can it be improved?

Within our framework it is clear that at many places the constants could
be improved at the cost of a more involved proof. It is therefore likely
that a more careful analysis following the same steps will give
improved convergence speeds. 

More importantly, for mathematical convenience we constructed an
``artificial'' EXIT curve by interpolating a particular fixed point
and we allowed the channel parameter to vary as a
function of the position. In the proof we then coarsely bounded the
``operational'' channel parameter by the minimum of all the individual
channel parameters. This is a significant source for the looseness
of the bound. A much tighter bound could be given if it were
possible to construct the EXIT curve by direct methods. As we have seen,
it is possible to show the existence of FPs of DE for a wide range of
EXIT values. The difficulty consists in showing that all these individual
FPs form a smooth one-dimensional manifold so that one can use the Area
Theorem and integrate with respect to this curve.

\subsection{Extensions to BMS Channels and General Ensembles}
Preliminary numerical evidence suggests that the behavior of the 
convolutional-like LDPC ensembles discussed in this paper is not
restricted to the BEC channel or to regular ensembles but is a general
phenomenon. We will be brief. A more detailed discussion can be found in the two recent papers
\cite{LMFC10,KMRU10}. Let us quickly discuss how one might want to attack 
the more general setup.

We have seen that the proof consists essentially of three steps.
\begin{itemize}
\item[(i)]{\em Existence of FP:} 
As long as we stay with the BEC, a similar procedure as the one used
in the proof of Theorem~\ref{thm:onesidedDE} can be used to show the
existence of the desired FP for more general {\em ensembles}.

General BMS {\em channels} are more difficult to handle, but FP theorems
do exist also in the setting of infinite-dimensional spaces. The most
challenging aspect of this step is to prove that the constructed FP
has the essential basic characteristics that we relied upon for our
later steps. In particular, we need it to be unimodal, to have a short
transition period, and to approach the FP density of the underlying
standard ensemble.

\item[(ii)] {\em Construction of EXIT Curve and Bounds:} 
Recall that in order to create a whole EXIT curve, we started with a FP
and interpolated the value of neighboring points. In order to ensure
that each such interpolated constellation is indeed a FP, we allowed
the local channel parameters to vary.  By choosing the interpolation properly, we were
then able to show that this variation is small.  As long as one remains
in the realm of BEC channels, the same technique can in principle be
applied to other ensembles. For general channels the construction seems
more challenging. It is not true in general that, given a constellation,
one can always find ``local'' channels that make this constellation a FP.
It is therefore not clear how an interpolation for general channels can
be accomplished. This is perhaps the most challenging hurdle for any
potential generalization.

\item[(iii)] {\em Operational Interpretation:}  
For the operational interpretation we relied upon the notion of physical
degradation. We showed that, starting with a channel parameter of a
channel which is upgraded w.r.t. to any of the local channels used in the
construction of the EXIT curve, we do not get stuck in a non-trivial
FP. For the BEC, the notion of degradation is very simple, it is the
natural order on the set of erasure probabilities, and this is a total
order. For general channels, an order on channels still exists in terms
of degradation, but this order is partial. We therefore require that
the local channels used in the construction of the EXIT curve are all degraded
w.r.t. a channel of the original channel family (e.g., the family of Gaussian channels)
with a parameter which is only slightly better than the parameter which
corresponds to the MAP threshold.
\end{itemize}

\subsection{Extension to General Coupled Graphical Systems}
Codes based on graphs are just one instance of graphical systems that
have distinct thresholds for ``local'' algorithms (what we called the
BP threshold) and for ``optimal'' algorithms (what we called the MAP
threshold). To be sure, coding is somewhat special -- it is conjectured
that the so-called replica-symmetric solution always determines the
threshold under MAP processing for codes based on graphs. Nevertheless,
it is interesting to investigate to what extent the coupling of general
graphical systems shows a similar behavior.  Is there a general class
of graphical models in which the same phenomenon occurs? If so, can
this phenomenon either be used to analyze systems or to devise better
algorithms? 

\section*{Acknowledgment}
We would like to thank N. Macris for his help in choosing the title
and sharing his insights and the reviewers for their thorough reading and numerous suggestions.
We would also like to thank D. J. Costello, Jr., P. Vontobel,
and A. R. Iyengar for their many comments and very helpful feedback
on an earlier draft.  Last but not least we would like to thank G.
D. Forney, Jr.  for handling our paper. The work of S. Kudekar
was supported by the grant from the Swiss National Foundation no
200020-113412.

\begin{appendices}
\section{Proof of Lemma~\ref{lem:lrLweight}\label{app:lrLweight}}
We proceed as follows. We first consider a ``circular'' ensemble.
This ensemble is defined in an identical manner as the $(\dl, \dr,
L)$ ensemble except that the  positions are now from $0$ to $K-1$
and index arithmetic is performed modulo $K$. This circular definition
symmetrizes all positions, which in turn simplifies calculations.

As we will see shortly, most codes in this circular ensemble have a minimum stopping
set distance which is a linear fraction of $M$.
To make contact with our original problem we now argue as follows.
Set $K=2 L+\dl$. If, for the circular ensemble, we take $\dl-1$ consecutive
positions and set them to $0$ then this ``shortened'' ensemble has length
$2 L +1$ and it is in one-to-one correspondence with the $(\dl, \dr, L)$ ensemble. Clearly,
no new stopping sets are introduced by shortening the ensemble. This
proves the claim.

Let $A(\dl, \dr, M, K, w)$ denote the expected number of stopping sets of
weight $w$ of the  ``circular'' ensemble. Let ${\mathcal C}$ denote a
code chosen uniformly at random from this ensemble.

Recall that every variable node at position $i$ connects to a check
node at positions $i-\dlh, \dots, i+\dlh$, modulo $K$. There are $M$
variable nodes at each position and $M \frac{\dl}{\dr}$ check nodes at
each position.  Conversely, the $M \dl$ edges entering the check nodes
at position $i$ come equally from variable nodes at position
$i-\dlh, \dots, i+\dlh$. These $M \dl$ edges are connected to the
check nodes via a random permutation.

Let $w_k$, $k \in \{0, \dots, K-1\}$, $0 \leq w_k \leq M$, denote the
{\em weight at position $i$}, i.e., the number of variable nodes at
position $i$ that have been set to $1$. Call $\ww=(w_0, \dots, w_{K-1})$
the {\em type}. We are interested in the expected number of stopping
sets for a particular type; call this quantity $A(\dl, \dr, M, K, \ww)$.
Since the parameters $(\dl, \dr, M, K)$ are understood from the context,
we shorten the notation to $A(\ww)$.  We claim that
\begin{align}
 &A(\ww) = 
\frac{\prod_{k=0}^{K-1}\binom{M}{w_k}\text{coef}\{p(x)^{M \frac{\dl}{\dr}}, 
x^{\sum_{i=-\dlh}^{\dlh} w_{k+i}} \}}{\prod_{k=0}^{K-1} \binom{M \dl}{\sum_{i=-\dlh}^{\dlh} w_{k+i}}} \nonumber \\
\stackrel{(a)}{\leq} & 
\prod_{k=0}^{K-1} \frac{(M\!+\!1)\binom{M}{\frac{\sum_{i=-\dlh}^{\dlh} w_{k+i}}{\dl}}\text{coef}\{p(x)^{M \frac{\dl}{\dr}}, 
x^{\sum_{i=-\dlh}^{\dlh} w_{k+i}} \}}{\binom{M \dl}{\sum_{i=-\dlh}^{\dlh} w_{k+i}}}. 
\label{equ:aupperbound}
\end{align}
where 
$p(x) = \sum_{i \neq 1} \binom{\dr}{i} x^i$.
This expression is easily explained. The $w_k$ variable nodes at 
position $k$ that are set to $1$ can be distributed over the $M$ variable nodes
in $\binom{M}{w_k}$ ways. Next, 
we have to distribute the 
$\sum_{i=-\dlh}^{\dlh} w_{k+i}$ ones among the
$M \frac{\dl}{\dr}$ check nodes in such a way
that every check node is fulfilled (since we are looking for
stopping sets, ``fulfilled'' means that a check node is either connected
to no variable node with associated value ``1'' or to {\em at least} two such nodes). This is encoded
by $\text{coef}\{p(x)^{M \frac{\dl}{\dr}}, x^{\sum_{i=-\dlh}^{\dlh} w_{k+i}}\}$.
Finally, we have to divide by the total number of possible connections;
there are $M \dl$ check node sockets at position $k$ and we 
distribute $\sum_{i=-\dlh}^{\dlh} w_{k+i}$ ones. This can be done in
$\binom{M \dl}{\sum_{i=-\dlh}^{\dlh} w_{k+i}}$ ways.
To justify step (a) note that 
\begin{align*}
\prod_{i=-\dlh}^{\dlh} \binom{M}{w_{k+i}}^{\frac1{\dl}} 
& \leq 2^{M \frac1{\dl} \sum_{i=-\dlh}^{\dlh} h(\frac{w_{k+i}}{M})} \\
& \stackrel{\text{Jensen}}{\leq}
2^{M h(\frac1{\dl} \sum_{i=-\dlh}^{\dlh} \frac{w_{k+i}}{M})} \\
& \leq (M+1) \binom{M}{\frac1{\dl} \sum_{i=-\dlh}^{\dlh} w_{k+i}}. 
\end{align*}
Note that, besides the factor $(M+1)$, which is negligible, 
each term in the product (\ref{equ:aupperbound}) has the exact form of the average stopping set weight
distribution of the standard $(\dl, \dr)$-ensemble of length $M$ and weight 
$\frac1{\dl}\sum_{i=-\dlh}^{\dlh} w_k$. (Potentially this weight is non-integral
but the expression is nevertheless well defined.)

We can therefore leverage known results concerning the stopping set
weight distribution for the underlying $(\dl, \dr)$-regular ensembles.
For the $(\dl, \dr)$-regular ensembles we know that the relative minimum distance is at least
$\hat{\omega}(\dl, \dr)$ with high probability \cite[Lemma D.17]{RiU08}.
Therefore, as long as $\frac1{\dl M} \sum_{i=-\dlh}^{\dlh} w_{k+i}
< \hat{\omega}(\dl, \dr)$, for all $0 \leq k <K$, $\frac{1}{M K}
\log A(\ww)$ is strictly negative and so most codes in the ensemble
do not have stopping sets of this type.  The claim now follows since in order for
the condition $\frac1{\dl M} \sum_{i=-\dlh}^{\dlh} w_{k+i} <
\hat{\omega}(\dl, \dr)$ to be violated for at least one position
$k$ we need $\frac1{M} \sum_{k=0}^{K-1} w_k$ to exceed $\dl
\hat{\omega}(\dl, \dr)$.

\section{Basic Properties of $h(x)$ \label{app:propertyofh(x)}}
Recall the definition of $h(x)$ from \eqref{equ:hfunction}. We have,
\begin{lemma}[Basic Properties of $h(x)$]\label{lem:propertyofh(x)}
Consider the $(\dl, \dr)$-regular ensemble with $\dl \geq 3$
and let $\epsilon \in (\epsilon^{\BPsmall}, 1]$. 
\begin{itemize}
\item[(i)] $h'(\xunstable) > 0$ and $h'(\xstable) < 0$; 
$|h'(x)| \leq \dl \dr$ for $x \in [0, 1]$.
\item[(ii)]
There exists a unique value $0\leq x_*(\epsilon) \leq \xunstable$ so
that $h'(x_*(\epsilon)) = 0$, and there exists a unique value $\xunstable
\leq x^*(\epsilon) \leq \xstable$ so that $h'(x^*(\epsilon))=0$.

\item[(iii)]
Let 
\begin{align*}
& \kappa_*(\epsilon) = \min\{ -h'(0), \frac{-h(x_*(\epsilon))}{x_*(\epsilon)} \}, \\
& \lambda_*(\epsilon) = \min\{ h'(\xunstable), \frac{-h(x_*(\epsilon))}{\xunstable - x_*(\epsilon)} \}, \\
& \kappa^*(\epsilon) = \min\{h'(\xunstable), \frac{h(x^*(\epsilon))}{x^*(\epsilon)-\xunstable}\}, \\
& \lambda^*(\epsilon) = \min\{ -h'(\xstable), \frac{h(x^*(\epsilon))}{\xstable - x^*(\epsilon)}\}. 
\end{align*}
The quantities $\kappa_*(\epsilon), \lambda_*(\epsilon),
\kappa^*(\epsilon)$, and $\lambda^*(\epsilon)$ are non-negative and
depend only on the channel parameter $\epsilon$ and the degrees $(\dl,
\dr)$. In addition, $\kappa_*(\epsilon)$ is strictly positive for
all $\epsilon \in [0, 1]$.

\item[(iv)]
For $0 \leq \epsilon \leq 1$,
\begin{align*}
x_*(\epsilon) > \frac{1}{\dl^2 \dr^2}.
\end{align*}

\item[(v)]
For $0 \leq \epsilon \leq 1$,
\begin{align*}
\kappa_*(\epsilon) \geq \frac1{8\dr^2}.
\end{align*}

\item[(vi)]
If we draw a line from $0$ with slope $-\kappa_*$, then $h(x)$
lies below this line for $x \in [0 , x_*]$.

If we draw a line from $\xunstable$ with slope $\lambda_*$, then
$h(x)$ lies below this line for all $x \in [x_*, \xunstable]$.

If we draw a line from $\xunstable$ with slope $\kappa^*$, then $h(x)$
lies above this line for $x \in [\xunstable , x^*]$.

Finally, if we draw a line from $\xstable$ with slope $-\lambda^*$, then
$h(x)$ lies above this line for all $x \in [x^*, \xstable]$.
\end{itemize}
\end{lemma}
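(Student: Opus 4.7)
The plan is to reduce all six claims to the single structural observation
\begin{equation*}
h'(x)+1 \;=\; \epsilon\, g(x), \qquad g(x):=(\dl-1)(\dr-1)(1-(1-x)^{\dr-1})^{\dl-2}(1-x)^{\dr-2},
\end{equation*}
where $g$ is nonnegative and vanishes at both $x=0$ and $x=1$ (the first endpoint uses $\dl\ge 3$). The first step is to prove that $g$ is strictly unimodal on $(0,1)$. The cleanest way is to substitute $u=1-(1-x)^{\dr-1}$, a strictly increasing bijection of $[0,1]$ onto itself, under which $g$ becomes proportional to the Beta-like expression $u^{\dl-2}(1-u)^{(\dr-2)/(\dr-1)}$ whose logarithm is strictly concave. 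Consequently $h'$ is strictly unimodal on $[0,1]$ with $h'(0)=h'(1)=-1$; when $\epsilon>\epsilon^{\BPsmall}$ its peak is positive, so $h'$ has exactly two zeros $0<x_*<x^*<1$, giving (ii). Combined with Lemma~\ref{lem:stableandunstable}, which places $\xunstable$ and $\xstable$ as positive roots of $h$ that sandwich these critical points, this also yields (i); the bound $|h'|\le\dl\dr$ follows by majorizing each factor of $g$ by $1$.

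For (iii), positivity of $h(x^*)$ and of $-h(x_*)$ is a direct sign check from the shape of $h$ (strict negative minimum at $x_*$, strict positive maximum at $x^*$), so the chord ratios in $\kappa_*,\lambda_*,\kappa^*,\lambda^*$ are positive; the derivative ingredients $-h'(0)=1$, $h'(\xunstable)>0$ and $-h'(\xstable)>0$ are supplied by (i). For (iv), I would plug into the characterizing equation $\epsilon\,g(x_*)=1$ the elementary bounds $1-(1-x_*)^{\dr-1}\le(\dr-1)x_*$ and $(1-x_*)^{\dr-2}\le 1$ to extract $x_*^{\dl-2}\ge 1/[\epsilon(\dl-1)(\dr-1)^{\dl-1}]$, and then verify case by case for $\dl\ge 3$, $\dr\ge\dl$, $\epsilon\le 1$ that this exceeds $(\dl^2\dr^2)^{-(\dl-2)}$. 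For (v), using the FP equation to rewrite $\epsilon(1-(1-x_*)^{\dr-1})^{\dl-1}/x_*$ exhibits $-h(x_*)/x_*=1-\bar g/g(x_*)$, where $\bar g$ is the mean of $g$ on $[0,x_*]$; since $g(t)\approx(\dl-1)(\dr-1)^{\dl-1}t^{\dl-2}$ near $0$ and $g$ is monotone on $[0,x_*]$, $\bar g/g(x_*)$ is bounded away from $1$ by essentially $1/(\dl-1)$, which delivers the very loose target $1/(8\dr^2)$ by a wide margin.

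For (vi) I would use a common two-ingredient template for all four inequalities. Consider the second, $h(x)\le\lambda_*(x-\xunstable)$ on $[x_*,\xunstable]$, and set $\phi(x):=h(x)-\lambda_*(x-\xunstable)$. The chord clause in the definition of $\lambda_*$ gives $\phi(x_*)\le 0$, and $\phi(\xunstable)=0$; the slope clause $\lambda_*\le h'(\xunstable)$ gives $\phi'(\xunstable)\ge 0$, while $\phi'(x_*)=-\lambda_*\le 0$. Because $\phi'=h'-\lambda_*$ inherits the unimodality of $h'$, on $[x_*,\xunstable]$ it has at most one sign change (from $-$ to $+$), so $\phi$ first decreases then increases, and its maximum on the interval is attained at an endpoint; both endpoint values are $\le 0$, forcing $\phi\le 0$. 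The line from $0$ with slope $-\kappa_*$ on $[0,x_*]$ is the same argument (in fact $h$ is convex there, so the chord bound alone suffices), and the two ``above the line'' statements on $[\xunstable,x^*]$ and $[x^*,\xstable]$ are the same template with reversed signs.

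The main obstacle I anticipate is bookkeeping rather than conceptual: the location of the argmax of $h'$, equivalently the interior inflection point of $h$, relative to $\{x_*,\xunstable,x^*,\xstable\}$ varies with $\epsilon$, so $h$ need not be concave on the whole of $[\xunstable,x^*]$ nor convex on the whole of $[x_*,\xunstable]$. In those mixed cases the chord bound alone fails, and the slope clause in the definitions of $\kappa^*,\lambda^*$ together with the unimodality of $\phi'$ are precisely what rescue the argument; keeping straight which ingredient carries which step across the four subintervals is the bulk of the work.
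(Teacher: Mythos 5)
Your overall strategy is sound and is genuinely different from the paper's in several parts, so let me go through them. For (ii) the paper differentiates $h$ twice, shows $h''$ has a single positive root by an explicit formula, and then uses Rolle's theorem; your route via the substitution $u=1-(1-x)^{\dr-1}$, the Beta-type expression $u^{\dl-2}(1-u)^{(\dr-2)/(\dr-1)}$ and strict log-concavity is equivalent in content but arguably cleaner. For (iv) the paper bounds $|h''|$ and applies the mean-value theorem; your argument solves the first-order condition $\epsilon g(x_*)=1$ directly with the two elementary inequalities you list, and the arithmetic $(\dl-1)(\dr-1)^{\dl-1}<(\dl^2\dr^2)^{\dl-2}$ for $\dl\geq 3$, $\dr\geq\dl$ does check out. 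For (vi) the paper argues by root-counting: if $h-l$ had four roots, Rolle would force $h''$ to vanish twice, a contradiction; your endpoint-plus-single-sign-change argument using unimodality of $\phi'=h'-\text{const}$ is correct and is a nice alternative. The observation that on $[0,x_*]$ the chord bound alone suffices (because $h$ is convex there, $x_*$ being before the peak of $g$) is also right.

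The gap is in (v). The identity $-h(x_*)/x_*=1-\bar g/g(x_*)$ is indeed exact, but it is also essentially a restatement of $-h(x_*)/x_*=1-\epsilon(1-(1-x_*)^{\dr-1})^{\dl-1}/x_*$ and does not by itself make the bound easier. Working it out, one finds
$$\frac{\bar g}{g(x_*)}=\frac{1}{(\dl-1)(\dr-1)}\sum_{j=0}^{\dr-2}(1-x_*)^{-j},$$
which is strictly greater than $1/(\dl-1)$ for every $x_*>0$ (each term in the sum is $\ge 1$, and all but one are $>1$), so the heuristic ``$\bar g/g(x_*)$ is bounded away from $1$ by essentially $1/(\dl-1)$'' is not a valid upper estimate: the gap $1-\bar g/g(x_*)$ is strictly smaller than $(\dl-2)/(\dl-1)$, and how much smaller depends on how large $x_*$ can be. In fact the natural upper bound obtained from this formula, $\bar g/g(x_*)\le\frac{1}{(\dl-1)(1-x_*)^{\dr-2}}$, is only useful if one first establishes an upper bound on $x_*$ (something like $x_*\le 1-2^{-1/(\dr-2)}$), which is a nontrivial extra fact not in your sketch and not in (iv), which only gives a lower bound on $x_*$. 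The paper sidesteps this entirely by using $x_*\le 1$, observing that $-h$ is pointwise decreasing in $\epsilon$ on $[0,\xunstab(1)]$ so that the worst case is $\epsilon=1$, and then plugging an explicit test point $\hat x=2^{-(\dl-1)/(\dl-2)}(\dr-1)^{-(\dl-1)/(\dl-2)}$ into $-h_1$ to extract $1/(8\dr^2)$. So for (v), either adopt the paper's test-point route, or supply and prove a genuine upper bound on $x_*(\epsilon)$ over $\epsilon\in(\epsilon^{\BPsmall},1]$ before invoking the $\bar g/g(x_*)$ formula.
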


\begin{figure}[htp]
\begin{centering}
\input{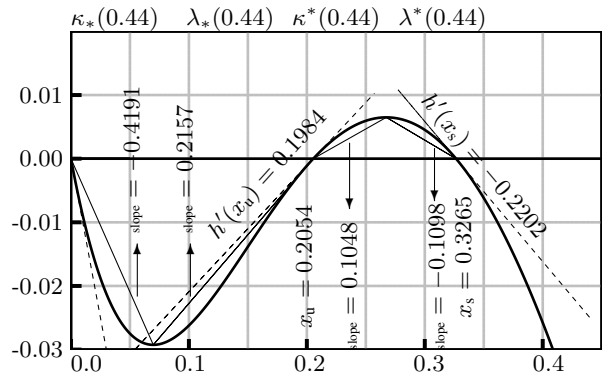}
\caption{
Pictorial representation of the various quantities which appear in
Lemma~\ref{lem:propertyofh(x)}. We use the $(3,6)$ ensemble to transmit
over a BEC with erasure probability  $\epsilon = 0.44$.
The function $h(x) = 0.44(1-(1-x)^{5})^{2} - x$ is represented in
the figure by the smooth bold curve. 
The roots of
$h(x)=0$ or, equivalently, the FPs of DE are given by $0$,
$\xunstab(0.44) \approx 0.2054$, and $\xstab(0.44)
\approx 0.3265$. There are only two stationary points of $h(x)$,
i.e., only two points at which $h'(x)=0$. They are given by $x_*(0.44) \approx
0.0697$ and $x^*(0.44) \approx 0.2673$. Along with the curve $h(x)$, the figure
contains three dashed lines representing the tangents at the points $0$,
$\xunstab(0.44)$ and $\xstab(0.44)$. The slopes of the
tangents at $0$, $\xunstab(0.44)$ and $\xstab(0.44)$
are $h'(0) = -1$, $h'(\xunstab) = 0.1984$ and
$h'(\xstab) = -0.2202$, respectively.	Also shown 
are the four lines which bound $h(x)$ in the various regions. 
These lines are given (their end-points) by:
$\{(0,0), (x_*,h(x_*))\}$ , $\{(x_*,h(x_*)), (\xunstab(0.44),
0)\}$, $\{(\xunstab(0.44),0),(x^*,h(x^*)) \} $ and
$\{(x^*,h(x^*)),(\xstab(0.44),0)\}$ and have slopes $-0.4191$,
$0.2157$, $0.1048$ and $-0.1098$ respectively. Thus we have $\kappa^*(0.44)
= 0.1048$, $\lambda^*(0.44) = 0.1098$, $\kappa_*(0.44) = 0.4191$ and $\lambda_*(0.44)
=0.1984$.}
\label{fig:propertyofh(x)}
\end{centering}
\end{figure}

\begin{example}[$(3, 6)$-Ensemble]
Consider transmission using a code from the $(3,6)$
ensemble over a BEC with $\epsilon=0.44$. The fixed  point equation for
the BP decoder is given by
$$
x = 0.44 (1-(1-x)^5)^2.
$$
The function $h(x) = 0.44(1-(1-x)^5)^2 - x$
is shown in Figure~\ref{fig:propertyofh(x)}.  The equation $h(x)=0$
has exactly 3 real roots, namely, $0$, $\xunstab(0.44) \approx
0.2054$ and $\xstab(0.44) \approx 0.3265$. Further properties
of $h(x)$ are shown in Figure~\ref{fig:propertyofh(x)}.
\end{example}
Let us prove each part separately. In order to lighten our
notation, we drop the $\epsilon$ dependence for quantities like
$\xunstab$, $\xstab$, $x_*$, or $x^*$.
\begin{itemize}
\item[(i)]
Note that $h(x) > 0$ for all $x\in (\xunstab, \xstab)$, with equality at
the two ends.  This implies that $h'(\xunstab) > 0$ and that $h'(\xstab)
< 0$. With respect to the derivative, we have
\begin{align*}
\hspace{-0.5cm} |h'(x)| &=  
 |\epsilon (\dl\!-\!1)(\dr\!-\!1) (1\!-\!x)^{\dr-2}(1\!-\!(1\!-\!x)^{\dr-1})^{\dl-2}\!-\!1| \\ 
& \leq  (\dl-1)(\dr-1)+1 \leq \dl \dr. 
\end{align*}

\item[(ii)] 
We claim that $h''(x)=0$ has exactly one real solution in $(0,1)$.
We have
\begin{align}\label{equ:second_deri}
& h''(x)  = \nonumber \\ & \epsilon(\dl-1)(\dr-1)(1-x)^{\dr-3}(1-(1-x)^{\dr-1})^{\dl-3} \nonumber \\
& \times \Big[ (1-x)^{\dr-1} (\dl\dr-\dl - \dr) - \dr+2\Big].
\end{align}
Thus $h''(x) = 0$ for $x\in (0,1)$ only at 
\begin{align}\label{equ:crit_second_deri}
x = 1 - \Big(\frac{\dr-2}{\dl \dr- \dl - \dr}\Big)^{\frac1{\dr-1}}.
\end{align}
Since $\dl\geq 3$, the above solution is in $(0, 1)$.

Since $h(0)=h(\xunstab)=h(\xstab)=0$, we know from Rolle's theorem
that there must exist an $0\leq x_*\leq \xunstab$ and an $\xunstab
\leq x^* \leq \xstab$, such that $h'(x_*) = h'(x^*) = 0$.

Now suppose that there exists a $y\in (0,1)$, $x_* \neq y \neq x^*$, such
that $h'(y) = 0$, so that $h'(\cdot)$ vanishes at three distinct places
in $(0, 1)$.  Then by Rolle's theorem we conclude that $h''(x)=0$
has at least two roots in the interval $(0,1)$, a contradiction.

\item[(iii)] 
To check that the various quantities in part (iii) are strictly positive,
it suffices to verify that $h(x_*)\neq 0$ and $h(x^*) \neq 0$. But we
know from Lemma \ref{lem:stableandunstable} that $h(x)=0$ has exactly
two solutions, namely $\xunstab$ and $\xstab$, and neither of them
is equal to $x^*$ or $x_*$ since $h'(\xunstab)>0$.

\item[(iv)]
From \eqref{equ:second_deri}, for all $x\in [0,1]$ we can 
upper bound $\vert h''(x) \vert$ by
\begin{align}\label{equ:sec_deri_bound}
(\dl\!-\!1)(\dr\!-\!1)[\dl\dr\!-\!\dl\!-\!\dr\!-\! \dr\!+\!2] 
< \dl^2 \dr^2.
\end{align}
Note that $h'(0)=-1$ and, by definition, $h'(x_*)=0$, so that
$\frac{1}{x_*}=\frac{h'(x_*) - h'(0)}{x_*-0}$.  Consider the function
$h'(x)$, $x \in [0, x_*]$.  From the continuity of the function
$h'(x)$ and, using the mean-value theorem, we conclude that there
exists an $\eta\in (0,x_*)$ such that $h''(\eta)=\frac{h'(x_*) - h'(0)
}{x_*}$. But from \eqref{equ:sec_deri_bound} we know that $h''(\eta)
< \dl^2 \dr^2$.  It follows that $\frac{1}{x_*} = \frac{h'(x_*) -
h'(0) }{x_*} < \dl^2 \dr^2$.

\item[(v)]
To get the universal lower bound on $\kappa_*(\epsilon)$ note that the
dominant (i.e., smaller) term in the definition of $\kappa_*(\epsilon)$
is $\frac{-h(x_*(\epsilon))}{x_*(\epsilon)}$.  (The second term,
$-h'(0)$, is $1$.)  Recall that $x_*$ is the point where $h(x)$ takes
on the minimum value in the range $[0, \xunstable]$. We can therefore
rewrite  $\kappa_*(\epsilon)$ in the form $\frac1{x_*} \max_{0 \leq x
\leq \xunstable} \{-h(x)\}$.  To get a lower bound on $\kappa_*(\epsilon)$
we use the trivial upper bound $x_*(\epsilon) \leq 1$.  It therefore
remains to lower bound $\max_{0 \leq x \leq \xunstable} \{-h(x)\}$.  Notice
that $-h(x)$ is a decreasing function of $\epsilon$ for every $x\in
[0,\xunstab(1)]$. Thus, inserting $\epsilon=1$, we get
\begin{align*}
& \max_{0 \leq x \leq \xunstab(1)} 
[x-(1-(1-x)^{\dr-1})^{\dl-1}] \\
= & \max_{0 \leq x \leq \xunstab(1)} 
[(x^{\frac{1}{\dl-1}})^{\dl-1}-(1-(1-x)^{\dr-1})^{\dl-1}] \\
    \geq  & \max_{0 \leq x \leq (\dr-1)^{-\frac{\dl-1}{\dl-2}}} 
	  (x^{\frac{1}{\dl-1}}-(\dr-1) x) x^{\frac{\dl-2}{\dl-1}} .
\end{align*}
Let us see how we derived the last inequality. First we claim that for $x\in
[0,(\dr-1)^{-\frac{\dl-1}{\dl-2}}]$ we have $x^{\frac{1}{\dl-1}}\geq (\dr-1)x
\geq  1-(1-x)^{\dr-1}$. Indeed, this can be easily seen by using the identity
$1-(1-x)^{\dr-1}=x(1+(1-x)+\dots+(1-x)^{\dr-2})$ and $x\leq 1$. Then we use  
$A^{\dl-1} - B^{\dl-1} = (A-B)(A^{\dl-2}+A^{\dl-3}B+\dots+B^{\dl-2}) \geq (A-B)A^{\dl-2}$
for all $0\leq B\leq A.$ Finally we use
\begin{align*}
(\xunstab(1))^{\frac1{\dl-1}} = (1 - ( 1 - \xunstab(1))^{\dr-1}) \leq (\dr-1)\xunstab(1),
\end{align*}
so that
\begin{align}\label{equ:boundonxuone}
\xunstab(1) \geq (\dr-1)^{-\frac{\dl-1}{\dl-2}}.
\end{align}
As a consequence $[0, (\dr-1)^{-\frac{\dl-1}{\dl-2}}] \subseteq [0,
\xunstab(1)]$ and hence we get the last inequality.  Now we can further
lower bound the right-hand-side above by evaluating it at any element
of $[0, (\dr-1)^{-\frac{\dl-1}{\dl-2}}]$.

We pick $\hat{x}=2^{-\frac{\dl-1}{\dl-2}} (\dr-1)^{-\frac{\dl-1}{\dl-2}}$.
Continuing the chain of inequalities we get \begin{align*}
	  & \stackrel{x=\hat{x}}{\geq}
	  (2^{\dl-1} (\dr-1))^{-\frac{1}{\dl-2}}
	  (\hat{x})^{\frac{\dl-2}{\dl-1}} \\ & = (2^{\dl-1}
	  (\dr-1))^{-\frac{1}{\dl-2}} (2^{-1}(\dr-1)^{-1}) \\ & =
	  \frac1{2^{\frac{2\dl-3}{\dl-2}}(\dr-1)^{\frac{\dl-1}{\dl-2}}}
	  \stackrel{(a)}{\geq} \frac1{8(\dr-1)^2} \geq \frac1{8\dr^2}.
\end{align*} Since $\dl\geq 3$ we have $\frac{2\dl-3}{\dl-2}\leq 3$ and
$\frac{\dl-1}{\dl-2}\leq 2$. Hence we obtain $(a)$.  

\item[(vi)] 
Let us prove that for all $x\in (\xunstab, x^*)$, $h(x)$  is strictly
above the line which contains the point $(\xunstab, 0)$ and has slope
$\kappa^* $. Denote this line by $l(x)$. More precisely, we have $l(x)
= \kappa^* (x - \xunstab)$.  Suppose to the contrary that there exists
a point $y\in (\xunstab, x^*)$ such that $h(y)<l(y)$. In this case we
claim that the equation $h(x) - l(x)=0$ must have at least $4$ roots.

This follows from (a) $h(\xunstab)=l(\xunstab)$, (b) $h'(\xunstab)
\geq l'(\xunstab)$, (c) $h(y) < l(y)$, (d) $h(x^*) \geq l(x^*)$, and,
finally, (e) $h(1)<l(1)$, where $\xunstab < y < x^* < 1$.  If all these
inequalities are strict then the $4$ roots are distinct.  Otherwise,
some roots will have higher multiplicities.

But if $h(x) - l(x)=0$ has at least $4$ roots then $h''(x)-l''(x)=0$
has at least $2$ roots. Note that $l''(x)=0$, since $l(x)$ is a linear
function. This leads to a contradiction, since, as discussed in part
(ii), $h''(x)$ has only one (single) root in $(0, 1)$.

The other cases can be proved along similar lines. 
\end{itemize}

\section{Proof of Lemma~\ref{lem:transitionlength}}\label{app:transitionlength}
We split the transition into several stages.  Generically, in each of
the ensuing arguments we consider a section with associated value just
above the lower bound of the corresponding interval. We then show that,
after a fixed number of further sections, the value must exceed the
upper bound of the corresponding interval. Depending on the length $L$
and the entropy of the constellation there might
not be sufficiently many sections left in the constellation to pass all
the way to $\xstable-\delta$.  In this case the conclusion of the lemma
is trivially fulfilled. Therefore, in the sequel, we can always assume
that there are sufficiently many points in the constellation.

In the sequel, $\kappa_*(\epsilon)$ and $x_*(\epsilon)$ are the specific
quantities for a particular $\epsilon$, whereas $\kappa_*$ and $x_*$
are the strictly positive universal bounds valid for all $\epsilon$,
discussed in Lemma~\ref{lem:propertyofh(x)}. We write $\kappa_*$ and $x_*$
instead of $\frac{1}{8 \dr^2}$ and $\frac{1}{\dl^2 \dr^2}$ to emphasize
their operational meaning.
\begin{itemize}
\item[(i)] 
{\em Let $\delta>0$. Then there are at most $w
(\frac{1}{\kappa_* \delta}+1)$ sections $i$ with
value $x_i$ in the interval $[\delta, x_*(\epsilon)]$.}

Let $i$ be the smallest index so that $x_i \geq \delta$. If $x_{i+(w-1)}
\geq x_*(\epsilon)$ then the claim is trivially fulfilled. Assume
therefore that $x_{i+(w-1)} \leq x_*(\epsilon)$.  Using the monotonicity
of $g(\cdot)$,
\begin{align*}
x_i & = \epsilon g(x_{i-(w-1)}, \dots, x_i, \dots, x_{i+(w-1)}) \\
& \leq \epsilon g(x_{i+(w-1)}, \dots, x_{i+(w-1)}).
\end{align*}
This implies
\begin{align*}
x_{i+(w-1)} - x_i  & \geq x_{i+(w-1)} - \epsilon g(x_{i+(w-1)},\dots, x_{i+(w-1)})  \\
& \stackrel{(\ref{equ:hfunction})}{=} -h(x_{i+(w-1)}) 
\stackrel{\text{Lemma~\ref{lem:propertyofh(x)} (vi)}}{\geq} -l(x_{i+(w-1)}) \\
& \geq -l(x_i) 
\geq -l(\delta) = \kappa_*(\epsilon) \delta.
\end{align*}
This is equivalent to
\begin{align*}
x_{i+(w-1)} \geq x_i + \kappa_*(\epsilon) \delta.
\end{align*}
More generally, using the same line of reasoning, 
\begin{align*}
x_{i+l(w-1)} \geq x_i + l \kappa_*(\epsilon) \delta,
\end{align*}
as long as $x_{i+l (w-1)} \leq x_*(\epsilon)$. 

We summarize. The total distance we have to cover is $x_*-\delta$
and every $(w-1)$ steps we cover a distance of at least
$\kappa_*(\epsilon) \delta$ as long as we have not surpassed $x_*(\epsilon)$.  
Therefore, after $(w-1) \lfloor
\frac{x_*(\epsilon)-\delta}{\kappa_*(\epsilon) \delta} \rfloor$ steps
we have either passed $x_*$ or we must be strictly closer to $x_*$ than $\kappa_*(\epsilon) \delta$.
Hence, to cover the remaining distance we need at most $(w-2)$ extra
steps.  The total number of steps needed is therefore upper bounded by
$w-2+(w-1) \lfloor \frac{x_*(\epsilon)-\delta}{\kappa_*(\epsilon)
\delta} \rfloor$, which, in turn, is upper bounded by $w
(\frac{x_*(\epsilon)}{\kappa_*(\epsilon) \delta}+1)$.  The final claim
follows by bounding $x_*(\epsilon)$ with $1$ and $\kappa_*(\epsilon)$
by $\kappa_*$.

\item[(ii)] 
{\em From $x_*(\epsilon)$ up to $\xunstab(\epsilon)$ it takes at most
$ w (\frac{8}{3 \kappa_* (x_*)^2}+2)$ sections.}

Recall that $\xavg_i$ is defined by 
$\xavg_i = \frac1{w^2}\sum_{j,k =0}^{w-1}x_{i+j-k}.$
From Lemma~\ref{lem:avgprop} (i),
$x_i \leq \epsilon g(\xavg_i,\xavg_i, \dots, \xavg_i) = \xavg_i + h(\xavg_i)$.
Sum this inequality over all sections from $-\infty$ to $k \leq 0$,  
\begin{align*}
\sum_{i=-\infty}^{k} x_i \leq \sum_{i=-\infty}^{k} \xavg_i + \sum_{i=-\infty}^{k} h(\xavg_i).
\end{align*}
Writing $ \sum_{i=-\infty}^{k} \xavg_i$ in terms of the $x_i$, for all $i$, and rearranging terms,
\begin{align*}
- \sum_{i=-\infty}^{k} h(\xavg_i) & \leq
\frac{1}{w^2}\sum_{i=1}^{w-1} {w-i+1 \choose 2} (x_{k+i}-x_{k-i+1})  \\
& \leq \frac{w}6 (x_{k+(w-1)}-x_{k-(w-1)}).
\end{align*}
Let us summarize:
\begin{align}\label{equ:momentum}
x_{k+(w-1)}-x_{k-(w-1)} & \geq - \frac{6}{w} \sum_{i=-\infty}^{k} h(\xavg_i).
\end{align}

From (i) and our discussion at the beginning, we can assume that
there exists a section $k$ so that $x_*(\epsilon) \leq x_{k-(w-1)}$.
Consider sections $x_{k-(w-1)}, \dots, x_{k+(w+1)}$, so that in
addition $x_{k+(w-1)} \leq \xunstab(\epsilon)$.  If no such $k$ exists
then there are at most $2w-1$ points in the interval $[x_*(\epsilon),
\xunstab(\epsilon)]$, and the statement is correct a fortiori.

From (\ref{equ:momentum}) we know that we have to lower bound $-\frac{6}{w}
\sum_{i=-\infty}^{k} h(\xavg_i)$. Since by assumption $x_{k+(w-1)} \leq
\xunstab(\epsilon)$, it follows that $\xavg_{k} \leq \xunstab(\epsilon)$, so that
every contribution in the sum $-\frac{6}{w}
\sum_{i=-\infty}^{k} h(\xavg_i)$ is positive.  Further, by (the Spacing)
Lemma~\ref{lem:spacing}, $w(\xavg_i - \xavg_{i-1}) \leq 1$.  Hence,
\begin{align}\label{equ:negarea1}
-\frac{6}{w} \sum_{i=-\infty}^{k} h(\xavg_i) & \geq
-6 \sum_{i=-\infty}^{k} h(\xavg_i)(\xavg_{i} - \xavg_{i-1}).
\end{align} 
Since by assumption $x_*(\epsilon) \leq x_{k-(w-1)}$, it follows that
$\xavg_{k} \geq x_*(\epsilon)$ and by definition $x_{-\infty}=0$.
Finally, according to Lemma~\ref{lem:propertyofh(x)} (iii), $-h(x) \geq \kappa_*(\epsilon) x$
for $x \in [0, x_*(\epsilon)]$. Hence,
\begin{align}\label{equ:negarea2}
-6 \sum_{i=-\infty}^{k} h(\xavg_i)(\xavg_{i} - \xavg_{i-1}) 
& \geq  6 \kappa_*(\epsilon) \int_{0}^{\frac{x_*(\epsilon)}2} x dx \nonumber \\ & 
=  \frac34 \kappa_*(\epsilon) (x_*(\epsilon))^2.
\end{align}
The inequality in \eqref{equ:negarea2} follows since there must exist a section
with value greater than $\frac{x_*(\epsilon)}2$ and smaller than $x_*(\epsilon)$.
Indeed,  suppose, on the contrary, that there is no section with value
between $(\frac{x_*(\epsilon)}2, x_*(\epsilon))$. Since $\xavg_k \geq
x_*(\epsilon)$, we must then have that $\xavg_{k} - \xavg_{k-1} >
\frac{x_*(\epsilon)}2$. But by the Spacing Lemma~\ref{lem:spacing} we have that
$\xavg_k - \xavg_{k-1}\leq \frac1{w}$. This would imply that $\frac1{w} >  
\frac{x_*(\epsilon)}2$. In other words, $w<\frac{2}{x_*(\epsilon)}$. Using the
 universal lower bound on $x_*(\epsilon)$ from Lemma~\ref{lem:propertyofh(x)} (iv),
 we conclude that $w< 2 \dl^2 \dr^2$, a contradiction to the hypothesis of the
 lemma. 

Combined with (\ref{equ:momentum}) this implies that
\begin{align*}
x_{k+(w-1)}-x_{k-(w-1)} \geq \frac34 \kappa_*(\epsilon) (x_*(\epsilon))^2.
\end{align*}
We summarize. The total distance we have to cover is
$\xunstab(\epsilon)-x_*(\epsilon)$ and every $2(w-1)$ steps we cover a
distance of at least $\frac34 \kappa_*(\epsilon) (x_*(\epsilon))^2$ as long
as we have not surpassed $\xunstab(\epsilon)$.  Allowing for $2(w-1)-1$
extra steps to cover the last part, bounding again $w-1$ by $w$,
bounding $\xunstab(\epsilon)-x_*(\epsilon)$ by $1$ and replacing $\kappa_*(\epsilon)$
and $x_*(\epsilon)$ by their universal lower bounds,
proves the claim.

\item[(iii)]
{\em From $\xunstab(\epsilon)$ to $\xunstab(\epsilon)+\frac{3
\kappa_* (x_*)^2}{4(1+12 \dl \dr)}$ it takes at most
$2 w$ sections.}

Let $k$ be the smallest index so that $\xunstab(\epsilon)\leq
x_{k-(w-1)}$.  It follows that $\xavg_{k-2 w+1} \leq \xunstab(\epsilon)
\leq \xavg_{k}$.  Let $\hat{k}$ be the largest index so that
$\xavg_{\hat{k}}\leq \xunstab(\epsilon)$. From the previous line we
deduce that $k-2 w+1 \leq \hat{k} < k$, so that $k-\hat{k} \leq 2 w-1$.

 We use
again (\ref{equ:momentum}). Therefore, let us bound $-\frac{6}{w}
\sum_{i=-\infty}^{k} h(\xavg_i)$.  We have
\begin{align*}
-\frac{6}{w} \! \sum_{i=-\infty}^{k}& \!\!h(\xavg_i)  =
-\frac{6}{w} \sum_{i=-\infty}^{\hat{k}} \!\!h(\xavg_i) 
\!-\!\frac{6}{w} \! \sum_{i=\hat{k}+1}^{k} h(\xavg_i) \\ 
& \stackrel{(a)}{\geq} \frac34 \kappa_*(\epsilon) (x_*(\epsilon))^2 \!-\! 12 \dl \dr (x_{k+(w-1)}\!-\!\xunstab(\epsilon)). 
\end{align*}
We obtain $(a)$ as follows. There are two sums, one from $-\infty$ to $\hat{k}$
and another from $\hat{k}+1$ to $k$. Let us begin with the sum from $-\infty$ to $\hat{k}$. 
 First, we claim that $\xavg_{\hat{k}}\geq \frac{x_*(\epsilon)}2$. 
Indeed, suppose $\xavg_{\hat{k}}<\frac{x_*(\epsilon)}2$. Then, using the definition of $\hat{k}$, 
\begin{align*}
\xavg_{\hat{k}+1} - \xavg_{\hat{k}}
& >    \xunstable -  \frac{x_*(\epsilon)}2 \geq \frac{\xunstable}{2} \geq \frac{\xunstab(1)}{2}  \\
& \stackrel{(\ref{equ:boundonxuone})}{\geq} 
\frac{(\dr-1)^{-\frac{\dl-1}{\dl-2}}}{2} \geq \frac{1}{2 \dr^2}. 
\end{align*}
But from (the Spacing) Lemma~\ref{lem:spacing}, $\xavg_{\hat{k}+1}
- \xavg_{\hat{k}} \leq \frac1{w}$, a contradiction, since from the
hypothesis of the lemma $w\geq 2\dr^2$.  Using \eqref{equ:negarea1}
and \eqref{equ:negarea2} with the integral from $0$ to $x_*(\epsilon)/2$
we get the first expression in $(a)$. Note that the integral till
$x_*(\epsilon)/2$ suffices because either $\xavg_{\hat{k}}\leq x_*(\epsilon)$
or, following an argument similar to the one after \eqref{equ:negarea2}, 
there must exist a section with value between $(\frac{x_*(\epsilon)}2,
x_*(\epsilon))$. We now focus on the sum from $\hat{k}+1$ to $k$. 
 From the definition of $\hat{k}$, for all $i\in [\hat{k}+1, k]$, $\vert
h(\xavg_i)\vert \leq \dl \dr (\xavg_i - \xunstable)$. 
 Indeed, recall from Lemma~\ref{lem:propertyofh(x)} that $|h'(x)| \leq \dl \dr$
for $x \in [0, 1]$.  In particular, this implies that the line with
slope $\dl \dr$ going through the point $(\xunstab(\epsilon),
0)$ lies above $h(x)$ for $x \geq \xunstab(\epsilon)$. Further, $\xavg_i -
\xunstable \leq \xavg_k - \xunstable \leq x_{k+w-1} - \xunstable$. Finally,
using $k-\hat{k} \leq 2w-1$ we get the second expression in $(a)$.  

From (\ref{equ:momentum}) we now conclude that 
\begin{align*}
x_{k+w-1} &- \xunstab(\epsilon)  
 \geq \\ 
& \frac34 \kappa_*(\epsilon) (x_*(\epsilon))^2 - 12 \dl \dr (x_{k+w-1}-\xunstab(\epsilon)),
\end{align*}
which is equivalent to 
\begin{align*}
x_{k+(w-1)}-\xunstab(\epsilon) \geq 
\frac{3 \kappa_*(\epsilon) (x_*(\epsilon))^2
}{4(1+12 \dl \dr)}.
\end{align*}
The final claim follows by replacing again $\kappa_*(\epsilon)$
and $x_*(\epsilon)$ by their universal lower bounds $\kappa_*$
and $x_*$.

\item[(iv)] 
{\em From $\xunstab(\epsilon)+\frac{3 \kappa_*
(x_*)^2}{4(1+12 \dl \dr)}$ to $\xstab(\epsilon)-\delta$ it
takes at most  $w \frac{1}{\delta
\min\{\kappa^{\text{min}}, \lambda^{\text{min}}\}}$ steps, where}
\begin{alignat*}{2}
\kappa^{\text{min}} & = \min_{\epsilon^{\text{min}} \leq \epsilon \leq 1} 
\kappa^*(\epsilon), \phantom{xxx}
& \lambda^{\text{min}} & = \min_{\epsilon^{\text{min}} \leq \epsilon \leq 1} \lambda^*(\epsilon).
\end{alignat*}

From step (iii) we know that within a fixed number of steps we
reach at least $\frac{3 \kappa_* (x_*)^2}{4(1+12
\dl \dr)}$ above $\xunstab(\epsilon)$. On the other hand we know from
Lemma~\ref{lem:maximum} that $x_0 \leq \xstab(\epsilon)$. We
conclude that $\xstab(\epsilon)-\xunstab(\epsilon)
\geq \frac{3 \kappa_* (x_*)^2}{4(1+12
\dl \dr)}$. From Lemma~\ref{lem:stableandunstable} we know that
$\xstab(\epsilon^{\BPsmall})-\xunstab(\epsilon^{\BPsmall})=0$
and that this distance is strictly increasing for $\epsilon\geq
\epsilon^{\BPsmall}$. Therefore there exists a unique number, call
it $\epsilon^{\text{min}}$, 
$\epsilon^{\text{min}} > \epsilon^{\BPsmall}(\dl, \dr)$, so that
\begin{align*}
\xstab(\epsilon)-\xunstab(\epsilon) \geq \frac{3 \kappa_* (x_*)^2}{4(1+12 \dl \dr)},
\end{align*}
if and only if $\epsilon \geq \epsilon^{\text{min}}$.
 As defined above let,
\begin{alignat*}{2}
\kappa^{\text{min}} & = \min_{\epsilon^{\text{min}} \leq \epsilon \leq 1} 
\kappa^*(\epsilon), \phantom{xxx}
& \lambda^{\text{min}} & = \min_{\epsilon^{\text{min}} \leq \epsilon \leq 1} \lambda^*(\epsilon).
\end{alignat*}
Since $\epsilon^{\text{min}} > \epsilon^{\BPsmall}(\dl, \dr)$,
both $\kappa^{\text{min}}$ and $\lambda^{\text{min}}$ are strictly
positive.  Using similar reasoning as in step (i), we conclude that
in order to reach from $\xunstab(\epsilon)+\frac{3 \kappa_*
(x_*)^2}{4(1+12 \dl \dr)}$ to $\xstab(\epsilon)-\delta$ it
takes at most $w \frac{\xstab(\epsilon)-\xstab(\epsilon)}{\delta
\min\{\kappa^{\text{min}}, \lambda^{\text{min}}\}}$ steps, where
we have used the fact that, by assumption,  $\delta \leq  \frac{3
\kappa_* (x_*)^2}{4(1+12 \dl \dr)}$.
\end{itemize}

From these four steps we see that we need at most
\begin{align*}
& w(\frac1{\delta} [\frac{1}{\kappa_*}+\frac{1}{\min\{\kappa^{\text{min}}, \lambda^{\text{min}}\}}] + [\frac{2}{3\kappa_*(x_*)^2}+5]) \\
& \leq w\frac1{\delta} [\frac{1}{\kappa_*}+\frac{1}{\min\{\kappa^{\text{min}}, \lambda^{\text{min}}\}}+\frac{2}{3\kappa_*(x_*)^2}+5] \\
& \triangleq w \frac{c(\dl, \dr)}{\delta}
\end{align*}
sections in order to reach $\xstable-\delta$ once we reach $\delta$.
This constant depends on $(\dl, \dr)$ but it is independent
of $L$ and $\epsilon$.

\section{Proof of Theorem~\ref{thm:onesidedDE}}\label{app:onesidedDE}
To establish the existence of $\x$ with the desired properties, we use
the Brouwer FP theorem: it states that every continuous function $f$ from
a convex compact subset $S$ of a Euclidean space to $S$ itself has a FP.

Let $\z$ denote the one-sided forward DE FP for parameter
$\epsilon=1$. Let the length $L$ be chosen
in accordance with the the statement of the theorem. By assumption
$L>\frac{w}{\frac{\dr}{\dl}-1}$. Using Lemma~\ref{lem:nontrivialonesided}
part (ii),  we conclude that $\chi(\z)\geq \frac12(1-\frac{\dl}{\dr})$,
i.e., $\z$ is non-trivial.  By Lemma~\ref{lem:nontrivialonesided} part
(i), it is therefore proper, i.e., it is non-decreasing. Suppose that
$\chi(\z)\leq \chi$. In this case, it is easy to verify that the second
statement of the theorem is true. So in the remainder of the proof we
assume that $\chi(\z)>\chi$.

Consider the Euclidean space $[0, 1]^{L+1}$.  
Let $S(\chi)$ be the subspace
\begin{align*}
S(\chi)   = & \{ \x \in [0, 1]^{L+1}: \chi(\x) = \chi; x_i \leq z_i, i \in [-L,0]; \\ 
& x_{-L}\leq x_{-L+1}\leq \dots\leq x_0\}.
\end{align*}
First note that $S(\chi)$ is non-empty since $\z$ is non-trivial and
has entropy at least $\chi$.  We claim that $S(\chi)$ is convex and
compact. Indeed, convexity follows since $S(\chi)$ is a convex polytope
(defined as the intersection of half spaces).  Since $S(\chi) \subset
[0, 1]^{L+1}$ and $S(\chi)$ is closed, $S(\chi)$ is compact.

Note that any constellation belonging to $S(\chi)$ has entropy $\chi$
and is increasing, i.e., any such constellation is proper. 
 Our first step is to define a map $V(\x)$
which ``approximates'' the DE equation and is well-suited for applying
the Brouwer FP theorem. The final step in our proof is then to show that
the FP of the map $V(\x)$ is in fact a FP of one-sided DE.

The map $V(\x)$ is constructed as follows. 
For $\x\in S(\chi)$, let $U(\x)$ be the map, 
$$
(U(\x))_i = g(x_{i-w+1}, \dots, x_{i+w-1}), \quad i\in [-L,0].
$$

Define $V: S(\chi)\to S(\chi)$ to be the map
\begin{align*}
V(\x) & = 
\begin{cases}
U(\x) \frac{\chi}{\chi(U(\x))}, & \chi \leq \chi(U(\x)), \\
\alpha(\x) U(\x) + (1-\alpha(\x)) \z, & \text{otherwise},
\end{cases}
\end{align*}
where $$\alpha(\x) = \frac{\chi(\z) - \chi}{\chi(\z) - \chi(U(\x))}.$$
Let us show that this map is well-defined. First consider the
case $\chi \leq \chi(U(\x))$.  Since $\x\in S(\chi)$, $\x \leq \z$
(componentwise). By construction, it follows that  $U(\x) \leq U(\z) =
\z$, where the last step is true since $\z$ is the forward FP of DE for
$\epsilon=1$. We conclude that $U(\x) \frac{\chi}{\chi(U(\x))} \leq \z$.
Further, by construction $\chi(U(\x) \frac{\chi}{\chi(U(\x))})=\chi$.
It is also easy to check that $U(\x)$ is non-negative and that it is
non-decreasing.  It follows that in this case $V(\x) \in S(\chi)$.

Consider next the case $\chi > \chi(U(\x))$.  As we have seen, $\x \leq
\z$ so that $\chi(U(\x)) \leq \chi(U(\z))=\chi(\z)$.  Together with
$\chi > \chi(U(\x))$ this shows that $\alpha(\x) \in [0, 1]$. Further, the
choice of $\alpha(\x)$ guarantees that $\chi(V(\x))=\chi$. It is easy to
check that $V(\x)$ is increasing and bounded above by $\z$. This
shows that also in this case $V(\x) \in S(\chi)$.

We summarize, $V$ maps $S(\chi)$ into itself.

In order to be able to invoke Brouwer's theorem we need to show that
$V(\x)$ is continuous. This means we need to show that for every $\x
\in S(\chi)$ and for any $\varepsilon>0$, there exists a $\delta>0$
such that if $\y \in B(\x, \delta)\cap S(\chi)$ then $\| V(\y)-V(\x)\|_2\leq
\varepsilon$.

First, note that $U(\x)$ and $\chi(\x)$ are continuous maps on
$S(\chi)$. As a result, $\chi(U(\x))$, which is the composition of two
continuous maps, is also continuous.

Fix $\x \in S(\chi)$. We have three cases: (i) $\chi(U(\x)) > \chi$,
(ii) $\chi(U(\x)) < \chi$, and (iii) $\chi(U(\x)) = \chi$.

We start with (i). Let $\rho = \chi(U(\x)) - \chi$ and fix
$\varepsilon>0$.  From the continuity of $\chi(U(\x))$ we know that there
exists a ball $B(\x, \nu_1)$ of radius $\nu_1>0$ so that if $\y \in B(\x,
\nu_1) \cap S(\chi)$ then $\vert \chi(U(\x)) - \chi(U(\y))\vert \leq
\rho$, so that $\chi(U(\y)) \geq \chi$. It follows that for those $y$,
$V(\y) = U(\y) \frac{\chi}{\chi(U(\y))}$.

For a subsequent argument we will need also a tight bound on $\vert
\chi(U(\x)) - \chi(U(\y))\vert$ itself. Let us therefore choose $\gamma
= \min \{\varepsilon, \rho\}$, $\gamma>0$. And let us choose $\nu_1$
so that if $\y \in B(\x, \nu_1) \cap S(\chi)$ then $\vert \chi(U(\x))
- \chi(U(\y))\vert \leq \frac{\gamma\chi}{2(L+1)}$, so that $\chi(U(\y))\geq \chi$.

Further, since
$U(\cdot)$ is  continuous, there exists $\nu_2>0$ such that for all $\y\in
B(\x,\nu_2) \cap S(\chi)$, $\| U(\x) - U(\y)\|_2 \leq \frac{\varepsilon}2$. 
Choose $\nu = \min\{\nu_1, \nu_2\}$. Then for all $\y\in B(\x,\nu) \cap S(\chi)$,
\begin{align*}
&\| V(\x) - V(\y)\|_2 = \chi \Big\| \frac{U(\x)}{\chi(U(\x))} -
\frac{U(\y)}{\chi(U(\y))}\Big\|_2 \\
& \leq \chi \Big\| \frac{U(\x)}{\chi(U(\x))}
-\frac{U(\y)}{\chi(U(\x))}\Big\|_2 
 + \chi\Big\| \frac{U(\y)}{\chi(U(\x))} - \frac{U(\y)}{\chi(U(\y))}\Big\|_2 \\
& \stackrel{\chi(U(\x))>\chi}{\leq}  \| U(\x)-U(\y)\|_2 + \frac{\|U(\y)\|_2}{\chi}\Big\vert \chi(U(\x)) - \chi(U(\y))\Big\vert \\
& \leq  \frac{\varepsilon}2 + \frac{(L+1)}{\chi}\Big\vert\chi(U(\x)) - \chi(U(\y))\Big\vert \\
& \leq  \frac{\varepsilon}2 + 
\frac{(L+1)}{\chi}\frac{\gamma\chi}{2(L+1)} \leq \varepsilon,
\end{align*} 
where above we used the bound $\|U(\y)\|_2 \leq (L+1)$.

Using similar logic, one can prove (ii). 

Consider claim (iii). In this case 
$\chi(U(\x)) = \chi$, which implies that $V(\x) = U(\x)$. 
As before, there exists $0<\nu_1$ such that for all $\y \in B(\x,
\nu_1) \cap S(\chi)$,  $\|U(\x) - U(\y)\|_2 < \frac{\varepsilon}2$.
Let $\gamma = \min\{\chi(\z) - \chi, \chi\}$. 
Since we assumed that $\chi(\z)>\chi$, we have $\gamma>0$.  
Furthermore, there
exists $0<\nu_2$ such that for all $\y\in B(\x,\nu_2) \cap S(\chi)$,  $\vert\chi(U(\x))
- \chi(U(\y))\vert < \frac{\gamma\varepsilon}{2(L+1)}$.  Choose
$\nu = \min\{ \nu_1, \nu_2\}$. Consider $\y\in B(\x,\nu) \cap S(\chi)$. Assume
first that $\chi(U(\y)) \geq \chi$. Thus, as before,
\begin{align*}
\| V(\x) - V(\y) \|_2 \leq \varepsilon. 
\end{align*}
Now let us assume that $\chi(U(\y)) < \chi$. Then we have
\begin{align*}
& \| V(\x) - V(\y) \|_2 = \| U(\x) - \alpha(\y) U(\y) - (1 - \alpha(\y)) \z\|_2 \\
& \leq \alpha(\y)\| U(\x) - U(\y)\|_2 + |1-\alpha(\y)|\|U(\x) - U(\z) \|_2 \\
& \leq \frac{\varepsilon}2 + (L+1) \Big\vert\frac{\chi(U(\y)) - \chi(U(\x))}{\chi(\z) - \chi(U(\y))} \Big\vert\\
& \leq \frac{\varepsilon}2 + \frac12 \Big\vert\frac{\gamma\varepsilon}{\chi(\z) - \chi} \Big\vert < \varepsilon,
\end{align*}
where above we used: (i) $\|U(\x)
- U(\z)\|_2 \leq L+1$,   (ii) $\chi(U(\y)) < \chi$, (iii) $\chi(U(\x))
= \chi$ (when we explicitly write $\vert 1-\alpha(\y) \vert$).

We can now invoke Brouwer's FP theorem to conclude that
$V(\cdot)$ has a FP in $S(\chi)$, call it $\x$.

Let us now show that, as a consequence, either there exists a one-sided
FP of DE with parameter $\epsilon=1$ and entropy bounded between
$\frac{(1-\frac{\dl}{\dr})(\chi-\xunstab(1))}{8}-\frac{\dl w}{2\dr (L+1)}$
and $\chi$, or $\x$ itself is a proper one-sided FP of DE with entropy $\chi$.
Clearly, either $\chi \leq \chi(U(\x))$ or $\chi(U(\x)) < \chi$. In the first case, i.e., if
$\chi \leq \chi(U(\x))$, then $\x=V(\x)=U(\x) \frac{\chi}{\chi(U(\x))}$.
Combined with the non-triviality of $\x$, we conclude that $\x$ is a proper
one-sided FP with entropy $\chi$ and the channel parameter (given by $\frac{\chi}{\chi(U(\x))}$)
less than or equal to 1. Also, from
Lemma~\ref{lem:maximum} we then conclude that the channel parameter is strictly
greater than $\epsilon^{\BPsmall}(\dl, \dr)$.   

Assume now the second case, i.e., assume that $\chi(U(\x))<\chi$. This implies that
$$ \x = \alpha(\x) (U(\x)) + (1-\alpha(\x))\z.  $$
But since $\x \leq \z$, $$ \alpha(\x) \x + (1-\alpha(\x))\z \geq \x = \alpha(\x)
(U(\x)) + (1-\alpha(\x)) \z.  $$ As a result, $\x \geq (U(\x))$. We will
now show that this implies the existence of a one-sided FP of DE with
parameter $\epsilon=1$ and with entropy bounded 
between $\frac{(1-\frac{\dl}{\dr})(\chi-\xunstab(1))}{8}-\frac{\dl w}{2 \dr (L+1)}$ and $\chi$. 

Let $\x^{(0)}=\x$ and define $\x^{(\ell)}= U(\x^{(\ell-1)})$, $\ell
\geq 1$.  By assumption, $\x \geq U(\x)$, i.e., $\x^{(0)} \geq
\x^{(1)}$. By induction this implies that $\x^{(\ell-1)} \geq
\x^{(\ell)}$, i.e, the sequence $\x^{(\ell)}$ is monotonically
decreasing. Since it is also bounded from below, it converges 
 to a fixed point of DE with parameter $\epsilon=1$,
call it $\x^{(\infty)}$.

We want to show that  $\x^{(\infty)}$ is non-trivial and we want
to give a lower bound on its entropy.  We do this by comparing
$\x^{(\ell)}$ with a constellation that lower-bounds $\x^{(\ell)}$
and which converges under DE to a non-trivial FP.

We claim that at least the last $N=(L+1) \frac{\chi-\xunstab(1)}{2}$
components of $\x$ are above $\frac{\chi+\xunstab(1)}{2}$:
\begin{align*} \chi (L+1) & = \chi(\x) (L+1) \leq N + (L+1-N)
\frac{\chi+\xunstab(1)}{2}, \end{align*} where on the right hand
side we assume (worst case) that the last $N$ components have height $1$ and the
previous $(L+1-N)$ components have height $\frac{\chi+\xunstab(1)}{2}$.
If we solve the inequality for $N$ we get $N \geq 
(L+1) \frac{\chi-\xunstab(1)}{2-\chi-\xunstab(1)} \geq (L+1) \frac{\chi-\xunstab(1)}{2}$. 

Consider standard DE for the underlying regular $(\dl,\dr)$ ensemble
and $\epsilon=1$.  
We claim that it takes at most $m$
\begin{align*}
m = \max\{ 
	\frac{2}{\kappa^*(1)(\chi-\xunstab(1))},
	\frac{2}{\lambda^*(1)(1-\frac{\dl}{\dr})}
	\}
\end{align*}
DE steps to go from the value $\frac{\chi+\xunstab(1)}{2}$ to a
value above $\frac{1+\frac{\dl}{\dr}}{2}$.  The proof idea is along
the lines used in the proof of  Lemma~\ref{lem:transitionlength}.
Consider the function $h(x)$ as defined in (\ref{equ:hfunction}) for
$\epsilon=1$.  Note that $\xunstab(1) < \frac{\chi+\xunstab(1)}{2}$
and that $\frac{1+\frac{\dl}{\dr}}{2} < \xstab(1)=1$. Further,
the function $h(x)$ is unimodal and strictly positive in the range
$(\xunstab(1), \xstab(1))$ and $h(x)$ is equal to the change in $x$
which happens during one iteration, assuming that the current value is $x$. 
If $\frac{\chi+\xunstab(1)}{2} \geq \frac{1+\frac{\dl}{\dr}}{2}$
then the statement is trivially
true. Otherwise, the progress in each required step is at least equal to
\begin{align*}
& \min\{
h(\frac{\chi+\xunstab(1)}{2}),
	h(\frac{1+\frac{\dl}{\dr}}{2})
\} \\
\geq & \min\{\kappa^*(1) (\frac{\chi+\xunstab(1)}{2}-\xunstab(1)), \lambda^*(1) (1-\frac{1+\frac{\dl}{\dr}}{2})\}.
\end{align*}
We use Lemma~\ref{lem:propertyofh(x)} part (vi) to get the last inequality. 
The claim now follows by observing that the total distance that has to be covered
is no more than $1$.

Consider the constellation $\y^{(0)}$, which takes the value $0$ for $[-L, -N]$
and the value $\frac{\chi+\xunstab(1)}{2}$ for $[-N+1, 0]$. By construction,
$\y = \y^{(0)} \leq \x^{(0)} = \x$. Define $\y^{(\ell)} = U(\y^{(\ell-1)})$,
$\ell \geq 1$. By monotonicity we know that $U(\y^{(\ell)}) \leq
U(\x^{(\ell)})$ (and hence $\y^{(\infty)} \leq \x^{(\infty)}$). In particular
this is true for $\ell = m$. But note that at least the last $N- wm$ positions
of $y^{(m)}$ are above $\frac{1+\frac{\dl}{\dr}}{2}$. Also, by the choice of
$L$, $N- w m \geq N/2$.

Define the constellation $\vc^{(0)}$ which takes the value $0$ for $[-L,
-N/2]$ and the value $\frac{1+\frac{\dl}{\dr}}{2}$ for $[-N/2+1, 0]$. 
Define $\vc^{(\ell)}=\frac{1+\frac{\dl}{\dr}}{2}
U(\vc^{(\ell-1)})$, $\ell \geq 0$. 
 Again, observe that by definition $\vc^{(0)}\leq \y^{(m)}$ and $\frac{1+\frac{\dl}{\dr}}{2}\leq 1$,  hence we have  
$\vc^{(\infty)}\leq \y^{(\infty)}$. 
From Lemma~\ref{lem:nontrivialonesided} we know that 
for a length $N/2=(L+1)\frac{\chi-\xunstab(1)}{4}$ and a channel parameter $\frac{1+\frac{\dl}{\dr}}{2}$ 
the resulting FP of forward DE has entropy at least
\begin{align*}
\chi'= \frac{1-\frac{\dl}{\dr}}{4}-\frac{\dl w}{\dr (\chi-\xunstab(1)) (L+1)}>0.
\end{align*}
Above, $\chi'>0$ follows from the first assumption on $L$ in the hypothesis of the theorem. 
It follows that $\vc^{(\infty)}$ has
(unnormalized) entropy at least equal to $\chi'(N/2)$ and therefore
normalized entropy at least $\frac{\chi'(\chi-\xunstab(1))}{4}$.

Since $\x^{(\infty)} \geq \y^{(\infty)} \geq \vc^{(\infty)}$, 
 we conclude that $\x^{(\infty)}$ is a one-sided FP of DE
for parameter $\epsilon=1$ with entropy bounded between 
$\frac{(1-\frac{\dl}{\dr})(\chi-\xunstab(1))}{8}-\frac{\dl w}{2\dr(L+1)}$ and $\chi$.

\section{Proof of Theorem~\ref{the:propertiesEXIT}}\label{app:fundamental}
\begin{itemize}
\item[(i)] {\em Continuity:}
In phases (i), (ii), and (iv) the map is differentiable by construction.
In phase (iii) the map is differentiable in each ``period.''  Further,
by definition of the map, the (sub)phases are defined in such a way that
the map is continuous at the boundaries.

\item[(ii)] {\em Bounds in Phase (i):}
Consider $\alpha \in [\frac34, 1]$. By construction of the EXIT
curve, all elements $x_i(\alpha)$, $i \in [-L, 0]$, are the same. In
particular, they are all equal to $x_0(\alpha)$. Therefore, all values
$\epsilon_i(\alpha)$, $i \in [-L+w-1, 0]$, are identical, and equal
to $\epsilon_0(\alpha)$.

For points close to the boundary, i.e., for $i \in [-L, -L+w-2]$, some
of the inputs involved in the computation of $\epsilon_i(\alpha)$ are
$0$ instead of $x_0(\alpha)$. Therefore, the local channel parameter
$\epsilon_i(\alpha)$ has to be strictly bigger than $\epsilon_0(\alpha)$
in order to compensate for this. This explains the lower bound on
$\epsilon_i(\alpha)$.

\item[(iii)] {\em Bounds in Phase (ii):}
Let $i \in [-L, 0]$ and $\alpha \in [\frac12, \frac34]$. Then
\begin{align*}
x_{-L}^* & \leq x_i(\alpha)  
          = \epsilon_i(\alpha) g(x_{i-w+1}(\alpha), \dots, x_{i+w-1}(\alpha)) \\
         & \leq \epsilon_i(\alpha) g(x_{0}^*, \dots, x_{0}^*) 
          = \epsilon_i(\alpha) \frac{x_{0}^*}{\epsilon(x_{0}^*)}.
\end{align*}
This gives the lower bound $\epsilon_i(\alpha) \geq \epsilon(x_{0}^*)
\frac{x_{-L}^*}{x_{0}^*}$.

\item[(iv)] {\em Bounds in Phase (iii):}
Let $\alpha \in [\frac14, \frac12]$ and $i\in [-L, 0]$.  Note that
$x_0(\frac12)=x^*_0$ but that $x_0(\frac14)=x^*_{-L'+L}$.  The range
$[\frac14, \frac12]$ is therefore split into $L'-L$ ``periods.'' In each
period, the original solution $x^*$ is ``moved in'' by one segment.
Let $p \in \{1, \dots, L'-L\}$ denote the current period we are
operating in.  In the sequel we think of $p$ as fixed and consider in
detail the interpolation in this period. To simplify our notation, we
reparameterize the interpolation so that if $\alpha$ goes from $0$ to
$1$, we moved in the original constellation exactly by one more segment.
This alternative parametrization is only used in this section.  In part
(vi), when deriving bounds on $\epsilon^*$, we use again the original
parametrization.

Taking this reparameterization into account, 
for $\alpha \in [0, 1]$, according
to Definition~\ref{def:EXIT},
\begin{align*}
x_i(\alpha)=
\begin{cases}
(x^*_{i-p})^{\alpha} (x^*_{i-p+1})^{1-\alpha}, & i \in [-L, 0], \\
0, & i < -L.
\end{cases}
\end{align*}
We remark that $x_i(\alpha)$ decreases with $\alpha$. Thus we have for any $\alpha$, $x_i(1)\leq x_i(\alpha) \leq x_i(0)$. 
By symmetry, $x_i(\alpha)=x_{-i}(\alpha)$ for $i \geq 1$.

We start by showing that if $x_i(\alpha) > \gamma$ and $i \in [-L+w-1,-w+1]$ then
$\epsilon_i(\alpha)/\epsilon^* \leq 1+\frac{1}{w^{1/8}}$.  For $\alpha
\in [0, 1]$, define
\begin{align*} 
f_i(\alpha) & =  \Bigl( 1- \frac1w\sum_{k=0}^{w-1}
x_{i-k}(\alpha) \Bigr)^{\dr-1}.
\end{align*}
Further, define
\begin{align*} 
f_i^* & =  \Bigl( 1- \frac1w\sum_{k=0}^{w-1} x^*_{i-p+1-k} \Bigr)^{\dr-1}.  
\end{align*}
Note that the values $x_i^*$ in the last definition are the
values of the one-sided FP. In particular, this means that for $i \geq 0$
we have $x_i^*=x_0^*$.

From the definition of the EXIT curve we have 
\begin{align}\label{eq:epsi} 
\epsilon_i(\alpha) =
\frac{x_i(\alpha)}{\Bigl(1-\frac{1}{w}
\sum_{j=0}^{w-1} f_{i+j}(\alpha)\Bigr)^{\dl-1}}.
\end{align} 
By monotonicity,
\begin{align*}
\Bigl(1\!-\!\frac{\sum_{j=0}^{w-1} f_{i+j}(\alpha)}{w}\Bigr)^{\dl-1} & \!\!\!\!\!\geq \!
\Bigl(1\!-\!\frac{\sum_{j=0}^{w-1} f_{i+j-1}^*}{w}\Bigr)^{\dl-1} 
\!\!\!\!\!=\! \frac{x^*_{i-p}}{\epsilon^*}.
\end{align*}
In the first step we used the fact that $-L+w-1\leq i\leq -w+1$ and the second step
is true by definition. 

Substituting this into the denominator of \eqref{eq:epsi} results in 
\begin{align*}
\frac{\epsilon_i(\alpha)}{\epsilon^*} & 
\leq \Big(\frac{x^*_{i-p+1}}{x^*_{i-p}}\Big)^{1-\alpha}  
\leq \frac{x^*_{i-p+1}}{x^*_{i-p}} = 
1+\frac{1}{\frac{x^*_{i-p+1}}{(\Delta x^*)_{i-p+1}} -1},
\end{align*}
where we defined $(\Delta x^*)_i=x^*_{i}-x^*_{i-1}$.
If we plug the upper bound on $(\Delta x^*)_{i-p+1}$ due to (the Spacing) 
Lemma~\ref{lem:spacing} into this expression we get  
\begin{align*}
\frac{1}{x^*_{i-p+1}/(\Delta x^*)_{i-p+1}-1} \leq \frac1{\Big(\frac{x^*_{i-p+1}}{\epsilon^*} \Big)^{\frac1{\dl-1}}\frac{w}{(\dl-1)(\dr-1)} - 1}.
\end{align*}
By assumption $x_i(\alpha)> \gamma$. But from the monotonicity we have 
$x^*_{i-p+1} =x_i(0) \geq x_i(\alpha)$. Thus $x^*_{i-p+1} > \gamma$. 
This is equivalent to 
\begin{align}\label{equ:spacingequivalent}
\Big(\frac{x^*_{i-p+1}}{\epsilon^*}\Big)^{\frac1{\dl-1}}\frac{w}{(\dr-1)(\dl-1)} - 1 & \geq w^{1/8}.
\end{align}
As a consequence,
\begin{align*}
\frac{\epsilon_i(\alpha)}{\epsilon^*} & 
\leq 1+\frac{1}{x^*_{i-p+1}/(\Delta x^*)_i-1} \leq  1+\frac1{w^{1/8}},
\end{align*}
the promised upper bound.

Let us now derive the lower bounds. First suppose that $x_i(\alpha)>\gamma$. 
For $i\in [-L, 0]$ we can use again monotonicity to conclude that
\begin{align*}
x^*_{i-p} & \leq x_i(\alpha) =\epsilon_i(\alpha) 
\Bigl(1\!-\!\frac{\sum_{j=0}^{w-1} f_{i+j}(\alpha)}{w}\Bigr)^{\dl-1} \\
& \leq \epsilon_i(\alpha) \frac{x^*_{i-p+1}}{\epsilon^*}.
\end{align*}
This proves that
\begin{align*}
\frac{\epsilon_i(\alpha)}{\epsilon^*} & \geq \frac{x^*_{i-p}}{x^*_{i-p+1}} = 
1 - \frac{(\Delta x^*)_{i-p+1}}{x^*_{i-p+1}}.
\end{align*}
Note that this sequence of inequalities is true for the whole range $i \in [-L, 0]$.
Since $x^*_{i-p+1}=x_i(0) \geq x_i(\alpha)$, we have $x^*_{i-p+1} > \gamma$ and using 
\eqref{equ:spacingequivalent} we have
\begin{align*}
\frac{(\Delta x^*)_{i-p+1}}{x^*_{i-p+1}} \leq \frac{1}{1+w^{1/8}}.
\end{align*} 
As a consequence,
\begin{align*}
\frac{\epsilon_i(\alpha)}{\epsilon^*} & \geq  
1 - \frac{(\Delta x^*)_{i-p+1}}{x^*_{i-p+1}}  \geq 1 - \frac{1}{1+w^{1/8}}.
\end{align*}

It remains to consider the last case, i.e., we assume that $x_i(\alpha) \leq \gamma$.
From Lemma~\ref{lem:avgprop}~(iv) we have
\begin{align*}
& (x_{i-p}^*/\epsilon^*)^{\frac{1}{\dl-1}}  \geq \\ 
& \phantom{\geq} \Big(1 \!-\! \frac1w\sum_{k=0}^{w-1}x_{i-p+w-1-k}^* \Big)^{\dr-2}
\frac{\dr\!-\!1}{w^2} \!\!\sum_{j, k=0}^{w-1}x_{i-p+j-k}^*  \\
& \geq \Big(1 \!-\! \frac1w\sum_{k=0}^{w-1}x_{i-p+w-k}^* \Big)^{\dr-2}
\frac{\dr\!-\!1}{w^2} \!\!\sum_{j, k=0}^{w-1}x_{i-p+j-k}^*, \\
\end{align*}
and
\begin{align*}
& (x_{i-p+1}^*/\epsilon^*)^{\frac{1}{\dl-1}}  \geq \\ 
& \phantom{\geq} \Big(1 \!-\! \frac1w\sum_{k=0}^{w-1}x_{i-p+w-k}^* \Big)^{\dr-2}
\frac{\dr\!-\!1}{w^2}\!\!
\sum_{j, k=0}^{w-1}x_{i-p+1+j-k}^*.
\end{align*}
We start with \eqref{eq:epsi}.  Write $x_i(\alpha)$ in the numerator
explicitly as $(x^*_{i-p})^{\alpha} (x^*_{i-p+1})^{1-\alpha}$
and bound each of the two terms by the above expressions.
This yields
\begin{align*}
& \Big(\frac{\epsilon_i(\alpha)}{\epsilon^*}\Big)^{\frac1{\dl-1}} \geq \Big(1 - \frac1w\sum_{k=0}^{w-1}x^*_{i-p+w-k}\Big)^{(\dr-2)} \frac{\dr-1}{w^2}\\
& \frac{(\sum_{j, k=0}^{w-1}x^*_{i-p+j-k})^{\alpha}(\sum_{j, k=0}^{w-1}x^*_{i-p+1+j-k})^{1-\alpha}}{1-\frac{1}{w} \sum_{j=0}^{w-1} f_{i+j}(\alpha)}. 
\end{align*}
Applying steps, similar to those used to prove Lemma~\ref{lem:avgprop} (ii), to the above denominator, we get:
\begin{align*}
& 1-\frac{1}{w} \sum_{j=0}^{w-1} f_{i+j}(\alpha) \leq
\frac{\dr-1}{w^2}\sum_{j, k=0}^{w-1} x_{i+j-k}(\alpha) \\
& \leq \frac{\dr-1}{w^2}\sum_{j, k=0}^{w-1} (x_{i-p+j-k}^*)^{\alpha} (x_{i-p+1+j-k}^*)^{1-\alpha}.
\end{align*}
Combining all these bounds and canceling common terms yields
\begin{align}\label{equ:lowerbound}
& \Big(\frac{\epsilon_i(\alpha)}{\epsilon^*}\Big)^{\frac1{\dl-1}} \geq \Big(1 - \frac1w\sum_{k=0}^{w-1}x^*_{i-p+w-k}\Big)^{(\dr-2)} \nonumber \\
& \frac{(\sum_{j, k=0}^{w-1}x^*_{i-p+j-k})^{\alpha}(\sum_{j, k=0}^{w-1}x^*_{i-p+1+j-k})^{1-\alpha}}{\sum_{j, k=0}^{w-1}(x^*_{i-p+j-k})^{\alpha}(x^*_{i-p+1+j-k})^{1-\alpha}}. 
\end{align}

Applying  Holder's inequality\footnote{For any two $n-$length real
sequences $(a_0,a_1,\dots,a_{n-1})$ and $(b_0,b_1,\dots,b_{n-1})$ and
two real numbers $p,q \in (1,\infty)$ such that $\frac1p+\frac1q=1$, 
Holder's inequality asserts that
\begin{align*} \sum_{k=0}^{n-1} |a_k b_k| \leq \Big(\sum_{k=0}^{n-1}
|a_k|^p\Big)^{\frac1p}\Big(\sum_{k=0}^{n-1} |b_k|^q\Big)^{\frac1q}.
\end{align*}}
we get
\begin{align*}
& \frac{(\sum_{j, k=0}^{w-1}x^*_{i-p+j-k})^{\alpha}(
\sum_{j, k=0}^{w-1}x^*_{i-p+1+j-k})^{1-\alpha}}{\sum_{j=0}^{w-1}\sum_{k=0}^{w-1}
(x^*_{i-p+j-k})^{\alpha}(x^*_{i-p+1+j-k})^{1-\alpha}} & \geq  1.
\end{align*}
Putting everything together we now get
\begin{align}\label{equ:maininequality}
 \Big(\frac{\epsilon_i(\alpha)}{\epsilon^*}\Big)^{\frac1{\dl-1}} \geq \Big(1 - \frac1w\sum_{k=0}^{w-1}x^*_{i-p+w-k}\Big)^{\dr-2}.  
\end{align}

By assumption $x_i(\alpha) \leq \gamma$. Again from monotonicity we have $x_i(\alpha)\geq x_i(1) = x^*_{i-p} $.
 Thus $x^*_{i-p} \leq \gamma$. Combining this with 
Lemma~\ref{lem:avgprop} (iii) and \eqref{equ:gamma} in the hypothesis of the theorem, we obtain
\begin{align*}
& \frac{(\dr-1)(\dl-1)(1+w^{1/8})}{
w}  \geq  \frac{1}{w^2}\sum_{j, k=0}^{w-1}x^*_{i-p+j-k}.
\end{align*} 
Suppose that $x^*_{i-p+w-\lceil w^{7/8}\rceil}>\frac{1}{w^{1/8}}$.
Then from the above inequality we conclude that
\begin{align*}
& \frac{(\dr-1)(\dl-1)(1+w^{1/8})}{
w}  \geq  \frac{1}{w^{2}w^{1/8}}(1+2+\dots+w^{\frac78}),
\end{align*}
where we set to zero all the terms smaller than $x^*_{i-p+w-\lceil w^{7/8}\rceil}$.
Upper bounding $(1+w^{1/8})$ by $2w^{1/8}$ we get
\begin{align*}
& 4(\dr-1)(\dl-1)
  \geq  w^{1/2}.
\end{align*}
But this is contrary to the hypothesis of the theorem,
$w>2^4 \dl^2 \dr^2$.  Hence we must have $x^*_{i-p+w-\lceil
w^{7/8}\rceil}\leq\frac{1}{w^{1/8}}$.  Therefore,
\begin{align*}
\frac1w\sum_{k=0}^{w-1}x^*_{i-p+w-k} \leq \frac{1}w\Big( \frac{w-\lceil w^{7/8}\rceil}{w^{1/8}} + \lceil w^{7/8}\rceil + 1\Big),
\end{align*}
where we replace $x^*_{i-p+1},\dots, x^*_{i-p+w-\lceil
w^{7/8}\rceil}$ by
$\frac1{w^{1/8}}$ and the remaining $\lceil w^{7/8}\rceil+1$ values by $1$.
Thus we have
$$
\frac1w\sum_{k=0}^{w-1}x^*_{i-p+w-k} \leq \frac4{w^{1/8}}.
$$
Using $w\geq 2^{16}$ and combining everything, we get
\begin{align*}
 \Big(\frac{\epsilon_i(\alpha)}{\epsilon^*}\Big)^{\frac1{\dl-1}} \geq \Big(1 - \frac4{w^{1/8}}\Big)^{\dr-2}.  
\end{align*}

\item[(v)]
{\em Area under EXIT Curve:}\footnote{A slightly more involved proof
shows that the area under the EXIT curve (or more precisely, the value of the EXIT integral) is {\em equal} to the design rate,
assuming that the design rate is defined in an appropriate way (see the
discussion on page \pageref{dis:designrate}).  For our purpose it is
sufficient, however, to determine the area up to bounds of order $w/L$.
This simplifies the expressions and the proof.}
Consider the set of $M$ variable nodes at position $i$, $i \in [-L,
L]$. We want to compute their associated EXIT integral, i.e., we want
to compute $\int_{0}^{1} h_i(\alpha) d\epsilon_i(\alpha)$. We use the
technique introduced in \cite{MMU08}.

We consider the set of $M$ computation trees of height $2$ rooted
in all variable nodes at position $i$, $i \in [-L, L]$. For each such
computation tree there are $\dl$ check nodes and $1+\dl(\dr-1)$ variable
nodes.  Each of the leaf variable nodes of each computation
tree has a certain position in the range $[i-w+1,
i+w-1]$. These positions differ for each computation tree.  
For each computation tree assign to its root node the channel value 
$\epsilon_i(\alpha)$, whereas each leaf variable node at position $k$
``sees'' the channel value $x_k(\alpha)$.

In order to compute $\int_{0}^{1} h_i(\alpha) d\epsilon_i(\alpha)$ we proceed
as follows.  We apply the standard area theorem \cite[Theorem 3.81]{RiU08} to
the $M$ simple codes represented by these $M$ computation trees.  Each such
code has length $1+ \dl(\dr-1)$ and $\dl$ (linearly independent) check nodes.
As we will discuss shortly, the standard area theorem tells us the value of the
sum of the $1+ \dl(\dr-1)$ individual EXIT integrals associated to a particular
code. This sum consists of the EXIT integral of the root node as well as the
$\dl (\dr-1)$ EXIT integrals of the leaf nodes.  Assume that we can determine
the contributions of the EXIT integrals of the leaf nodes for each computation
tree. In this case we can subtract the average such contribution from the sum
and determine the average EXIT integral associated to the root node.  In the
 ensuing argument, we consider a fixed instance of a computation tree rooted in
$i$. We then average over the randomness of the ensemble. For the root node the
channel value stays the same for all instances, namely, $\epsilon_i(\alpha)$ as
given in Definition~\ref{def:EXIT} of the EXIT curve. Hence, for the root
node the average, over the ensemble, is taken only over the EXIT value.  Then,
exchanging the integral (w.r.t. $\alpha$) and the average  and using the fact
that each edge associated to the root node behaves independently, we conclude
that the average EXIT integral associated to the root node is equal to
$\int_{0}^{1} h_i(\alpha) d\epsilon_i(\alpha)$, the desired quantity.  Let us
now discuss this program in more detail.

For $i \in [-L+w-1, L-w+1]$
we claim that the average sum of the EXIT integrals associated to any such
computation tree is equal to $1+\dl(\dr-2)$.  This is true since for $i$
in this range, the positions of all leaf nodes are in the range $[-L, L]$.
Now applying the area theorem\footnote{To be precise, the proof of the area theorem 
given in \cite[Theorem 3.81]{RiU08} assumes that
the channel value of the root node, call it $\epsilon_i(\alpha)$,
stays within the range $[0,1]$. This does not apply in our setting; 
for $\alpha\to 0$, $\epsilon_i(\alpha)$ becomes unbounded. Nevertheless, it
is not hard to show, by explicitly writing down the sum
of all EXIT integrals, using integration by parts and finally using the fact that $(\x(\alpha),\underline{\epsilon}(\alpha))$ is a FP, that the result still
applies in this more general setting.}
one can conclude that the average sum of all the $1+\dl(\dr-1)$ EXIT
 integrals associated to the tree code equals the number of variable nodes minus
the number of check nodes: $1+\dl(\dr-1) - \dl=1+\dl(\dr-2)$.

For $i \in [-L, -L+w-2] \cup [L-w+2, L]$ the situation is more
complicated.  It can happen that some of the leaf nodes of the computation
tree see a perfect channel for all values $\alpha$ since their position
is outside $[-L, L]$. These leaf nodes are effectively not present in the
code and we should remove them before counting.  Although it would not
be too difficult to determine the exact average contribution for such a
root variable node we only need bounds --  the average sum of the EXIT integrals
associated to such a root node is at least $0$ and at most $1+\dl(\dr-2)$.

We summarize: If we consider all computation trees rooted in all variable
nodes in the range $[-L, L]$ and apply the standard area theorem to
each such tree, then the total average contribution is at least $M (2 L -2 w
+3)(1+\dl(\dr-2))$ and at most $M (2 L+1)(1+\dl(\dr-2))$.  From these
bounds we now have to subtract the contribution of all the leaf nodes
of all the computation trees and divide by $M$ in order to determine
bounds on $\sum_{i=-L}^{L} \int_0^1 h_i(\alpha) d\epsilon_i(\alpha)$.

Consider the expected contribution of the $\dl(\dr-1)$ EXIT integrals
of each of the $M$ computation trees rooted at $i$, $i \in [-L+w-1,
L-w+1]$. We claim that this contribution is equal to $M \dl(\dr-1)^2/\dr$.
For computation trees rooted in $i \in [-L, -L+w-2] \cup [L-w+2, L]$,
on the other hand, this contribution is at least $0$ and at most $M \dl
(\dr-1)$.

Let us start with computation trees rooted in $i$, $i \in [-L+w-1,
L-w+1]$.  Fix $i$. It suffices to consider in detail one ``branch''
of a computation tree since the EXIT integral is an expected value
and expectation is linear. By assumption the root node is at position
$i$. It is connected to a check node, let's say at position $j$, $j \in
[i, i+w-1]$, where the choice is made uniformly at random.  In turn,
this check node has $(\dr-1)$ children. Let the positions of these
children be $k_1, \dots, k_{\dr-1}$, where all these indices are in
the range $[k-w+1, k]$, and all choices are independent and are made
uniformly at random.

Consider now this check node in more detail and apply the standard
area theorem to the corresponding parity-check code of length $\dr$.
The message from the root node is $x_i(\alpha)$, whereas the messages from
the leaf nodes are $x_{k_l}(\alpha)$, $l=1, \dots, \dr-1$, respectively.
We know from the standard area theorem applied to this parity-check
code of length $\dr$ that the sum of the $\dr$ EXIT integrals is equal
to $\dr-1$. So the average contribution of one such EXIT integral is
$(\dr-1)/\dr$, and the average of $(\dr-1)$ randomly chosen such EXIT
integrals is $(\dr-1)^2/\dr$. Recalling that so far we only considered
$1$ out of $\dl$ branches and that there are $M$ computation trees,
the total average contribution of all leaf nodes of all computation
trees rooted in $i$  should therefore be $M \dl(\dr-1)^2/\dr$.

Let us now justify why the contribution of the leaf nodes is equal to the
``average'' contribution. Label the $\dr$ edges of the check node from
$1$ to $\dr$, where ``$1$'' labels the root node.  Further, fix $j$,
the position of the check node.  As we have seen, we get the associated
channels $(i, k_1, \dots, k_{\dr-1})$ if we root the tree in position
$i$, connect to check node $j$, and then connect further to $k_1, \dots,
k_{\dr-1}$. This particular realization of this branch happens with
probability $w^{-\dr}$ (given that we start in $i$) and the expected number
of branches starting in $i$ that have exactly the same ``type'' $(i,
k_1, \dots, k_{\dr-1})$ equals $M \dl w^{-\dr}$.  Consider a permutation of $(i,
k_1, \dots, k_{\dr-1})$ and keep $j$ fixed.  To be concrete, let's say we
consider the permutation $(k_3, i, k_2, \dots, k_1)$.  This situation
occurs if we root the tree in $k_3$, connect to check node $j$, and
then connect further to $i, k_2, \dots, k_1$. Again, this happens with
probability $w^{-\dr}$ and the expected number of such branches is $M \dl
w^{-\dr}$. It is crucial to observe that all permutations of $(i, k_1,
\dots, k_{\dr-1})$ occur with equal probability in these computation
trees and that all the involved integrals occur for computation graphs
that are rooted in a position in the range $[-L, L]$.  Therefore, the
``average'' contribution of the $(\dr-1)$ leaf nodes is just a fraction
$(\dr-1)/\dr$ of the total contribution, as claimed. Here, we have used
a particular notion of ``average.''  We have averaged not only over
various computation trees rooted at position $i$ but also over computation
trees rooted let's say in position $k_{l}$, $l=1, \dots \dr-1$. Indeed, we
have averaged over an equivalence class given by all permutations of $(i,
k_1, \dots, k_{\dr-1})$, with $j$, the position of the check node held
fixed. Since $i \in [-L+w-1, L-w+1]$, all these quantities are also
in the range $[-L, L]$, and so they are included in our consideration.

It remains to justify the ``average'' contributions that we get for
computation trees rooted in $i \in [-L, -L+w-2] \cup [L-w+2, L]$.
The notion of average is the same as we have used it above.
Even though we are talking about averages, for each computation
tree it is clear that the contribution is non-negative since all the
involved channel values $x_k(\alpha)$ are increasing functions in $\alpha$.
This proves that the average contribution is non-negative. Further,
the total uncertainty that we remove by each variable leaf node is at
most $1$.  This proves the upper bound.

We can now summarize. We have
\begin{align*}
\frac{\sum_{i=-L}^{L}
\int_{0}^{1} h_i(\alpha) d\epsilon_i(\alpha)}{2 L+1} & \leq 1-\frac{\dl}{\dr}+\frac{2(w\!-\!1)}{2 L+1} 
\frac{\dl(\dr-1)^2}{\dr}, \\
& \leq 1\!-\!\frac{\dl}{\dr} +\frac{w}{L} \dl \dr,\\
\frac{\sum_{i=-L}^{L}
\int_{0}^{1} h_i(\alpha) d\epsilon_i(\alpha)}{2 L+1} & \geq 1\!-\!\frac{\dl}{\dr}
\!-\!\frac{2(w\text{-}1)}{2 L\text{+}1}(1\text{+}\dl (\dr\text{-}1)\text{-}\frac{\dl}{\dr}) \\
&\geq 1\!-\!\frac{\dl}{\dr} -\frac{w}{L} \dl \dr.
\end{align*}

\item[(vi)] {\em Bound on $\epsilon^*$:} \\
Consider the EXIT function constructed according to
Definition~\ref{def:EXIT}.  Recall that the EXIT value at position $i \in [-L, L]$
is defined by
\begin{align}\label{equ:hgrelation}
h_i(\alpha) = (g(x_{i-w+1}(\alpha), \dots, x_{i+w-1}(\alpha)))^{\frac{\dl}{\dl-1}},
\end{align}
and the area under the EXIT curve is given by
\begin{align}\label{equ:integralnew}
A(\dl, \dr, w, L)=\int_{0}^{1}\frac1{2L+1}\sum_{i=-L}^{L} h_i(\alpha)d\epsilon_i(\alpha).
\end{align}  
As we have just seen this integral is close to the
design rate $R(\dl, \dr, \w, L)$, and from Lemma~\ref{lem:designrate}
we know that this design rate converges to $1-\dl/\dr$ for any fixed
$w$ when $L$ tends to infinity.

The basic idea of the proof is the following.  We will show that
$A(\dl, \dr, w, L)$ is also ``close'' to
$1-\frac{\dl}{\dr}+p^{\MAPsmall}(x(\epsilon^*))$, where
$p^{\MAPsmall}(\cdot)$ is the polynomial defined in
Lemma~\ref{lem:standardthresholds}. In other words, $x(\epsilon^*)$
must be ``almost'' a zero of $p^{\MAPsmall}(\cdot)$. But
$p^{\MAPsmall}(\cdot)$ has only a single positive root and this
root is at $\epsilon^{\MAPsmall}(\dl, \dr)$.

More precisely, we first find upper and lower bounds on $A(\dl,
\dr, w, L)$ by splitting the integral (\ref{equ:integralnew}) into 
four phases.  We will see that the main contribution to the area
comes from the first phase and that this contribution is close to
$1-\frac{\dl}{\dr} - p^{\MAPsmall}(x(\epsilon^*))$. For all other
phases we will show that the contribution can be bounded by a
function which does not depend on $(\epsilon^*, \x^*)$ and which
tends to $0$ if let $w$ and $L$ tend to infinity.

For $i=\{1, 2, 3, 4\}$, define $T_i$ as
\begin{align*}
T_i & = \int_{\frac{4-i}{4}}^{\frac{5-i}{4}}\frac2{2L+1}\sum_{i=-L+w-1}^{-w+1} h_i(\alpha)d\epsilon_i(\alpha).
\end{align*}
Further, let
\begin{align*}
T_5 & = \int_{0}^{1}\frac2{2L+1}\sum_{i=-L}^{-L+w-2} h_i(\alpha)d\epsilon_i(\alpha), \\
T_6 & = \int_{0}^{1} \frac1{2L+1}\sum_{i=-w+2}^{w-2} h_i(\alpha)d\epsilon_i(\alpha).
\end{align*}
Clearly, $A(\dl, \dr, w, L) = T_1+T_2+T_3+T_4+T_5+T_6$. 
We claim that for $w > \max\{2^4 \dl^2 \dr^2, 2^{16}\}$,
\begin{align*}
& T_1  = 1 \!-\! \frac{\dl}{\dr} \!-\! p^{\MAPsmall}(x_0^*), \\
- \dl \dr (x^*_0-x^*_{-L}) \leq & T_2 \leq \dr (x^*_0-x^*_{-L}), \\
-w^{\text{-}\frac18} \!- \! \frac{2\dr\dl^2}{w^{\frac78}(1 \!-\! 4w^{-\frac{1}8})^{\dr}\epsilon^{\BPsmall}(\dl,\dr)} 
\leq & T_3 \leq 4 \dl \dr w^{-\frac18}, \\
-\dl \dr x^*_{-L'+L} \leq & T_4 \leq \dr x^*_{-L'+L}, \\
-\frac{\dl w}{L} \leq & T_5 \leq \frac{w}{L}, \\
-\frac{\dl w}{L} \leq & T_6 \leq \frac{w}{L}. 
\end{align*}
If we assume these bounds for a moment, and simplify the expressions
slightly, we see that for $w > \max\{2^{16}, 2^4 \dl^2 \dr^2\}$,
\begin{align*}
 &|A(\dl, \dr, w, L) \!-\!1\!+\! \frac{\dl}{\dr}\!+\! p^{\MAPsmall}(x^*_0))|  \!\leq\! 4\dl\dr w^{-\frac18} \!+\! \frac{2w\dl}{L}  \\ 
& \!+\! \dl \dr (x^*_{-L'\!+\!L}\!+\!x^*_0\!-\!x^*_{-L}) \!+\!  \frac{2\dr\dl^2}{(1 \!-\! 4w^{-\frac{1}8})^{\dr}\epsilon^{\BPsmall}(\dl, \dr)}w^{-\frac78} .
\end{align*}
Now using the bound in part (v) on the area under the EXIT curve we get
\begin{align*}
\vert p^{\MAPsmall}(x^*_0) \vert & \leq c_1(\dl, \dr, w, L),
\end{align*}
where
\begin{align*}
&c_1(\dl, \dr, w, L) =  
4\dl \dr w^{-\frac18} + \frac{2w\dl}{L} + \frac{w\dl \dr}{L}\\
& \!+\! \dl \dr (x^*_{-L'\!+\!L}\!+\!x^*_0\!-\!x^*_{-L}) \!+\!  \frac{2\dr\dl^2}{(1 \!-\! 4w^{-\frac{1}8})^{\dr}\epsilon^{\BPsmall}(\dl, \dr)} w^{-\frac78}. 
\end{align*}
From this we can derive a bound on $\epsilon^*$ as follows.  
Using Taylor's expansion we get
$$
p^{\MAPsmall}(x^*_0) = p^{\MAPsmall}(\xstab(\epsilon^*)) + (x^*_0 - \xstab(\epsilon^*)) (p^{\MAPsmall}(\eta))',
$$
where $(p^{\MAPsmall}(x))'$ denotes the derivative w.r.t. $x$ and $\eta \in (x^*_0, \xstab(\epsilon^*))$.
From Lemma~\ref{lem:standardthresholds} one can verify that  $\vert (p^{\MAPsmall}(x))'\vert \leq 2\dl\dr$ for all $x\in [0,1]$. 
Thus, 
$$
\vert p^{\MAPsmall}(\xstab(\epsilon^*)) \vert  \leq 2\dl \dr \vert x^*_0 - \xstab(\epsilon^*)\vert + c_1(\dl, \dr, w, L). 
$$
Now using $p^{\MAPsmall}(\xstab(\epsilon^{\MAPsmall}))=0$ and the fundamental theorem of calculus we have
\begin{align*}
 p^{\MAPsmall}(\xstab(\epsilon^*)) = 
 -\int^{\xstab(\epsilon^{\MAPsmall})}_{\xstab(\epsilon^*)}(p^{\MAPsmall}(x))' dx.
\end{align*}
Further, for a $(\dl, \dr)$-regular ensemble we have 
$$
(p^{\MAPsmall}(x))' = (1-(1-x)^{\dr-1})^{\dl}\epsilon'(x),
$$
where we recall that $\epsilon(x)=x/(1-(1-x)^{\dr-1})^{\dl-1}$.
 Next, from
Lemma~\ref{lem:maximum} we have that $\epsilon^*>\epsilon^{\BPsmall}$. Thus
$\xstab(\epsilon^*)>x^{\BPsmall}$. Also, $\epsilon^{\MAPsmall}> \epsilon^{\BPsmall}$. As a consequence,  
$(1-(1-x)^{\dr-1})^{\dl} \geq (1-(1-x^{\BPsmall})^{\dr-1})^{\dl}$ and
 $\epsilon'(x)\geq 0$ 
 for all $x$ in the interval of the  above integral. 

Combining everything we get
\begin{align*}
\vert  p^{\MAPsmall}&(\xstab(\epsilon^*)) \vert  \geq (1-(1-x^{\BPsmall})^{\dr-1})^{\dl}\Big\vert \int^{\xstab(\epsilon^{\MAPsmall})}_{\xstab(\epsilon^*)} \epsilon'(x) dx \Big\vert  \\
& = (1-(1-x^{\BPsmall})^{\dr-1})^{\dl} \vert \epsilon(\xstab(\epsilon^{\MAPsmall})) - \epsilon(\xstab(\epsilon^*)) \vert.
\end{align*}

Define
\begin{align*}
c(\dl,& \dr, w, L) =  
4\dl \dr w^{-\frac18} + \frac{2w\dl}{L} + \frac{w\dl \dr}{L}\\
& + \dl \dr (x^*_{-L'+L}+x^*_0-x^*_{-L})  + \frac{2\dr\dl^2}{(1 \!-\! 4w^{-\frac{1}8})^{\dr}} w^{-\frac78}. 
\end{align*}
Then, using $\epsilon(\xstab(\epsilon^*))=\epsilon^*$ and $\epsilon(\xstab(\epsilon^{\MAPsmall}))=\epsilon^{\MAPsmall}(\dl, \dr)$, the final result is  
\begin{align*}
&\vert \epsilon^{\MAPsmall}(\dl, \dr) - \epsilon^* \vert  
 \leq \frac{2\dl \dr \vert x^*_0 - \xstab(\epsilon^*)\vert + c(\dl, \dr, w, L)}{\epsilon^{\BPsmall}(\dl,\dr)(1-(1-x^{\BPsmall})^{\dr-1})^{\dl}} \\
& \stackrel{(a)}{=} \frac{2\dl \dr \vert x^*_0 - \xstab(\epsilon^*)\vert + c(\dl, \dr, w, L)}{x^{\BPsmall}(\dl,\dr)(1-(1-x^{\BPsmall})^{\dr-1})} \\
& \stackrel{(b)}{\leq} \frac{2\dl \dr \vert x^*_0 - \xstab(\epsilon^*)\vert + c(\dl, \dr, w, L)}{(x^{\BPsmall}(\dl,\dr))^2}\\
& \stackrel{\text{Lemma}~\ref{lem:lowerboundxBP}}{\leq} \frac{2\dl\dr\vert x_0^* -
\xstab(\epsilon^*)\vert+c(\dl, \dr, w,
L)}{(1-(\dl-1)^{-\frac1{\dr-2}})^2}.
\end{align*}
To obtain  $(a)$ we use that $x^{\BPsmall}$ is a FP of standard DE for channel parameter $\epsilon^{\BPsmall}$. 
Also, we use $(1-(1-x^{\BPsmall})^{\dr-1})\geq x^{\BPsmall}(\dl, \dr)$ to get $(b)$. 

It remains to verify the bounds on the six integrals.  Our strategy
is the following. For $i \in [-L+w-1, -w+1]$ we evaluate the
integrals directly in phases (i), (ii), and (iii), using the general
bounds on the quantities $\epsilon_i(\alpha)$.  For the boundary
points, i.e., for $i \in [-L, -L+w-2]$ and $i\in [-w+2, 0]$, as well
as for all the positions in phase (iv), we use the following crude but
handy bounds, valid for $0 \leq \alpha_1 \leq \alpha_2 \leq 1$:
\begin{align}
& \int_{\alpha_1}^{\alpha_2} h_i(\alpha) d\epsilon_i(\alpha) \leq
h_i(\alpha_2)\epsilon_i(\alpha_2)-h_i(\alpha_1)\epsilon_i(\alpha_1) \nonumber \\
& \leq x_i(\alpha_2) (g(x_{i-w+1}(\alpha_2), \dots, x_{i+w-1}(\alpha_2))^{\frac{1}{\dl-1}} \nonumber \\
& \leq x_i(\alpha_2) \leq 1, \label{equ:intupperbound} \\
& \int_{\alpha_1}^{\alpha_2} h_i(\alpha) d\epsilon_i(\alpha) \geq - \int_{\alpha_1}^{\alpha_2} \epsilon_i(\alpha) dh_i(\alpha) \nonumber \\
&\geq \!-\! \dl \big\{(h_i(\alpha_2))^{\frac{1}{\dl}} \!-\!(h_i(\alpha_1))^{\frac{1}{\dl}} \big\}\geq \!-\! \dl (h_i(\alpha_2))^{\frac{1}{\dl}} \geq -\dl. \label{equ:intlowerbound}
\end{align}
To prove (\ref{equ:intupperbound}) use integration by parts to write
\begin{align*}
\int_{\alpha_1}^{\alpha_2} \!\!\!h_i(\alpha) d\epsilon_i(\alpha) & = 
\int_{\alpha_1}^{\alpha_2} \!\!\!d(h_i(\alpha)\epsilon_i(\alpha)) - 
\int_{\alpha_1}^{\alpha_2} \!\!\!\epsilon_i(\alpha) dh_i(\alpha). 
\end{align*}
Now note that $\epsilon_i(\alpha) \geq 0$ and that $h_i(\alpha)$ is an
increasing function in $\alpha$ by construction. The second term on the
right hand side of the above equality is therefore negative and we get
an upper bound if we drop it. We get the further bounds by inserting the
explicit expressions for $h_i$ and $\epsilon_i$ and by noting that $x_i$ as well as $g$
are upper bounded by $1$.

To prove (\ref{equ:intlowerbound}) we also use integration by parts, 
but now we drop the first term. Since $h_i(\alpha)$ is an increasing
function in $\alpha$ and it is continuous, it is invertible. We can therefore write the
integral in the form $\int_{h_i(\alpha_1)}^{h_i(\alpha_2)} \epsilon_i(h)
dh$. Now note that $\epsilon_i(h) h = x_i(h) g^{\frac{1}{\dl-1}}(h)=
x_i(h) h^{\frac{1}{\dl}} \leq h^{\frac{1}{\dl}}$, where we used the
fact that $h=g^{\frac{\dl}{\dl-1}}$ (recall the definition of $g(...)$ from \eqref{equ:hgrelation}). This shows that $\epsilon_i(h)
\leq h^{\frac{1-\dl}{\dl}}$. We conclude that
\begin{align*}
& \int_{\alpha_1}^{\alpha_2} \!\!\epsilon_i(\alpha) dh_i(\alpha) \leq 
\int_{h_i(\alpha_1)}^{h_i(\alpha_2)} \!\! h^{\frac{1-\dl}{\dl}} dh \\
& =\dl \big\{ h_i(\alpha_2)^{\frac{1}{\dl}} -  h_i(\alpha_1)^{\frac{1}{\dl}} \big\}  \leq \dl h_i(\alpha_2)^{\frac{1}{\dl}} \leq \dl.
\end{align*}
The bounds on $T_4$, $T_5$ and $T_6$ are straightforward applications
of (\ref{equ:intlowerbound}) and (\ref{equ:intupperbound}).  E.g.,
to prove that $T_6 \leq \frac{w}{L}$, note that there are $2 w-3$
positions that are involved. For each position we know from
(\ref{equ:intupperbound}) that the integral is upper bounded by
$1$. The claim now follows since $\frac{2 w-3}{2L-1} \leq \frac{w}{L}$.
Using \eqref{equ:intlowerbound} leads to the lower bound. Exactly the same line of reasoning leads to 
 both the bounds for $T_5$.

For the upper bound on $T_4$ we use the second inequality in
\eqref{equ:intupperbound}.  We then bound $x_i(\alpha)\leq 1$ and
use $h_i(...)^{\frac1{\dl}} = g(...)^{\frac1{\dl-1}}$, cf. 
(\ref{equ:hgrelation}).  Next, we bound each term in the sum by the
maximum term. This maximum is $h_0(\frac14)^{\frac{1}{\dl}}$.  This
term can further be upper bounded by $1-(1-x^*_{-L'+L})^{\dr-1} \leq \dr
x^*_{-L'+L}$. Indeed, replace all the $x$ values in $h_0(\frac14)$ by their maximum, $x^*_{-L'+L}$. 
 The lower bound follows in a similar way using the penultimate inequality in \eqref{equ:intlowerbound}.

Let us continue with $T_1$.  Note that for $\alpha\in
[3/4,1]$ and $i\in [-L+w-1, -w+1]$, $\epsilon_i(\alpha)=
\frac{x_i(\alpha)}{(1-(1-x_i(\alpha))^{\dr-1})^{\dl-1}}$ and that
$h_i(\alpha)=(1-(1-x_i(\alpha))^{\dr-1})^{\dl}$. A direct calculation
shows that
\begin{align*}
T_1 = \int_{\frac34}^{1} h_i(\alpha) d\epsilon_i(\alpha) 
& = p^{\MAPsmall}(1)-p^{\MAPsmall}(x_i(3/4)) \\
& = 1 - \frac{\dl}{\dr} - p^{\MAPsmall}(x_0(3/4)) \\
& = 1 - \frac{\dl}{\dr} - p^{\MAPsmall}(x_0^*).
\end{align*}

Let us now compute bounds on $T_2$.  
Using (\ref{equ:intupperbound}) we get
\begin{align*}
T_2 & \leq \frac{2}{2 L+1} \sum_{i=-L+w-1}^{-w+1}  
(h_i(3/4)\epsilon_i(3/4)-h_i(1/2)\epsilon_i(1/2)) \\
& \leq \{x^*_0(1-(1-x^*_0)^{\dr-1})
- x^*_{-L}(1-(1-x^*_{-L})^{\dr-1})\} \\
& \leq \dr(x^*_{0}-x^*_{-L}).
\end{align*} 
To obtain the second inequality we use $\epsilon_i(\alpha)h_i(\alpha) = x_i(\alpha)(h_i(\alpha))^{\frac1{\dl}}$.
Using the second inequality of \eqref{equ:intlowerbound} we lower bound $T_2$ as follows. We have
\begin{align*}
T_2 & \geq -\frac{2\dl}{2 L+1} \sum_{i=-L+w-1}^{-w+1}  
(h_i(3/4)^{\frac1{\dl}} - h_i(1/2))^{\frac1{\dl}}) \\
& \geq -\dl \{(1-x^*_{-L})^{\dr-1} - (1-x^*_{0})^{\dr-1}\} \\
& \geq -\dl\dr(x^*_{0} - x^*_{-L}).
\end{align*}
To obtain the second inequality we use $h_i(3/4) 
 =  (1 - (1- x^*_0)^{\dr-1})^{\dl}$ and $h_i(1/2) \geq (1 - (1-x^*_{-L})^{\dr-1})^{\dl}$.

It remains to bound $T_3$. 
For $i \in [-L+w-1, -w+1]$, consider
\begin{align}\label{equ:totalderivative}
\int_{\frac12}^{\frac14} d(h_i(\alpha)  \epsilon_i(\alpha)) = \epsilon^* 
(h_i(\frac12)-h_i(\frac14)),
\end{align}
where we have made use of the fact that for $\alpha=\frac14$ and $\alpha=\frac12$, 
$\epsilon_i(\alpha)=\epsilon^*$.
To get an upper bound on $T_3$ write 
\begin{align*}
\int_{\frac14}^{\frac12} \epsilon_i(\alpha))dh_i(\alpha)\!\geq\! 
\epsilon^*\bigl(1\!-\!\frac4{w^{\frac18}}\bigr)^{(\dr\!-\!2)(\dl\!-\!1)} 
(h_i(\frac12)\!-\!h_i(\frac14)).
\end{align*}
Here we have used the lower bounds on $\epsilon_i(\alpha)$ in phase
(iii) from Theorem~\ref{the:propertiesEXIT} and the fact that $w >\max\{2^{16}, 2^4 \dl^2 \dr^2\}$.  Again using
integration by parts, and upper bounding both $\epsilon^*$ and
$(h_i(1/2)-h_i(1/4))$ by $1$, we conclude that
\begin{align*}
& \int_{\frac12}^{\frac14} h_i(\alpha) d \epsilon_i(\alpha) 
\leq 1-\Bigl(1 - \frac4{w^{1/8}}\Bigr)^{(\dr-2)(\dl-1)} \\
& \leq 4 \dr \dl w^{-1/8}.
\end{align*}
Note that the right-hand-side is independent of $i$ so that this
bound extends directly to the sum, i.e.,
\begin{align*}
T_3 & \leq 4 \dr \dl w^{-1/8}.
\end{align*}

For the lower bound we can proceed in a similar fashion.  

We first apply integration by parts. Again using \eqref{equ:totalderivative}, the first term corresponding to the total derivative can be written as
\begin{align*}
\frac2{2L+1}\sum_{-L+w-1}^{-w+1}\epsilon^*(h_i(\frac12) - h_i(\frac14)).
\end{align*}
We write the other term in the integration by parts as follows. 
For every section number $i\in [-L+w-1, -w+1]$, let $\beta_i$ correspond to the
smallest  number in $[\frac14, \frac12]$ such that $x_i(\beta_i) > \gamma$.  
Recall
the definition of $\gamma$ from part (iv) of Theorem~\ref{the:propertiesEXIT}. 
If for any section number $i$, $x_i(\frac12)> \gamma$, then $\beta_i$ is
well-defined and $x_i(\alpha) > \gamma$ for
 all $\alpha \in [\beta_i, \frac12]$. Indeed, this follows from the continuity and the
monotonicity of $x_i(\alpha)$ w.r.t. $\alpha$. On the
other hand, if $x_i(\frac12) \leq \gamma$, we set $\beta_i=\frac12$. 
Then we can write the second term as 
\begin{align*}
\frac{-2}{2L+1}  \sum_{-L+w-1}^{-w+1} \Big(\int_{\frac14}^{\beta_i} \epsilon_i(\alpha)dh_i(\alpha) + \int_{\beta_i}^{\frac12} \epsilon_i(\alpha)dh_i(\alpha) \Big).
\end{align*}
We now lower bound the two integrals as follows. 
For $\alpha \in [\beta_i, \frac12]$ we use the upper bound on $\epsilon_i(\alpha)$ valid in phase
(iii) from Theorem~\ref{the:propertiesEXIT}. This gives us the lower bound 
\begin{align*}
\frac{-2}{2L+1}  \sum_{-L+w-1}^{-w+1} \epsilon^*\Big(1 + \frac1{w^{1/8}}\Big) (h_i(\frac12) - h_i(\frac14)),
\end{align*}
where above we used the fact that $h_i(\beta_i) \geq h_i(\frac14)$. 
 
 For $\alpha \in [\frac14, \beta_i]$ we use the universal 
bound $-\dl h_i(\beta_i)^{\frac{1}{\dl}}$ (on $\int_{\frac14}^{\beta_i} \epsilon_i(\alpha)dh_i(\alpha)$) stated in 
(\ref{equ:intlowerbound}). 
Since $1/4 \leq \beta_i \leq 1/2$, using the lower bound on $\epsilon_i(\beta_i) \geq \epsilon^*(1 - 4w^{-1/8})^{(\dr-2)(\dl-1)}$
 (in phase (iii) of Theorem~\ref{the:propertiesEXIT}), we get 
\begin{align*}
-\dl h_i(\beta_i)^{\frac{1}{\dl}} &= -\dl
\Big(\frac{x_i(\beta_i)}{\epsilon_i(\beta_i)}\Big)^{\frac1{\dl-1}}  \\ 
& \geq   
 -\dl \Big(\frac{\gamma^{\frac1{\dl-1}} }{\epsilon^{\BPsmall}(\dl, \dr) (1 \!-\! 4w^{-\frac{1}8})^{\dr}}\Big).
\end{align*}
Above we use $\epsilon^*\geq \epsilon^{\BPsmall}(\dl, \dr)$, replace $(\dr-2)$ by $\dr$ and $(\epsilon^{\BPsmall}(\dl, \dr))^{\frac1{\dl-1}}$ by $\epsilon^{\BPsmall}(\dl, \dr)$. 
Putting everything together,
\begin{align*} 
T_3 & \geq 
1-\Bigl(1 + \frac1{w^{1/8}}\Bigr) 
  -\dl \Big(\frac{\gamma^{\frac1{\dl-1}} }{\epsilon^{\BPsmall} (1 \!-\! 4w^{-\frac{1}8})^{\dr}}\Big), \\
& = -w^{-\frac18} - \dl \Big(\frac{\gamma^{\frac1{\dl-1}} }{\epsilon^{\BPsmall} (1 \!-\! 4w^{-\frac{1}8})^{\dr}}\Big).
\end{align*}
Since $\gamma^{\frac1{\dl-1}}\leq \frac{2\dr\dl}{w^{\frac78}} $, the final result is 
\begin{align*} 
T_3 & = -w^{-\frac18}-  \frac{2\dr\dl^2}{w^{\frac78}(1 \!-\! 4w^{-\frac{1}8})^{\dr}\epsilon^{\BPsmall}(\dl,\dr)} .
\end{align*}

\end{itemize}


\end{appendices}

\bibliographystyle{IEEEtran} 
\bibliography{lth,lthpub}
\end{document}